\newcommandx{\pierre}[2][1=]{\todo[linecolor=blue,backgroundcolor=blue!25,bordercolor=blue,#1]{\tiny Pierre: #2}}
\newcommandx{\pedro}[2][1=]{\todo[linecolor=green,backgroundcolor=green!25,bordercolor=blue,#1]{\tiny Pedro: #2}}
\newcommandx{\revision}[1]{{\color{black}{#1}}}
\author{Pierre Fraigniaud}{IRIF, Universit\'e Paris Cit\'e and CNRS, France.}{pierre.fraigniaud@irif.fr}{}{Additional support for ANR projects QuData and DUCAT.}
\author{Fr\'ed\'eric Mazoit}{LaBRI, Universit\'e de Bordeaux, France}{frederic.mazoit@labri.fr}{}{}
\author{Pedro Montealegre}{Facultad de Ingenier\'ia y Ciencias, Universidad Adolfo Ib\'a\~nez, Santiago, Chile.}{p.montealegre@uai.cl}{}{This work was supported by Centro de Modelamiento Matem\'atico (CMM), FB210005, BASAL funds for centers of excellence from ANID-Chile, and ANID-FONDECYT 1230599. }
\author{Ivan Rapaport}{DIM-CMM (UMI 2807 CNRS), Universidad de Chile, Chile.}{rapaport@dim.uchile.cl}{}{This work was supported by Centro de Modelamiento Matem\'atico (CMM), FB210005, BASAL funds for centers of excellence from ANID-Chile, and  ANID-FONDECYT 1220142. }
\author{Ioan Todinca}{LIFO, Universit\'e d'Orl\'eans and INSA Centre-Val de Loire, France.}{ioan.todinca@univ-orleans.fr}{}{}
\authorrunning{P. Fraigniaud et al.} 
\keywords{CONGEST, Proof Labelling Schemes, clique-width, MSO} 
\title{Distributed Certification for Classes of Dense Graphs}
\newcommand{\MSO}{\mathrm{MSO}}
\newcommand{\id}{\mathsf{id}}
\newcommand{\cw}{\mathsf{cw}}
\newcommand{\nlcw}{\mathsf{nlcw}}
\newcommand{\NLC}{\mathrm{NLC}}
\newcommand{\cC}{\mathcal{C}}
\newcommand{\recolor}{\mathsf{recolor}}
\begin{document}

\maketitle


\begin{abstract}
A \emph{proof-labeling scheme} (PLS) for a  boolean predicate~$\Pi$ on labeled graphs is a mechanism used for certifying the legality with respect to~$\Pi$ of global network states   in a distributed manner. In a PLS, a \emph{certificate} is assigned to each processing node of the network, and the nodes are in charge of checking that the collection of certificates forms a global proof that the system is in a correct state, by exchanging the certificates once, between neighbors only. The main measure of complexity is the \emph{size} of the certificates. Many PLSs have been designed for certifying specific predicates, including cycle-freeness, minimum-weight spanning tree, planarity, etc. 

In 2021, a breakthrough has been obtained, as a ``meta-theorem'' stating that a large set of properties have compact PLSs in a large class of networks. Namely, for every  $\MSO_2$ property~$\Pi$ on labeled graphs, there exists a PLS for~$\Pi$ with $O(\log n)$-bit certificates for all graphs of bounded \emph{tree-depth}. This result has been extended to the larger class of  graphs with bounded \emph{tree-width}, using certificates on $O(\log^2 n)$ bits. 

We extend this result even further, to the larger class of graphs with bounded \emph{clique-width}, which, as opposed to the other two aforementioned classes, includes dense graphs. We show that, for every $\MSO_1$ property~$\Pi$  on labeled graphs, there exists a PLS for $\Pi$ with $O(\log^2 n)$-bit certificates for all graphs of bounded clique-width. As a consequence, certifying families of graphs such as distance-hereditary graphs and (induced) $P_4$-free graphs (a.k.a., cographs) can be done using a PLS with $O(\log^2 n)$-bit  certificates, merely because each of these two classes can be specified in $\MSO_1$. In fact, we show that certifying $P_4$-free graphs can be done with certificates on $O(\log n)$ bits only. This is in contrast to the class of $C_4$-free graphs (which \revision{does} not have bounded clique-width) which requires $\tilde{\Omega}(\sqrt{n})$-bit certificates.
\end{abstract}


\section{Introduction}

Checking whether a distributed system is in a legal global state with respect to some boolean predicate occurs in several domains  of distributed computing, including the following.
\begin{itemize}
\item Fault-tolerance: the \revision{occurrence} of faults may turn the system into an illegal state that needs to be detected for allowing the system to return to a legal state.
\item The use of subroutines as black boxes: some of these subroutines may contain bugs, and produce incorrect outputs that need to be checked before use in the protocol calling the subroutines. 
\item Algorithm design for specific classes of systems: an algorithm dedicated to some specific class of networks (e.g., algorithms for trees, or for planar networks) may cause deadlocks or live-locks whenever running on a network outside the class. The membership to the class needs to be checked before running the algorithm. 
 \end{itemize}
In all three  cases above, the checking procedure may be impossible to implement without significant communication overhead. A typical example is bipartiteness, whether it be applied to the network itself, or to an overlay network produced by some subroutine. 

\subsection{Proof-Labeling Schemes}

\emph{Proof-labeling scheme} (PLS)~\cite{KormanKP10} is a popular mechanisms enabling to certify correctness w.r.t.~predicates involving some global property, like bipartiteness. A PLS involves a \emph{prover} and a \emph{verifier}. The prover has access to the global state of the network (including its structure), and has unlimited computational power. It assigns \emph{certificates} to the nodes. The verifier is a distributed algorithm running at each node, performing in a single round, which consists for each node to send its certificate to its neighbors. Upon reception of the certificates of its neighbors, every node performs some local computation and outputs \textit{accept} or \textit{reject}. To be correct, a PLS for a predicate~$\Pi$ must satisfy:
 \[
 \begin{array}{c}
 \mbox{the global state of the network satisfies $\Pi$}\\
 \Updownarrow \\
 \mbox{the prover can assign certificates such that the verifier accepts at all nodes.} 
 \end{array}
 \]
 For instance, for bipartiteness, the prover assigns a color~0 or~1 to the nodes, and each node verifies that its color is 0 or~1, and is different from the color of each of its neighbors. If the network is bipartite then the prover can properly 2-color the nodes such that they all accept, and if the network is not bipartite then, for every 2-coloring of the nodes, some of them reject as this coloring cannot be proper. 
 
 The PLS certification mechanism has several desirable features. First, if the certificates are small then the verification is performed efficiently, in a single round consisting merely of an exchange of a small message between every pair of adjacent nodes. As a consequence, verification can be performed regularly and frequently without causing significant communication overhead. Second, if the network state does not satisfy the predicate, then at least one node rejects. Such a node can raise an alarm or launch a recovery procedure for allowing the system to return to a correct state,  or can stop a program running in an environment for which it was not designed. Third, the prover is an abstraction, for the certificates can be computed offline, either by the nodes themselves in a distributed manner, or by the system provider in a centralized manner. For instance, a protocol constructing an overlay network that is supposed to be bipartite, may properly 2-color the overlay for certifying its bipartiteness.  It follows from their features that PLSs are versatile  certification mechanisms that are also quite efficient whenever the certificates for legal instances are small. 
 

Many PLSs have been designed for certifying specific predicates on labeled graphs, including cycle-freeness~\cite{KormanKP10}, minimum-weight spanning tree (MST)~\cite{KormanK07}, planarity~\cite{FeuilloleyFMRRT21}, bounded genus~\cite{EsperetL22}, $H$-minor-freeness for small~$H$~\cite{BousquetFP21}, etc. In 2021, a breakthrough has been obtained, as a ``meta-theorem'' stating that a large set of properties have compact PLSs in a large class of networks (see~\cite{FeuilloleyBP22}). Namely, for every  $\MSO_2$ property\footnote{Monadic second-order logic (MSO) is the fragment of second-order logic where the second-order quantification is limited to quantification over sets. $\MSO_1$ refers to MSO on graphs with quantification over sets of vertices, whereas $\MSO_2$ refers to MSO on graphs with quantification over sets of vertices and sets of edges. }~$\Pi$, there exists a PLS for~$\Pi$ with $O(\log n)$-bit certificates for all graphs of bounded \emph{tree-depth}, where the tree-depth of a graph intuitively measures how far it is from being a star. This result has been extended to the larger class of  graphs with bounded \emph{tree-width}  (see~\cite{FraigniaudMRT22}), using certificates on $O(\log^2 n)$ bits, where the tree-width of a graph intuitively measures how far it is from being a tree. Although the class of all graphs with bounded tree-width includes many common graph families such as trees, series-parallel graphs, outerplanar graphs, etc., it does not contain families of \emph{dense} graphs.
 In this paper, we focus on the families of graphs with bounded \emph{clique-width}, which include families of dense graphs. 
 
 \subsection{Clique-Width}
 
Intuitively, the definition of clique-width is based on a ``programming language'' for constructing graphs, using only the following four instructions (see \cite{CourcelleMR00} for more details): 

 \begin{itemize}
 \setlength\itemsep{0em}
 \item Creation of a new vertex $v$ with some color $i$, denoted by $\mathsf{color}(v,i)$; 
 \item Disjoint union of two colored graphs $G$ and $H$, denoted by $G \parallel H$;
\item Joining by an edge every vertex colored $i$ to every vertex colored $j\neq i$, denoted by $i \Join j$;
\item Recolor $i$ into color $j$, denoted by $\mathsf{recolor}(i,j)$.
 \end{itemize}
 
For instance, the $n$-node clique can be constructed by creating a first node with color blue, and then repeating $n-1$ times the following: (1)~the creation of a new node, with color red, (2)~joining red and blue, and (3)~recoloring red into blue. Therefore, cliques can be constructed by using two colors only.  Similarly, trees can be constructed with three colors only. This can be proved by induction. The induction statement is that, for every tree~$T$, every vertex~$r$ of~$T$, and every two colors $c_1,c_2\in\{\mbox{blue, red, green}\}$, $T$~can be constructed with colors blue, red, and green such that $r$ is eventually colored~$c_1$, and every other vertex is colored~$c_2$. The statement is trivial for the single-node tree. Let $T$ be a tree with at least two nodes, let $r$ be one of its vertices, and let $c_1,c_2$ be two colors. Given an arbitrary neighbor~$s$ of~$r$, removing the edge $\{r,s\}$ results in two trees~$T_r$ and~$T_s$. By induction, construct~$T_r$ and $T_s$ separately so that~$r$ (resp.,~$s$) is eventually colored~$c_1$ (resp.,~$c_2$) and all the other nodes of $T_r$ and $T_s$ are colored $c_3\notin\{c_1,c_2\}$. Then form the graph $T_r \parallel T_s$, and, in this graph, join colors $c_1$ and $c_2$, and recolor $c_3$ into $c_2$. 

The clique-width of a graph~$G$, denoted by $\cw(G)$, is the smallest~$k\geq 0$ such that $G$ can be constructed by using $k$ colors. For instance, $\cw(K_n)\leq 2$ for every $n\geq 1$, and,  for every tree~$T$, $\cw(T)\leq 3$. A family of graphs has bounded clique-width if there exists $k\geq 0$ such that, for every graph~$G$ in the family, $\cw(G)\leq k$. Any graph family with bounded tree-depth or bounded tree-width has bounded clique-width~\cite{CorneilR05,NesetrilM12}. However, there are important graph families with unbounded tree-width (and therefore unbounded tree-depth) that have bounded clique-width. Typical examples \revision{(see \cite{CourcelleO00})} are cliques (i.e., complete graphs), $P_4$-free graphs (i.e., graphs excluding a \revision{path on four vertices} as an induced subgraph, a.k.a., cographs), and distance hereditary graphs (the distances in any connected induced subgraph are the same as they are in the original graph).

Many  NP-hard optimization problems can be solved efficiently by dynamic programming in the family of graphs with bounded clique-width. In fact, every $\MSO_1$ property on graphs has a linear-time algorithm for graphs of bounded clique-width~\cite{CourcelleMR00}. In this paper we show a similar form of ``meta-theorem'', regarding the size of certificates of PLS for monadic second-order properties of graphs with bounded clique-width. 

\subsection{Our Results}
 
Our main result is the following. Recall that a labeled graph is a pair $(G,\ell)$, where $G$ is a graph, and $\ell:V(G)\to\{0,1\}^\star$ is a function assigning a label to every node in~$G$. 
 
\begin{theorem}\label{theo:main}
Let $k$ be a non-negative integer, and let  $\Pi$ be an $\MSO_1$ property on node-labeled graphs with constant-size labels.
There exists a PLS certifying $\Pi$ for labeled graphs with clique-width at most~$k$, using $O(\log^2 n)$-bit certificates on $n$-node graphs. 
\end{theorem}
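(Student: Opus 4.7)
The plan is to distribute both a parse tree of a $k$-expression for $G$ and the Courcelle--Makowsky--Rotics (CMR) dynamic-programming information across the vertices, so that the $\MSO_1$ property can be checked locally. I would first have the prover produce a \emph{balanced} $k'$-expression for $G$, with $k' = f(k)$, whose parse tree $T$ has depth $O(\log n)$; this is possible by standard rebalancing results on clique-width expressions, at the cost of only a constant blow-up in the number of colors. The $n$ leaves of $T$ correspond bijectively to the vertices of $G$, and the $O(n)$ internal nodes are assigned bijectively to vertices as ``hosts'', so each vertex hosts $O(1)$ internal nodes.

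Next, I would design the certificate of a vertex $v$ to contain: (a)~the path from $v$'s leaf $\ell_v$ up to the root of $T$, i.e., for each of the $O(\log n)$ ancestors its identifier ($O(\log n)$ bits), the operation at it, and $v$'s color just after that operation, totaling $O(\log^2 n)$ bits; (b)~the CMR type computed at every ancestor of $\ell_v$, augmented by per-color counts of the number of vertices in the corresponding subtree; and (c)~for each internal parse-tree node hosted by $v$, its operation, its children's identifiers, and its type. The CMR type itself has constant size depending only on~$\Pi$ and~$k$.

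The verifier then performs several local checks. For every $G$-edge $\{u,v\}$, after certificate exchange, $v$ checks that some common ancestor $t$ of $\ell_v$ and $\ell_u$ is a join $i \bowtie j$ at which $v$ has color $i$ and $u$ has color $j$; the two vertices also cross-check their information about their shared ancestors in $T$ (identifiers, operations, types). Each host vertex checks that the type it assigns to a hosted internal node is obtained from the claimed children's types and the operation via the CMR recurrence; since each internal node lies on the ancestor path of some leaf, these children-types are always present in the certificate of a suitable vertex, which can be cross-checked through an adjacent pair. Color evolution along each ancestor path is checked against recoloring operations. Finally, $v$ accepts only if the type at the root indicates that $\Pi$ holds.

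The main obstacle is certifying the absence of \emph{spurious} edges, i.e., pairs $\{u,v\}$ that the $k$-expression would connect but that are not $G$-edges: since non-adjacent vertices exchange no messages, they cannot directly compare certificates. I would resolve this using the per-color counts embedded in the CMR type. From these counts along its ancestor path, each vertex $v$ predicts, for every ancestor join $t$ of $\ell_v$, exactly how many $G$-neighbors of $v$ must have $\ell$ in the opposite subtree of $t$ with the matching color. Each $v$ then groups its actual neighbors by the join responsible for their edge (identified by comparing color profiles during the exchange) and checks that each group has the predicted cardinality. Combined with the edge-generation check, this forces the graph built by the $k$-expression to coincide with~$G$, completing the soundness argument; completeness is immediate by letting the prover supply the genuine balanced expression together with the CMR types it induces.
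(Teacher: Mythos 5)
Your high-level plan — a balanced decomposition of depth $O(\log n)$, per-vertex ancestor paths carrying CMR types and per-color counts, detection of spurious edges via cardinality checks — matches the skeleton of the paper's scheme, and the spurious-edge idea is essentially what the paper does with its $\textsf{color}$ sequences. However, there is a genuine gap in the consistency-checking mechanism, and it is precisely the point where the paper has to do real work.

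Your proposal verifies consistency of the information about a shared ancestor $t$ only between \emph{adjacent} pairs $\{u,v\}$ whose leaves both lie under $t$, and the CMR recurrence at $t$ is supposed to be checkable either by such a pair or by the ``host'' vertex of $t$. Neither of these works when the induced subgraph $G[t]$ corresponding to the subtree at $t$ is \emph{disconnected}: if $G[t]$ splits into components $C_1, C_2$, there is no $G$-edge between $C_1$ and $C_2$, so a cheating prover can give the vertices of $C_1$ and of $C_2$ two different, internally consistent stories about $t$ (different operation, different type, different color counts), and no pairwise check will catch it. The host vertex of $t$ has the same problem: it needs the types of $t$'s children, but it is not guaranteed to be adjacent in $G$ to any vertex hosting or knowing those children's types, and you never say how that information reaches it through $G$-edges. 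This is not a corner case: $\parallel$ (disjoint union) nodes with disconnected subgraphs occur in essentially every clique-width expression for a graph with more than one vertex.

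The paper's proof spends most of its machinery on exactly this. It replaces the raw decomposition by an $\NLC_+$-decomposition (Definition~\ref{def:nlc+}) in which every $\Join$-node's subtree induces a \emph{connected} subgraph, $\parallel$-nodes may have arbitrary arity but must have $\Join$ parents, and Lemma~\ref{lem:ldc} shows this normalization keeps the depth $O(\log n)$ and the width bounded. Connectivity at $\Join$-nodes allows the auxiliary messages to certify a spanning tree of $G[x]$ on which all vertices of $G[x]$ cross-check their view of $x$. For a $\parallel$-node $x$ (where $G[x]$ is disconnected), the service messages certify a Steiner tree inside the connected parent subgraph $G[p(x)]$, through which the homomorphism classes and color counts of the children of $x$ are aggregated and reconciled. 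Without some analogue of this — a structural constraint on the decomposition plus an explicit tree-based broadcast mechanism — your soundness argument does not go through, because the prover can present two incompatible decompositions to the two sides of a disconnected $G[t]$ and no vertex will ever see both.
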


The same way several NP-hard problems become solvable in polynomial time in graphs of bounded clique-width, Theorem~\ref{theo:main} implies that several predicates for which every PLS has  certificates of polynomial size in arbitrary graphs have a PLS with certificates of polylogarithmic  size on graphs with bounded clique-width. This is for instance the case of non-3-colorability (which is a $\MSO_1$ predicate), for which every PLS has certificates of size $\tilde{\Omega}(n^2)$ bits in arbitrary graphs~\cite{GoosS16}. Theorem~\ref{theo:main} implies that non-3-colorability has a PLS with certificates on $O(\log^2 n)$ bits in graphs with bounded clique-width, and therefore in \revision{graphs of bounded tree-width}, cographs, distance-hereditary graphs, etc. \revision{This of course is extended to non-k-colorability, as well as other problems definable in $\MSO_1$ such as detecting whether the input graph does not contain a fixed subgraph \(H\) as a subgraph, induced subgraph, minor, etc.}

 In fact, Theorem~\ref{theo:main}  can be extended to properties including certifying solutions to maximization or minimization problems whose admissible solutions are defined by $\MSO_1$ properties. \revision{For instance maximum independent set, minimum vertex cover, minimum dominating set, etc}.

In the proof of Theorem \ref{theo:main}, we provide a  PLS that constructs a particular decomposition using at most \(k\cdot 2^{k-1}\) colors
(the clique-width of the decomposition). It is through that decomposition that the PLS certifies that the input graph satisfies \(\Pi\). 

An  application of Theorem~\ref{theo:main} is the certification of certain families of graphs. That is, given a graph family $\mathcal{F}$, designing a PLS for certifying the  membership to $\mathcal{F}$. Interestingly, there are some graph classes $\mathcal{F}$ that are expressible in \(\MSO_1\) and, at the same time, have clique-width at most~\(k\). Theorem~\ref{theo:main} provides  a PLS for certifying the membership to $\mathcal{F}$ in such cases. Indeed, the PLS first tries to build a decomposition of clique-width at most 
$k \cdot 2^{k-1} $.
If there is no such decomposition, then the input graph does not belong to 
$\mathcal{F}$. 
Otherwise, the PLS uses the decomposition to check the \(\MSO_1\) property that defines  $\mathcal{F}$.

\begin{corollary}
Let $k$ be a non-negative integer, and let  $\mathcal{F}$ be graph family expressible in $\MSO_1$ such that all graphs of the family have clique-width at most~$k$. Membership to $\mathcal{F}$ can be certified with a PLS using  $O(\log^2 n)$-bit certificates in $n$-node graphs. 
\end{corollary}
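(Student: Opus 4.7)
The approach is to combine the clique-width decomposition-building mechanism of Theorem~\ref{theo:main} with its \(\MSO_1\)-verification mechanism, as already outlined informally in the paragraph preceding the statement. The PLS constructed in Theorem~\ref{theo:main} naturally factors into two logical layers: a layer whose certificate encodes a clique-width expression of the input graph \(G\) using at most \(k\cdot 2^{k-1}\) colors, and a layer that uses that expression to certify a fixed \(\MSO_1\) sentence. For the corollary, the plan is to have the prover assign both layers simultaneously, taking the sentence of the second layer to be the \(\MSO_1\) formula \(\Pi_\mathcal{F}\) that defines \(\mathcal{F}\).

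For soundness, assume \(G\in\mathcal{F}\). By hypothesis, \(\cw(G)\le k\), so the prover can produce a valid decomposition certificate with at most \(k\cdot 2^{k-1}\) colors, and since \(G\models\Pi_\mathcal{F}\), the second layer also accepts at every node. For completeness, suppose every node accepts. The decomposition layer then witnesses a valid clique-width expression of \(G\) with at most \(k\cdot 2^{k-1}\) colors (any graph not admitting such an expression must trigger at least one rejection, by Theorem~\ref{theo:main}), and on top of this correct decomposition the \(\MSO_1\) layer can only accept if \(G\models\Pi_\mathcal{F}\), i.e., if \(G\in\mathcal{F}\). The two certificate sizes add, and both are \(O(\log^2 n)\) bits per node, so the overall PLS uses \(O(\log^2 n)\)-bit certificates.

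The one subtlety to check is that graphs outside \(\mathcal{F}\) whose clique-width happens to be at most \(k\cdot 2^{k-1}\) are still correctly rejected. Such graphs are not filtered out by the decomposition layer, which only upper-bounds clique-width; they must be caught by the \(\MSO_1\) layer, which by Theorem~\ref{theo:main} faithfully verifies \(\Pi_\mathcal{F}\) relative to the decomposition built by the first layer. Beyond this bookkeeping, no genuinely new obstacle arises: the corollary is essentially a packaging of Theorem~\ref{theo:main} with the observation that membership in an \(\MSO_1\)-definable class is itself expressible as an \(\MSO_1\) property evaluated on the graph.
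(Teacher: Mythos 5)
Your proposal takes the same approach as the paper: use the decomposition-building layer of Theorem~\ref{theo:main} to (implicitly) test whether the graph has clique-width at most~$k\cdot 2^{k-1}$, and if so, use the $\MSO_1$-verification layer to certify the formula~$\Pi_\mathcal{F}$ defining~$\mathcal{F}$. The argument is substantively correct, and your ``one subtlety'' paragraph correctly identifies the only point that needs care, namely graphs outside~$\mathcal{F}$ that nonetheless admit a bounded-width decomposition, which are indeed caught by the $\MSO_1$ layer as in the proof of Theorem~\ref{theo:main}.

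One small but real slip: you have swapped \emph{soundness} and \emph{completeness}. In the paper's terminology (and the standard one), completeness is the statement ``if $G\in\mathcal{F}$ then some certificate assignment is accepted everywhere'' and soundness is ``if $G\notin\mathcal{F}$ then every certificate assignment is rejected somewhere.'' Your paragraph beginning ``For soundness, assume $G\in\mathcal{F}$\ldots'' is the completeness argument, and your paragraph beginning ``For completeness, suppose every node accepts\ldots'' is the soundness argument. The content of both paragraphs is correct; only the labels should be exchanged.
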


For instance, for every $k\geq 0$, the class of graphs with tree-width at most~$k$  can be certified with a PLS using  $O(\log^2 n)$-bit certificates. Indeed, ``tree-width at most~$k$'' is expressible in $\MSO_1$, and the class of graphs with tree-width at most~$k$  forms a family with clique-width at most~$3\cdot 2^{k-1}+1$~\cite{CorneilR05}.
Another interesting application is the certification of $P_4$-free graphs. Indeed, ``excluding $P_4$ as induced subgraph'' is expressible in $\MSO_1$, and $P_4$-free graphs form a family with clique-width at most~2~\cite{CourcelleO00}. It follows that $P_4$-free graphs can be certified with a PLS using  $O(\log^2 n)$-bit certificates. This is in contrast to the class of $C_4$-free graphs (\revision{i.e. graphs not containing a cycle on four vertices,} whether it be as induced subgraph or merely subgraph), which requires certificates on~$\tilde{\Omega}(\sqrt{n})$ bits~\cite{DruckerKO13}. In fact, in the case of cographs, the techniques in the proof of Theorem~\ref{theo:main} can be adapted so that to save one log-factor, as stated below. 

\begin{theorem}\label{theo:cographs}
The class of (induced) $P_4$-free graphs can be certified with a PLS using  $O(\log n)$-bit certificates  in $n$-node graphs. 
\end{theorem}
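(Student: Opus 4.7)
My plan is to adapt the construction underlying Theorem~\ref{theo:main} to exploit the special structure of cographs, which have clique-width at most~$2$. Every cograph $G$ has a canonical \emph{cotree} $T_G$: a rooted tree with leaves $V(G)$, internal nodes labeled S (series) or P (parallel) strictly alternating along every root-to-leaf path, such that two vertices $u,v$ are adjacent in $G$ if and only if their lowest common ancestor in $T_G$ is an S-node. This tree has at most $2n-1$ nodes, and it exists precisely when $G$ is $P_4$-free; otherwise no assignment of certificates will pass all checks.

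I would have the prover compute $T_G$ together with a DFS preorder of its nodes, and assign to each vertex $v$ a certificate of $O(\log n)$ bits. The certificate contains a unique $\id(v)\in[n]$, the DFS-preorder index and the S/P label of $v$'s parent in $T_G$, a constant number of control bits (the label of the root and the parity of the depth of $v$), and the auxiliary data required for a standard spanning-tree certification of $T_G$ in the style of~\cite{KormanKP10}. The saving over Theorem~\ref{theo:main} comes from the fact that clique-width~$2$ requires only two colors and the S/P labels strictly alternate, so the per-level information that contributed the second $\log n$ factor in the general bounded-$\cw$ proof collapses to $O(1)$ bits here.

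The verifier at each vertex $v$ performs three kinds of checks: (i) structural checks that the certificates globally describe a valid rooted tree with leaves $V(G)$ and strictly alternating S/P labels; (ii) an edge check---for each neighbor $u \in N_G(v)$, the LCA of $u$ and $v$ in $T_G$, readable from the two certificates via the DFS-preorder data, is S-labeled; and (iii) a degree check---the number of leaves whose LCA with $v$ in $T_G$ is S-labeled, a quantity computable from the cotree ancestors encoded in $v$'s certificate, equals $\deg_G(v)$.

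The main obstacle is that non-edges of $G$ are not directly observable by the verifier, so \emph{a priori} the prover could claim extra edges compatible with the cotree. The degree check (iii) closes this gap: combined with (ii), it forces the cotree-predicted neighborhood of $v$ to coincide with $N_G(v)$, thereby certifying every non-edge as well. Implementing (iii) locally within $O(\log n)$ bits requires each vertex to carry along its cotree path the cumulative number of leaves in every S-labeled sibling subtree; these partial sums are made verifiable by standard aggregation along a spanning tree of $G$, and the entire construction fits within $O(\log n)$ bits per vertex.
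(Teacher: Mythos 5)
Your proposal has real gaps, and they sit exactly where the paper has to work hardest.

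First, you say you will do ``a standard spanning-tree certification of $T_G$ in the style of~\cite{KormanKP10}'', but the cotree $T_G$ is \emph{not} a subgraph of $G$: only its leaves are vertices of $G$, its internal S/P nodes are abstract objects with no counterpart in the communication network, and two leaves that are siblings in $T_G$ need not even be adjacent in $G$. The standard tree-certification technique certifies a spanning tree \emph{of the communication graph}; it gives you no way to let the vertices of $G$ jointly verify that a collection of $O(\log n)$-bit labels encodes a well-formed external tree. This is precisely the obstacle the paper's proof is built around: it encodes, in the certificate of each vertex $v$, the entire root-to-leaf path of $v$ in an NLC-decomposition tree (made $O(\log n)$-deep via Lemma~\ref{lem:logdepth}), and then uses the depth-2 spanning tree of $G$ guaranteed by Lemma~\ref{lem:depth2tree} to funnel all these paths to a single vertex $r$, which reconstructs the whole decomposition and checks it centrally. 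Your proposal has no analogue of this gathering step, and I don't see how the scattered DFS-preorder data can be checked for global consistency using only single-round message exchange between $G$-neighbors.

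Second, the pieces you do describe don't stay within $O(\log n)$ bits as stated. A cotree can have depth $\Theta(n)$ (take a path-like alternating construction), so ``each vertex carries along its cotree path the cumulative number of leaves in every S-labeled sibling subtree'' is $\Theta(n\log n)$ bits in the worst case, not $O(\log n)$. You never invoke anything like Lemma~\ref{lem:logdepth} to control the depth. Relatedly, your claim that the label of $\mathrm{LCA}(u,v)$ is ``readable from the two certificates via the DFS-preorder data'' is unsupported: a preorder index of one's parent, plus a constant number of bits, does not let two leaves compute (or even check a claimed value of) the S/P type of their LCA in an arbitrary tree; $O(\log n)$-bit LCA labeling schemes exist but are far more elaborate than what you describe, and you would still have to certify that the assigned LCA-labels are consistent with an actual cotree, which circles back to the first gap. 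Finally, ``standard aggregation along a spanning tree of $G$'' does not obviously compute cotree-subtree leaf counts, because an arbitrary spanning tree of $G$ need not be aligned with $T_G$ in any useful way. The paper's use of a specifically chosen depth-2 spanning tree, together with having the root $r$ re-derive the whole graph $G^*$ from the gathered main messages and compare it against $G$ via degree counts, is what makes both the edge-check and the non-edge-check rigorous; your proposal needs a comparable mechanism.

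Your high-level intuition --- two colors, alternating operations, per-level information collapsing to $O(1)$ bits --- is the right reason the log factor can be saved, and in fact the paper's main messages exploit exactly this. But the hard parts of the proof are the depth reduction and the centralization of the consistency check, and those are missing.
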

 
Note that there is a good reason for the huge gap in terms of certificate-size between $P_4$-free graphs and $C_4$-free graphs. The point is that, for any graph pattern~$H$, the class of $H$-free graphs has bounded clique-width if and only if $H$ is an induced subgraph of $P_4$ \cite{dabrowski2016clique}. Therefore, $C_4$-free graphs (as well as triangle-free graphs) do not have bounded clique-width, as opposed to $P_4$-free graphs (and $P_3$-free graphs, which are merely cliques).

\subsection{Related Work}

Proof-Labeling Schemes (PLSs) have been introduced and thoroughly studied in~\cite{KormanKP10}. Variants have been been considered in~\cite{GoosS16} and~\cite{FraigniaudKP13}, which slightly differ from PLSs: the former allows each node to transfer no only its certificates, but also its state, and the latter restricts the power of the oracle, which is bounded to produce certificates independent of the IDs assigned to the nodes. All these forms of distributed certifications have been extended in various directions, including tradeoffs between the size of the certificates and the number of rounds of the verification protocol~\cite{FeuilloleyFHPP21}, 
PLSs with computationally restricted provers~\cite{EmekGK22}, randomized PLSs~\cite{FraigniaudPP19}, quantum PLSs~\cite{FraigniaudGNP21}, PLSs rejecting at more nodes whenever the global state is ``far'' from being correct~\cite{FeuilloleyF22}, PLSs using global certificates in addition to the local ones~\cite{FeuilloleyH18}, and several hierarchies of certification mechanisms, including games between a prover and a disprover~\cite{BalliuDFO18,FeuilloleyFH21}, interactive protocols~\cite{CrescenziFP19,KolOS18,NaorPY20}, and even recently zero-knowledge distributed certification~\cite{BickKO22},  and distributed quantum interactive protocols~\cite{GallMN22}. 

All the aforementioned distributed certification mechanisms have been used for certifying a wide variety of global system states, including MST~\cite{KormanK07}, routing tables~\cite{BalliuF19}, and a plethora of (approximated) solutions to optimization problems~\cite{Censor-HillelPP17,EmekG20}. A vast literature has also been dedicated to certifying membership to graph classes, including cycle-freeness~\cite{KormanKP10}, planarity~\cite{FeuilloleyFMRRT21}, bounded genus~\cite{EsperetL22}, absence of symmetry~\cite{GoosS16}, $H$-minor-freeness for small~$H$~\cite{BousquetFP21}, etc. In 2021, a breakthrough has been obtained, as a ``meta-theorem'' stating that, for every  $\MSO_2$ property~$\Pi$, there exists a PLS for~$\Pi$ with $O(\log n)$-bit certificates for all graphs of bounded \emph{tree-depth}~\cite{FeuilloleyBP22}. This result has been extended to the larger class of  graphs with bounded \emph{tree-width}, using certificates on $O(\log^2 n)$ bits~\cite{FraigniaudMRT22}. To our knowledge, this is the largest class of graphs, and the largest class of boolean predicates on graphs for which it is known that PLSs with polylogarithmic certificates exist. 

The class of $H$-free graphs (i.e., the absence of~$H$ as a subgraph), for a given fixed graph~$H$, has attracted lot of attention in the distributed setting, mostly in the \textsf{CONGEST} model. Two main approaches have been considered. One, called distributed property testing, aims at deciding between the case where the input graph is $H$-free, and the case where the input graph is ``far'' from being $H$-free (see, e.g.,~\cite{BrakerskiP11,Censor-HillelFS19,EvenFFGLMMOORT17,FraigniaudRST16}). In this setting, the objective is to design (randomized) algorithms performing in a constant number of rounds. Such algorithms have been designed for small graphs~$H$, but it is not known whether there is a distributed algorithm for testing $K_5$-freeness in a constant number of rounds.  The other approach aims at designing algorithms deciding $H$-freeness performing in a small number of rounds. For instance, it is known that deciding $C_4$-freeness can be done in $\tilde{O}(\sqrt{n})$ rounds, and this is optimal~\cite{DruckerKO13}. The $\tilde{\Omega}(\sqrt{n})$-round lower bounds for $C_4$-freeness also holds for deciding $C_{2k}$-freeness, for every $k\geq 4$. Nevertheless, the best known algorithm performs in essentially $\tilde{O}(n^{1-\Theta(1/k^2)})$ rounds~\cite{eden2022sublinear}, even if faster algorithms exists for $k=2,3,4,5$, running in $\tilde{O}(n^{1-\Theta(1/k)})$ rounds~\cite{censorhillel_et_al:LIPIcs:2020:13111,drucker2014power}. Deciding $P_k$-freeness (as subgraph) can be done efficiently for all $k\geq 0$~\cite{FraigniaudO19}. However, this is not the case of deciding the absence of an \emph{induced}~$P_k$, and no efficient algorithms are known apart for the trivial cases $k=1,2,3$. The first non-trivial case is deciding cographs, i.e.,   $P_4$-freeness (as induced subgraph). 

The terminology \emph{meta-theorem} is used in logic to refer to a statement about a formal system proven in a language used to describe another language. In the study of graph algorithms, Courcelle's theorem~\cite{Courcelle90} is often referred to as  a meta-theorem. It says that every graph property definable in the monadic second-order logic $\MSO_2$ of graphs can be decided in linear time on graphs of bounded treewidth. This theorem was extended to clique-width, but for a smaller set of graph properties. Specifically, every graph property definable in the monadic second-order logic $\MSO_1$ of graphs  can be decided in linear-time  on graphs of bounded clique-width~\cite{CourcelleMR00}. Note that the classes of languages in $\MSO_1$ and $\MSO_2$ include languages that are NP-hard to decide (e.g., 3-colorability and Hamiltonicity, respectively). 
\revision{We remind that \(\MSO_2\) is as an extension of  \(\MSO_1\) which also allows quantification on sets of edges -- see Footnote~1 for a short description, or~\cite{CourcelleE12} for full details. Some graph properties, e.g., Hamiltonicity, are expressible in   \(\MSO_2\) but not in  \(\MSO_1\), nevertheless  \(\MSO_1\)  captures a large set of properties, including many classical NP-hard problems as explained above. Eventually, we emphasize again that, when comparing the two most famous meta-theorems, (1) \emph{$\MSO_2$ properties are decidable in linear time on bounded treewidth graphs} vs. (2) \emph{$\MSO_1$ properties are decidable in linear time on bounded clique-width graphs}, the former concerns a larger class of properties, but the latter concerns larger classes of graphs. }


\section{Models}

In this section, we recall the main concepts used in this paper, including proof-labeling scheme, and cographs. 

\subsection{Proof-Labeling Schemes for MSO Properties}

For a fixed integral parameter $\lambda\geq 0$, we consider vertex-labeled graphs $(G,\ell)$, where $G=(V,E)$ is a connected simple $n$-node graph, and $\ell:V\to \{0,\dots,\lambda-1\}$. The label may indicate a solution to an optimization problem, e.g., a minimum dominating set ($\ell(v)=0$ or $1$ depending on whether $v$ is in the set or not), a $\lambda$-coloring, an independent set, etc. A labeling may also encode global overlay structures such as spanning trees or spanners, in bounded-degree graphs or in graphs provided by a distance-2 $k$-coloring, for $k=O(1)$. In the context of distributed computing in networks, nodes are assumed to be assigned distinct identifiers (ID) in $[1,n^c]$ for some $c\geq 1$, so that IDs can be stored on $O(\log n)$ bits. The identifier of a node~$v$ is denoted by~$\id(v)$. We denote by $N_G(v)$ the set of neighbors of node~$v$ in a graph~$G$, and we let $N_G[v]=N_G(v)\cup\{v\}$ be the closed neighborhood of~$v$. 

Given a boolean predicate~$\Pi$ on vertex-labeled graphs, a \emph{proof-labeling scheme} (PLS) for~$\Pi$ is a prover-verifier pair. The \emph{prover} is a non-trustable computationally unbounded oracle. Given a vertex-labeled graph $(G,\ell)$ with ID-assignment~$\id$, the prover assigns a \emph{certificate}~$c(v)$ to every node~$v$ of~$G$. The \emph{verifier} is a distributed algorithm running at every node~$v$ of~$G$. It performs a single round of communication consisting of sending $c(v)$ to all neighboring nodes $w\in N_G(v)$, and receiving the certificates of all neighbors. Given $\id(v)$, $\ell(v)$, and $\{c(w):w\in N_G[v]\}$, every node~$v$ outputs \emph{accept} or \emph{reject}. A PLS is correct if the following two conditions are satisfied: 
 \begin{itemize}
 \setlength\itemsep{0em}
 \item Completeness: If $(G,\ell)$ satisfies~$\Pi$ then the prover can assign certificates to the nodes such that the verifier accepts at all nodes;
\item Soundness:  If $(G,\ell)$ does not satisfy~$\Pi$ then, for every certificate assignment to the nodes by the prover, the verifier rejects in at least one node.
\end{itemize}
The main parameter measuring the quality of a PLS is the \emph{size} (i.e., number of bits) of the certificates assigned by the prover to each node of vertex-labeled graphs satisfying the predicate, and leading all nodes to accept. 

\subparagraph{MSO Predicates.} We focus on predicates expressible in $\MSO_1$. Recall that $\MSO_1$ is the fragment of monadic second-order (MSO) logic on (vertex-labeled) graphs that allows quantification on \revision{vertices} and \revision{on} sets of (labeled) vertices, and uses the adjacency predicate ($ \mathsf{adj}$). For instance non 3-colorability is in $\MSO_1$. Indeed, for every graph $G=(V,E)$, it can be expressed as: for all $A, B, C \subseteq V$, if $A\cup B\cup C=V$ and $A\cap B = A \cap C = B\cap C = \varnothing$ then 
\[
\exists (u,v) \in (A\times A) \cup (B\times B) \cup (C\times C) : (u\neq v) \land \mathsf{adj}(u,v). 
\]
We shall show that, although some $\MSO_1$ predicates, like non-3-colorability, require certificates on~$\tilde\Omega(n^2)$ bits in $n$-node graphs \revision{(see \cite{GoosS16})}, PLSs with certificates of polylogarithmic size can be designed for all $\MSO_1$ predicates in a rich class of graphs, namely all graphs with bounded clique-width. 

\subsection{Cographs and Cotrees}

We conclude this section by introducing a graph class that plays an important role in this paper. Recall that a graph is a cograph (see, e.g.,~\cite{BrandstadtLS99}) if it can be constructed by a sequence of parallel operations (disjoint union of two vertex-disjoint graphs) and join operations (connecting two vertex-disjoint graphs $G$ and $H$ by a complete bipartite graphs between $V(G)$ and $V(H)$). Therefore,  by definition, cographs have clique-width~2. In particular, cliques are cographs. 

It is known~\cite{BrandstadtLS99} that cographs capture precisely the class of induced $P_4$-free graphs. We shall show that, as opposed to $C_4$-free graphs, which require $\tilde\Omega(\sqrt{n})$-bit certificates to be certified by a PLS~\cite{DruckerKO13}\footnote{The lower bound in~\cite{DruckerKO13} is expressed for the \textsf{CONGEST} and \textsf{Broadcast Congested Clique} models, but it extends directly to PLSs since Set-Disjointness  has non-deterministic communication complexity~$\Omega(N)$ on $N$-bit inputs.}, $O(\log n)$-bit certificates are sufficient for certifying $P_4$-free graphs. This result is of  interest on its own, but proving this result will also play the role of a warmup before establishing our general result about graphs with bounded clique-width. Note that the class of $P_4$-free graphs (i.e., cographs) can be specified by an $\MSO_1$ formula. Roughly, the formula states that if there exists four vertices $v_1,v_2,v_3,v_4$ such that $\mathsf{adj}(v_i,v_{i+1})$ for $i=1,2,3$, then 
$
\mathsf{adj}(v_1,v_3) \lor \mathsf{adj}(v_1,v_4) \lor \mathsf{adj}(v_2,v_4).
$
$C_4$-freeness could be expressed in $\MSO_1$ as well. However, $P_4$-free graphs have clique-width~2 whereas $C_4$-free graphs have unbounded clique-width --- this is because there are \(2^{\Omega(n\sqrt{n})}\) different $C_4$-free graphs of size \(n\), but only \(2^{\mathcal{O}(n \log n)}\) \(n\)-vertex graphs of bounded clique-width.

Given a cograph~$G$, there is actually a canonic way of constructing~$G$ by a sequence of parallel and join operations~\cite{BrandstadtLS99}. As explained before, this construction can be described as a tree~$T$ whose leaves are the vertices of~$G$, and whose internal nodes are labeled $\parallel$ or $\Join$. This tree is called a \emph{cotree}, and will be used for our PLS. 


\section{Overview of our Techniques}
\label{se:hl}

The objective of this section is to provide the reader with a general idea of our proof-labeling scheme. Our construction bears some similarities with the approach used in~\cite{FraigniaudMRT22} for the certification of \(\MSO_2\) properties on graphs of bounded tree-width, with certificates of size \(O(\log^2 n)\) bits. However, extending this approach to a proof-labeling scheme for graphs with bounded clique-width requires to overcome several significant obstacles. We therefore start by summarizing the main tools used for 
the certification of \(\MSO_2\) properties on graphs of bounded tree-width (see Section~\ref{subsec:recalltw}), and then proceed with the description of the new tools required for extending the result to graphs of bounded clique-width, to the cost of reducing the class of certified properties from  \(\MSO_2\) to \(\MSO_1\) (see Sections~\ref{subsec:cw-et-nlcw}-\ref{subsec:summarylabelsize}). 

\subsection{Certifying $\MSO_2$ Properties in Graphs of Bounded Tree-Width}
\label{subsec:recalltw}

Recall that a tree-decomposition of a graph \(G\) is a tree \(T\) where each node \(x\) of \(T\), also called \emph{bag}, is a subset of \(V(G)\), satisfying the following three conditions: 
(1) for every vertex \(v\in V(G)\) there is a bag \(x \in V(T)\) that contains~\(v\), 
(2) for every edge \(\{u,v\} \in E(G)\), there is a bag \(x\) containing both its endpoints, and 
(3) for each vertex \(v \in V(G)\), the set of bags that contains \(v\) forms a (connected) subtree of \(T\). 
Let  \(\Pi\) be an \(\MSO_2\) property, and let $T$ be a tree-decomposition of the  graph~$G$. The proof-labeling scheme aims at providing each vertex with sufficient information for certifying the correctness of~\(T\), as well as the fact that \(G\) satisfies~\(\Pi\). To do so, the certificate of each vertex is divided into two parts, one called \emph{main messages}, and the other called \emph{auxiliary messages}.

\subparagraph{Main messages.} 

 The main message of a node \(v\) is a sequence \(\textrm{seq}_v\) representing a path of bags in \(T\) that connects a leaf with the root, such that \(v\) is contained in at least one bag of \(\textrm{seq}_v\).  For each bag \(x \in \textrm{seq}_v\), the main message includes, roughly: the set of vertices contained in \(x\), the identifier of a vertex \(\ell_x\) in \(x\), called the \emph{leader} of \(x\), and a data structure \(c_x\) used to verify the \(\MSO_2\) property $\Pi$ on~\(G\). The leader \(\ell_x\) of \(x\) is chosen arbitrarily among the vertices of \(x\) that are adjacent to a vertex \(u\) belonging to the parent bag \(p(x)\) of \(x\) in \(T\). The vertex \(u\) is said to be \emph{responsible} for \(x\) in \(p(x)\). Let us assume the following consistency condition: for every bag \(x\) of \(T\),  every vertex in \(x\) received the same information about all the bags from \(x\) to the root of~\(T\). Under the promise that the consistency condition holds, it is possible to show that the vertices can collectively verify that \(T\) is indeed a tree-decomposition of~$G$, and that \(G\) satisfies~\(\Pi\).  

\subparagraph{Auxiliary messages.} 

The role of the auxiliary messages is precisely to check the above consistency condition.  For each bag \(x\), let  \(\tau_x\) be a Steiner tree \revision{(i.e. a minimal tree connecting a set of vertices denoted \emph{terminals})}  in $G$ rooted at the leader~\(\ell_x\), with all the nodes of \(x\) as terminals. Every vertex in \(\tau_x\) receives an auxiliary message containing the certification of~$\tau_x$ \revision{(each vertex of \(\tau_x\) receives the identifier of a root, of its parent and the distance to the root)}, and a copy of the information about \(x\) given to the nodes in the bag~$x$, through their main messages. By using the auxiliary messages, the leader $\ell_x$ can verify whether the subgraph $G[x]$ of \(G\) induced by the union of the bags in \(T[x]\) satisfies~$\Pi$, where \revision{\(T[x]\) the subtree of \(T\) containing \(x\) and all its descendants}. Specifically, this verification is performed by simulating the dynamic programming algorithm in Courcelle's Theorem~\cite{Courcelle90} as in the version of Boire, Parker and Tovey \cite{BoPaTo92}. This uses a \revision{constant-size} data structure \(c_x\) stored in the auxiliary messages that ``encodes'' the predicate \(\Pi(G[x])\). Its correctness can be verified by a composition of the values~\(c_y\) for each child \(y\) of \(x\) in~$T$. The tree \(\tau_x\) is actually used to transfer the information about \(c_y\) from the node $\ell_y$ in \(x\) responsible for~\(y\), to the leader \(\ell_x\).  

\subparagraph{Certificate size.} 

If \(T\) is of depth \(d\), then the main messages are of size \(O(d \log n)\) bits.  Crucially, for every graph \(G\), there is a tree-decomposition \(T\) satisfying that, for every bag \(x\), there is a Steiner tree \(\tau_x\) completely contained in \(G[x]\). Such a decomposition is called \emph{coherent} in~\cite{FraigniaudMRT22} (Lemma 3). It follows that every node participates in a Steiner tree with at most \(d\) bags, which implies that the auxiliary messages can be encoded in \(O(d \log n)\) bits. Thanks to a construction by Bodlaender~\cite{bodlaender1989nc}, it is possible to choose a coherent tree-decomposition with depth \(d = O(\log n)\), up to increasing the sizes of the bags by a constant factor only. It follows that the certificates are of size $O(\log^2n)$ bits. 
 
\bigbreak 

Our construction also follows the general structure described above. However, each element of this construction has to be adapted in a highly non-trivial way. Indeed, the grammar of clique-with, and the related structure of NLC decomposition, differ in several significant ways from the grammar of tree-width. The rest of the section is dedicated to providing the reader with a rough idea of how this can be done. 

\subsection{Clique-Width and NLC-Width}
\label{subsec:cw-et-nlcw}
 
 First, instead of working with clique-width, it is actually more convenient  to work with the NLC-width, where NLC stands for \emph{node-label controlled}. Every graph of clique-width at most~$k$ has NLC-width at most~$k$, and every graph of NLC-width at most~$k$ has clique-width at most~$2k$~\cite{Johansson98}. As clique-width, NLC-width can be viewed as the following grammar for constructing graphs, bearing similarities with the grammar for clique-width:
 \begin{itemize}
 \setlength\itemsep{0em}
 \item Creation of a new vertex $v$ with color $i\in\mathbb{N}$, denoted by $\mathsf{newVertex}_i$;
\item Given a set $S$ of ordered pairs of colors, and an ordered pair $(G,H)$ of vertex-disjoint colored graphs, create a new graph as the union of $G$ and $H$, then join by an edge every vertex colored~$i$ of~$G$ to every vertex colored~$j$ of~$H$, for all $(i,j)\in S$; this operation is denoted by $G \Join_S H$;
\item Recolor the graph, denoted by $\mathsf{recolor}_R$ where $R:\mathbb{N}\to\mathbb{N}$ is any function.

\end{itemize}
If $k\geq 1$ colors are used, a recoloring function~$R$ is a function $R:[k]\to [k]$. When $R$ is used,  for every $i\in [k]$, vertices with color~$i$ are recolored~$R(i)\in [k]$ (all colors are treated simultaneously, in parallel). Note that the recoloring operation in the definition of clique-width is limited to functions~$R$ that preserve all colors but one. Note also that, for $S=\varnothing$, the operation  $G \Join_S H$ is merely the same as  $G \parallel H$ for clique-width. We therefore use $G \Join_\varnothing H$ or $G \parallel H$ indistinctly. The NLC-width of a graph~$G$ is the smallest number of colors such that $G$ can be constructed using the  operations above. It is denoted by $\nlcw(G)$. For instance, the $n$-node clique can be constructed by creating a first node $v_1$ with color~1, and then repeating, for all $i=1,\dots,n-1$, (1)~the creation of a new node~$v_{i+1}$, with color~1 as well, and (2)~applying $v_{i+1} \Join_{\{(1,1)\}}K_i$ to get the clique~$K_{i+1}$ on $i+1$ vertices. Therefore, cliques can be constructed by using one color only, i.e., $\nlcw(K_n)=1$ for every $n\geq 1$. 

\subparagraph{NLC-decomposition.} For every $k\geq 1$,  the construction of a graph~$G$ with $\nlcw(G)\leq k$ can be described by a binary tree~$T$, whose leaves are the (colored) vertices of~$G$. In~$T$, every internal node~$x$ has an identified  left child~$x'$ and an identified right child~$x''$, and is labeled by~$\parallel$ or~$\Join_S$ for some non-empty set $S\subseteq [k]\times [k]$. This label indicates the operation performed on the (left) graph $G'$ with vertex-set equal to the leaves of the subtree~$T_{x'}$ of~$T$ rooted at~$x'$, and the (right) graph $G''$ with vertex-set equal to the leaves of the subtree~$T_{x''}$ of~$T$ rooted at~$x''$. That is, node $x$ corresponds to the operation $G_{x'}\parallel G_{x''}$ or $G_{x'}\Join_S G_{x''}$, depending on the label of~$x$.  In addition to its label ($\parallel$~or~$\Join_S$ for some~$S\neq\varnothing$), a node may possibly also include a recoloring function~$R:[k]\to [k]$, which indicates a recoloring to be performed \emph{after} the join operation, see Figure~\ref{fig:NLCdec} for an example.

\begin{figure}[ht]
\begin{center}
\includegraphics[scale=0.2]{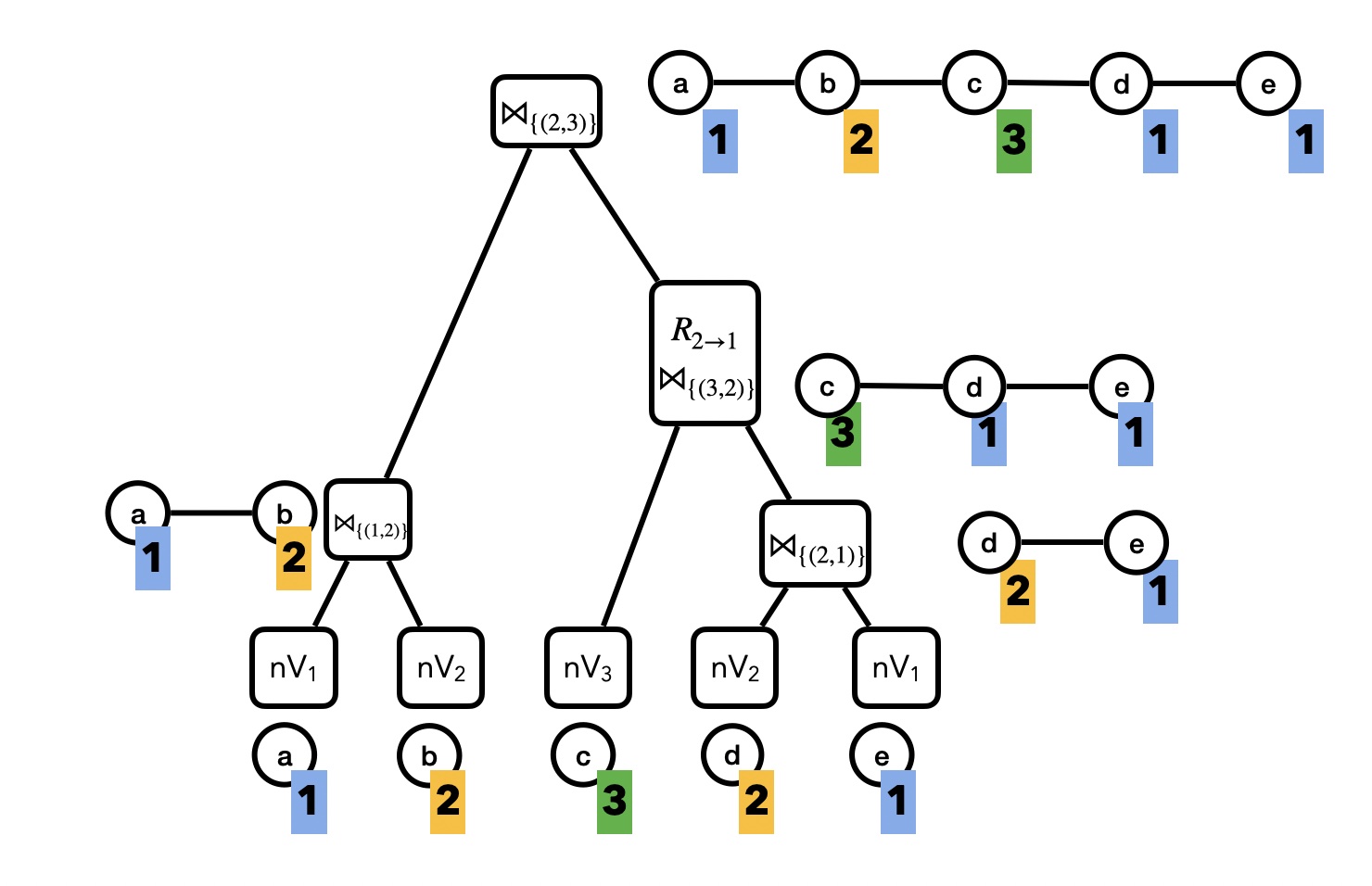}
\caption{An NLC decomposition tree $T$. Next to each  node $x$ of the tree is displayed the colored graph $G[x]$ corresponding to subtree $T[x]$ of $T$ rooted at~$x$.}
\label{fig:NLCdec}
\end{center}
\end{figure}

\subsection{From Tree-Width to NLC-Width: The Main Messages}

Let \(\Pi\) be an \(\MSO_1\) property, and let $T$ be an NLC-decomposition tree of a graph \(G\) with \(\cw(G)\leq k\). That is, we can choose the tree \(T\) as one using at most \(k\) colors. In the following, to avoid confusion, we call \emph{vertices} the elements of the vertex set of~\(G\), and \emph{nodes} the elements of the vertex-set of the decomposition tree~\(T\). The structure of our certificates differ from the one in~\cite{FraigniaudMRT22}, and now we decompose the certificate assigned to each node~$v$ into three parts: \emph{main messages}, \emph{auxiliary messages}, and \emph{service messages}. This subsection focuses on the main messages. 

Our main messages have, to some extent, a structure similar to the main messages used in~\cite{FraigniaudMRT22}  for the tree-width.  In particular, vertex \(v\) receives a sequence \(\textsf{path}(v)\), listing all the nodes, i.e., the whole set of operations, in the path from the root of  \(T\)  to the leaf of $T$ where \(v\) was created. For each node \(x\) in $\textsf{path}(v)$, the main message also includes the vertex identifier of a leader for~$x$, called \emph{exit vertex of \(x\)}, and denoted by \(\textsf{exit}(x)\). The main message also includes a data structure \(h(x)\) that encodes the truth value of the \(\MSO_1\) property on \(G[x]\). However, unlike the case of tree-width, where the nodes of the tree-decomposition are sets of vertices (i.e., bags) of bounded size, the contents of a non-leaf node in an NLC-decomposition tree \(T\) does not necessarily include information about the vertices created in \(T[x]\). For that reason, our proof-labeling scheme includes additional information in the main message of \(v\) in order to verify the correctness of the given decomposition. It may actually be worth providing a concrete example to explain the need for additional information. 

\subparagraph{Example.} 

Let us pick an arbitrary edge \(\{u,v\} \in E(G)\), and denote by \((x_1(u), \dots, x_{t_1}(u))\) and \((x_1(v), \dots, x_{t_2}(v))\)  the sequences \(\textsf{path}(u)\) and \(\textsf{path}(v)\), respectively, where \(x_1(u) = x_1(v)\) is the root of \(T\), and \(x_{t_1}(u)\) and \(x_{t_2}(v)\) are the nodes where \(u\) and \(v\) are respectively created.  Let \(x_1, \dots, x_{t_3}\) be the longest common prefix of these two sequences, i.e., the information contained in their main messages coincide on the first  \(t_3\) elements,  but \(x_{t_3+1}(u) \neq x_{t_3+1}(v)\). In the tree~$T$, \(x_{t_3+1}(u)\) and \(x_{t_3+1}(v)\) are two children of \(x_{t_3}\). The sequence of operations described in \(x_{t_1}(u), \dots, x_{t_3+1}(u)\) defines the color \(c(u)\) that \(u\) has in \(x_{t_3+1}(u)\). Similarly, the color \(c(v)\) of \(v\) in \(x_{t_3+1}(v)\) is defined by \(x_{t_2}(v), \dots, x_{t_3+1}(v)\). In order to create the edge \(\{u,v\}\), the operations described in \(x_{t_3}\) must specify a \(\Join\) operation between vertices with color~\(c(u)\) and vertices with color~\(c(v)\). However, the join operations described in an NLC-decomposition tree make a clear distinction between the left and right children of a  node. Therefore, in our example, for checking that the edge \(\{u,v\}\) is indeed correctly created in the given decomposition tree, the vertices \(u\) and \(v\) must be able to distinguish which of the two children \(x_{t_3+1}(u)\) and \(x_{t_3+1}(v)\) of \(x_{t_3}\) is the left child, and which one is the right child. 

\bigbreak

For a node \(x\) different from the root, let us denote by \(p(x)\) the parent of \(x\) in \(T\). The main message of~\(v\) includes 
a sequence \(\textsf{links}(v)\) that specifies, for each node \(x\) in \(\textsf{path}(v)\) different from the root, whether $x$ is the left or right child of \(p(x)\). For instance, in the example of Figure~\ref{fig:NLCdec}, we have  \(\textsf{links}(c) = (1,0)\), indicating that, to reach the leaf creating vertex $c$ from the root, one must follow the right child~(1), and then the left child~(0). S imilarly, \(\textsf{links}(d)=(1,1,0)\).  The sequences \(\textsf{links}\) are
also used to determine the longest common prefixes of the main messages, when the  same operations are repeated between two children of a same node (consider for instance the case where the same operation is performed at all the nodes of the decomposition tree). Back to our example above, let us suppose that the sequences \(\textsf{links}(u)\) and \(\textsf{links}(v)\) specify that \(x_{t_3+1}(u)\) is the left neighbor of \(x_{t_3}\), and \(x_{t_3+1}(v)\) is the right neighbor of \(x_{t_3}\). Using this information, \(u\) and \(v\) can  infer that it is an operation \(\Join_S\), with \((c(u), c(v)) \in S\) that is specified in the description of \(x_{t_3}\). 
With the given information, each vertex can thus check that all its incident edges are indeed created at some node of the decomposition tree~$T$. 

It remains to check that the decomposition does not define non-existent edges. To do so, the main message of every vertex \(v\) also includes, for each node \(x\) in \(\textsf{path}(v)\), and for each \(i \in [k]\),  the integers \(\textsf{color}_i(x)\) representing the number of vertices of \(G[x]\) that are colored \(i\) in the root of \(T[x]\). (Recall that the subgraph $G[x]$ is the subgraph of $G$ induced by the vertices created in the subtree \(T[x]\) of~$T$). Returning to our example, vertex \(v\) checks that it has exactly \(\textsf{color}_{c(u)}(x_{t_3+1}(u))\) neighbors with the same longest common prefix as \(u\) colored \(c(u)\) in the left children of \(x_{t_3}\). Also, vertex \(v\) checks, for each \(i\in [k]\), that the number of vertices colored~\(i\) in node \(x_{t_3}\) corresponds to the sum of the number of vertices colored \(j\) in \(x_{t_3+1}(u)\) and \(x_{t_3+1}(v)\), for each color \(j\) that is recolored \(i\) by the recoloring operation defined in \(x_{t_3}\). So, let us assume that the following consistency condition (analogous to the one for the certification of tree-decompositions) holds: 

\begin{description}
\item[C1:] For every pair of vertices \(u,v \in V(G)\), and for every  node \(x\) in both \(\textsf{path}(u)\) and \(\textsf{path}(v)\),  \(u\)~and \(v\) receive the same information about all nodes in the path from \(x\) to the root of \(T\) in their main messages, and 
\item[C2:]  If \(x\) is the root of \(T\), then the data structure \(h(x)\) describes an accepting instance (i.e., \(G\) satisfies \(\Pi\)). 
\end{description}

Assuming that the consistency condition is satisfied, it is not difficult to show that the vertices can collectively check that the given certificates indeed represent an NLC-decomposition tree, and that \(G\) satisfies \(\Pi\).  The difficulty is however in checking that the consistency condition holds. This is the role of the auxiliary and service messages, described next. 

\subsection{Checking Consistency: Auxiliary, and Service Messages }

We use auxiliary and service messages for allowing our proof-labeling scheme to check the first condition~\textbf{C1} of the consistency condition defined at the end of the previous subsection. 

Auxiliary messages can easily be defined for every node \(x\) of \(T\) satisfying that  \(G[x]\) is connected. In that case, the auxiliary messages of all the vertices \(v\) in \(T[x]\) contain the certificates for certifying a spanning tree \(\tau_x\) of \(G[x]\) rooted at the exit vertex of \(x\). Each vertex \(v\) can verify that the longest common prefix common to $v$ and its parent in \(\tau_x\) contains all the nodes from the root up to \(x\), and that the information given in the main messages coincide for all such nodes. Observe that every vertex \(v\) may potentially contain one auxiliary message for each node in  \(\textsf{path}(v)\). 

 The case where \(G[x]\) is not connected is fairly more complicated, and we need to introduce another type of decomposition. 

\subparagraph{NLC+ decompositions trees.}

 Observe that \(G\) itself is connected. Therefore, there must exist an ancestor \(z\) of \(x\) for which  \(G[z]\) is connected. We could provide the vertices in $G[z]$ with a spanning tree of \(G[z]\) for checking the consistency in \(T[x]\). However, the vertices in \(G[z]\) do not necessarily contain \(x\) in the prefixes of their node sequences, so we would have to put a copy of the main message associated to \(x\) on every node participating in the spanning tree. Since an NLC-decomposition tree does not allow to provide a bound on the distance between \(z\) and \(x\) in the tree, we have no control on how many copies of main messages a vertex should handle.
 
  Therefore, to cope with the case where \(G[x]\) is disconnected, we define a specific type of NLC decompositions trees, called NLC+ decompositions trees. The NLC+ decomposition trees are similar to NLC-decomposition trees, up to two important differences. 
\begin{itemize}
\item First, we allow the nodes corresponding to a \(\parallel\) operation to have arbitrary large arity, and thus NLC+ decomposition trees are not binary trees, as opposed to NLC-decomposition trees. 
\item Second, if a node \(x\) induces a disconnected subgraph \(G[x]\), then its parent node \(p(x)\) must satisfy that \(G[p(x)]\) is connected. Observe that  \(p(x)\) must then correspond to a \(\Join\) operation, and thus $p(x)$ has only two children: $x$ and another child, denoted by $y$. 
\end{itemize}

\subparagraph{Service trees.}

 A \emph{service tree} \(S_x\) for a node \(x\) such that $G[x]$ is disconnected is a Steiner tree in \(G[p(x)]\) rooted at the exit vertex of \(x\), and with all the vertices of \(G[x]\) as terminals. Each vertex of  \(S_x\) (i.e., all vertices in \(G[x]\), plus some vertices in \(G[y]\) is given a \emph{service message}, which contains the certificate for the tree \(S_x\), as well as a copy of the information about \(x\) given in the main messages of the vertices in \(G[x]\). Each vertex in \(S_x\) can then check that it shares the same information about \(x\) than its parent. The properties of the NLC+ decomposition guarantee that a vertex \(v\) participates to at most two service trees, for each node \(x\) in the sequence \(\textsf{path}(v)\). Indeed, vertex \(v\) necessarily participates in \(S_x\) when \(x\) is of type \(\parallel\),  and may also participate in \(S_y\) whenever  the sibling $y$ of \(x\) is of type \(\parallel\). There are significantly more subtle details concerning service trees, but they will be described in Section~\ref{se:cwd}.

It remains to check the second condition~\textbf{C2} of the consistency condition defined at the end of the previous subsection, which consists in verifying the correctness of \(h(x)\), for every node \(x\) of~\(T\). This is explained next. 

\subsection{Dealing with $\MSO_1$ Predicates}

In their seminal work, Courcelle, Makowsky and Rotics~\cite{CourcelleMR00} proved that every $\MSO_1$ predicate $\Pi$ on vertex-labeled graphs can be decided in linear time on graphs of bounded clique-width, and hence on graphs of bounded NLC-width, whenever a decomposition tree is part of the input. The running time of the algorithm is~$O(n)$, i.e., linear in the number~$n$ of vertices of the input graph, with constants hidden in the big-O notation that depend on the clique-width bound, on the number of labels, and on the $\MSO_1$ formula encoding the predicate~$\Pi$.  Note that this result does not hold for $\MSO_2$ predicates, which is why our proof-labeling scheme applies to $\MSO_1$ predicates only. We discuss the possible extension to $\MSO_2$ properties in the conclusion (see Section~\ref{sec:conclusion}).

For our purpose it is convenient to see the linear-time decision algorithm  as a dynamic programming algorithm over the NLC-decomposition tree of the input graph. We formalize this dynamic programming approach following the vocabulary and notations used by Borie, Parker and Tovey~\cite{BoPaTo92}. Note that the latter  provided an alternative proof of Courcelle's theorem, but for graphs of bounded tree-width, i.e., specific to a graph grammar defining tree-width.  To design our proof-labeling scheme, we adapt their approach to a graph grammar defining  NLC-width.

\subparagraph{Homomorphism Classes.}

For a fixed property $\Pi$ and a fixed parameter $k$, there is a finite set $\mathcal{C}$ of \emph{homomorphism classes} (whose size depends only on $\Pi$ and $k$) such that we can associate to each graph  $G$ of clique-width at most \(k\) its class $h(G) \in \mathcal{C}$ \revision{(for more details see Proposition \ref{pr:reg})}. Whenever $G$ is obtained from two graphs $G_1$ and $G_2$ by a $\Join_S$ operation potentially followed by a recoloring operation~$R$, the class $h(G)$ only depends on $h(G_1)$, $h(G_2)$, $S \subseteq [k] \times [k]$, and $R:[k] \to [k]$. This property also holds whenever $\Join_S$ is replaced by $\parallel$. Moreover, we also extend the notion to arbitrary arity so that it holds for the NLC+ decomposition trees. Importantly, Courcelle's theorem~\cite{CourcelleMR00} provides a "compiler" allowing to compute $h(G)$ whenever $G$ is formed by a single vertex of color $j \in [k]$, and to compute $h(G)$ from $h(G_1)$, $h(G_2)$, $S$ and $R$ whenever $G = R(G_1 \Join_S G_2)$. 

\subparagraph{Checking Condition \textbf{C2}.}

In our proof-labeling scheme, for each node $x$ of the NLC+ decomposition tree, we specify \(h(x)\) as the class $h(G[x])$. Following the same principles as before, the consistency of these classes can be checked by simulating a bottom-up parsing of the decomposition tree, in a way very similar to what we described before for checking the consistency of $\textsf{color}(x)$, but replacing the mere additions by updates of the homomorphism classes as described above. 

This completes the rough description of our proof-labeling scheme. 

\subsection{Certificate Size}
\label{subsec:summarylabelsize}

For each vertex \(v\), the main, auxiliary, and service messages of \(v\) can be encoded using \(O(\log n)\) bits for each node \(x\) in \(\textsf{path}(v)\), for the following reasons. 
\begin{itemize}
\item The main message associated to a node \(x\) contains the following information. First, the list of operations described in the node, which can be encoded in \(O(k^2)\) bits.  Second, the corresponding index of \(\textsf{links}\), which is just one bit representing whether \(x\) is the left or right children of its parent. Third, the homomorphism class \(h(x)\) that can be encoded in \(f(k)\) bits for some function~$f$ depending on the $\MSO_1$ property under consideration --- see the remark further in the text for a discussion about~$f$. Finally, it includes the node identifier of the exit vertex of \(x\), and the integers \(\textsf{color}_i\) for each \(i\in [k]\).  All these latter items can be encoded on  \(O(\log n)\) bits. 

\item The auxiliary message associated to node \(x\) (whenever \(G[x]\) is connected) corresponds to the certification of a spanning tree of \(G[x]\), which can be encoded in \(O(\log n)\) bits (see~\cite{KormanKP10}). 

\item For the service messages, note that vertex \(v\) participates in at most two service trees associated to \(x\): the one of \(x\) (whenever \(G[x]\) is disconnected), plus the one of the sibling \(y\) of \(x\) (when \(G[y]\) is disconnected). Again, each of these trees can be certified using \(O(\log n)\) bits.
\end{itemize}

Therefore, the total size of the certificates is \(O(d\cdot \log n)\) bits, where \(d\) is the depth of the NLC+ decomposition tree~\(T\). Our final certificate size depends then on how much we can bound the depth~$d$ of \(T\). Courcelle and Kant\'e \cite{courcelle2007graph} show that there always exists an NLC decomposition tree of logarithmic depth, but it comes with a price: the width of the small depth decomposition can be exponentially larger than the width of the original decomposition. Specifically, Courcelle and Kant\'e have shown that every \(n\)-node graph of NLC-width \(k\) admits an NLC-decomposition of width \(k\cdot 2^{k+1}\) such that the corresponding decomposition tree \(T\) has depth \(\mathcal{O}(\log n)\). Fortunately, our construction of NLC+ decomposition trees does not increase the depth of a given NLC-decomposition tree. In other words, we can use the result of Courcelle and Kant\'e to also show that  NLC+ decomposition trees have logarithmic depth. Overall, we conclude that the certificate size is \(O(\log^2 n)\) bits. 

\subparagraph{Remark.}

Our asymptotic bound on the size of the certificates hides a large dependency on  the clique-width \(k\) of the input graph. For certifying the NLC+ decomposition only, the constant hidden in the big-O notation is single-exponential in $k$, given that the width of the NLC+ decomposition tree with logarithmic depth grows to \(k\cdot 2^{k+1}\). However, for certifying an \(\MSO_1\) property, the dependency on \(k\) can be much larger, as it depends on the number of homomorphism classes.  It is known that, for  $\MSO_1$ properties, the number of homomorphism classes is at most a tower of exponentials in $k$, where the height of the tower depends on the number of quantifiers in the $\MSO_1$ formula. Moreover, this non-elementary dependency on $k$ can not be improved significantly~\cite{FrickGrohe04}. This exponential or even super-exponential dependency on the clique-width~$k$ is however inherent to the theory of algorithms for graphs of bounded clique-width. The same type of phenomenon
 occurs when dealing with graphs of 
 bounded tree-width (see~\cite{FrickGrohe04}), and the proof-labeling scheme in~\cite{FraigniaudMRT22} is actually subject to the same type of dependencies in the bound~$k$. On the other hand, the certificate size of our proof-labeling scheme grows only polylogarithmically with the size of the graphs.


\section{Certifying Cographs}\label{se:cographs}

In this section we describe a PLS for the recognition of cographs, using $O(\log n)$-bit certificates in $n$-node graphs. That is, this section is entirely dedicated to  proving Theorem~\ref{theo:cographs}. Our scheme uses two known technical lemmas. The first lemma (Theorem~4 in~\cite{courcelle2007graph}) states that every graph of NLC-width \(k\) admits an NLC-decomposition of \emph{logarithmic} depth, and width still bounded by a function of~$k$. 

\begin{lemma}[Courcelle and Kant\'e \cite{courcelle2007graph}]\label{lem:logdepth}
Every \(n\)-node graph of NLC-width \(k\) admits an NLC-decomposition of width \(k\cdot 2^{k+1}\) such that the corresponding decomposition tree \(T\) has depth \(\mathcal{O}(\log n)\).
\end{lemma}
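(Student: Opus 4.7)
The plan is to rebalance an arbitrary NLC-decomposition tree $T_0$ of $G$ of width $k$ --- which exists by the assumption $\nlcw(G) \leq k$ --- into a new decomposition tree $T$ of depth $\mathcal{O}(\log n)$, paying for the rebalancing by enlarging the color palette from $k$ to at most $k\cdot 2^{k+1}$ colors. The overall strategy is divide-and-conquer via a centroid-style cut: a standard argument on rooted binary trees gives a node $x$ in $T_0$ such that the subtree $T_0[x]$ contains between $n/3$ and $2n/3$ leaves. Denote by $G_x$ the colored graph produced by $T_0[x]$, and regard the rest of $T_0$ as a \emph{one-hole context} $f$, namely an NLC-expression with one placeholder input such that $G = f(G_x)$.

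The proof then reduces to showing that $f$ itself admits an NLC-decomposition of depth $\mathcal{O}(\log n)$ and width at most $k\cdot 2^{k+1}$ over a suitably enriched color palette. Granting this for a moment, one builds $T$ by plugging the (recursively balanced) decomposition of $T_0[x]$ into the (recursively balanced) decomposition of $f$, yielding a recurrence $D(n) \leq 1 + D(2n/3)$ which solves to $D(n) = \mathcal{O}(\log n)$, as required.

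The technical heart of the proof is the reencoding of the one-hole context $f$ with the enriched palette. The key observation is that, from the standpoint of $f$, two vertices of $G_x$ that share the same color $c \in [k]$ are completely interchangeable: the only actions of $f$ on these vertices are collective recolorings (via $\mathsf{recolor}_R$) and collective joins with newly created vertices (via $\Join_S$). Hence one only needs to encode, for each color $c$, the finitely many ``future roles'' that a class of color $c$ can play inside $f$. A careful count based on the two flavors of operations --- essentially one bit per possible final color under the aggregate recoloring behavior and one bit for the left/right side in the surrounding join operations --- gives at most $2^{k+1}$ roles per original color, yielding an enriched palette of size $k\cdot 2^{k+1}$. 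Using this enriched palette, $f$ can be rewritten as an NLC-expression to which the very same rebalancing argument applies recursively, and the recursion terminates because the size of the context strictly decreases at each step.

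The main obstacle is precisely this book-keeping step: one must define ``roles'' rigorously, verify that the bound of $2^{k+1}$ per original color is correct, and check that the enriched NLC-operations produced by the rebalancing stay within the palette of size $k\cdot 2^{k+1}$ throughout the entire recursion --- in particular that iterated balancing does not cause a cascade of further color blow-ups. The cleanest way to implement this is to invoke the machinery of recognizable sets of terms over the NLC-signature, for which generic tree-balancing results over finite alphabets are standard; the graph-theoretic content then lies in showing that, in this particular algebra of NLC-operations, the number of relevant states per color is single-exponential in~$k$.
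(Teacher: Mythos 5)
The paper itself does not prove this lemma; it is imported as an external black box (Theorem~4 of Courcelle and Kant\'e). So there is no in-paper proof to compare against, and your attempt must be judged on its own. Your high-level plan --- centroid cut of the decomposition tree, recursion on a subtree $T_0[x]$ and a one-hole context $f$, and an enriched palette that lets contexts be manipulated as NLC objects --- is recognizably the Courcelle--Kant\'e strategy, but there are two genuine gaps.

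First, the recurrence $D(n) \le 1 + D(2n/3)$ does not follow from your description of the combination step. You say you ``plug the (recursively balanced) decomposition of $T_0[x]$ into the (recursively balanced) decomposition of $f$.'' If ``plugging'' means substituting the balanced tree for $G_x$ at the hole leaf of the balanced tree for $f$, the depth of the result is (depth of the hole leaf in the balanced context) $+$ (depth of the balanced tree for $G_x$), and the hole leaf sits at depth $\Theta(\log n)$ in a balanced binary tree; this yields a recurrence of the form $D(n) \le D(n_1) + D(n_2)$ with $n_1, n_2 \le 2n/3$, which is polynomial, not logarithmic. To actually get $1 + D(2n/3)$ you need a dedicated binary \emph{plug} operation (and, at the next level of recursion, a \emph{compose} operation on contexts) expressible inside the enriched NLC algebra, placed at the \emph{root} of the balanced tree with the balanced context and the balanced subtree as its two children. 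This is the central algebraic requirement, and it is exactly what requires proof; stating the recurrence without exhibiting the plug/compose operations is assuming what needs to be shown.

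Second, the counting that is supposed to yield the bound $k\cdot 2^{k+1}$ is not credible as stated. ``One bit per possible final color under the aggregate recoloring behavior'' does not parse --- the aggregate recoloring along a spine is a function $[k]\to[k]$, not a subset --- and ``one bit for the left/right side'' contributes a single global bit, not a per-color factor. The blow-up in the Courcelle--Kant\'e encoding comes from recording, for each vertex of the context, the \emph{subset} of hole-colors it will ultimately be joined to once the hole is filled (a subset of $[k]$, hence a factor $2^k$), together with its own current color in $[k]$; extra room is then needed to mark the hole and keep plug/compose expressible, which is how one lands inside $k\cdot 2^{k+1}$. Your final paragraph acknowledges that making this rigorous is the hard part and defers to ``standard tree-balancing results over finite alphabets,'' but for NLC this deferral is precisely where the theorem lives: one must verify that one-hole contexts over $[k]$ can be represented over the $k\cdot 2^{k+1}$-color alphabet in a way that is closed under the plug and compose operations, so that the recursion never re-enriches an already enriched palette. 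As written, the proposal gestures at the right ingredients but does not establish them.
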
 

The second technical lemma used to establish Theorem~\ref{theo:cographs} states that every cograph has a spanning tree with very small diameter. 

\begin{lemma}[Montealegre, Ram\'{i}rez-Romero, and Rapaport \cite{montealegre2021compact}]\label{lem:depth2tree}
Every cograph has a rooted spanning tree of depth~2 in which every node at depth~1 in the tree has at most one child.
\end{lemma}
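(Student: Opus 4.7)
The plan is to exploit the top-level decomposition of a connected cograph. Recall that, by its recursive definition, every cograph is either a single vertex, a disjoint union of two cographs (in which case it is disconnected), or a join of two cographs (in which case it is necessarily connected). Since the statement concerns a spanning tree, we may restrict attention to connected cographs; the single-vertex case is trivial, so I would write $G=G_1\Join G_2$ with $V(G_1),V(G_2)\neq\varnothing$ by applying this structural characterization at the root of the cotree.

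Given such a decomposition, I would, without loss of generality, assume $|V(G_1)|\leq |V(G_2)|$ and choose an arbitrary vertex $r\in V(G_1)$ to serve as the root of the spanning tree. By definition of the join, $r$ is adjacent in $G$ to every vertex of $V(G_2)$, so I may place the entire set $V(G_2)$ at depth~$1$ as children of~$r$. It remains to accommodate the $|V(G_1)|-1$ vertices of $V(G_1)\setminus\{r\}$ at depth~$2$. Since $|V(G_1)|-1\leq |V(G_2)|$, I can fix an injection $\varphi\colon V(G_1)\setminus\{r\}\hookrightarrow V(G_2)$, and for each $u\in V(G_1)\setminus\{r\}$ declare $\varphi(u)\in V(G_2)$ to be the unique child-parent of $u$ in the tree. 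The edge $\{u,\varphi(u)\}$ exists in $G$ because of the join between $G_1$ and $G_2$. The resulting rooted tree is a spanning tree of $G$, it has depth~$2$, and each depth-$1$ vertex carries at most one child, as required.

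The only subtle point is the counting inequality $|V(G_1)|-1\leq |V(G_2)|$, which is what guarantees that there are enough slots at depth~$1$ to host every remaining vertex as a grandchild. This is precisely what the WLOG choice of $r$ on the \emph{smaller} side of the top-level join achieves. Notably, no induction along the cotree is needed: the top-level join decomposition alone is strong enough to deliver a depth-$2$ spanning tree with the required property, which is what makes the statement so clean.
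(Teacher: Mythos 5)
Your proof is correct and complete. The paper itself does not prove this lemma but cites it from~\cite{montealegre2021compact}, so there is no in-paper argument to compare against; your derivation via the top-level $\Join$ of a connected cograph---rooting at an arbitrary vertex of the smaller side, hanging the entire larger side at depth~$1$, and matching the remaining vertices of the smaller side injectively into the larger side---is the natural, self-contained argument, and the counting step $|V(G_1)|-1\le |V(G_1)|\le |V(G_2)|$ is exactly what makes it go through without any induction on the cotree.
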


Our proof  of Theorem~\ref{theo:cographs} is structured as follows. First, we describe the certificates assigned by the prover at each node. Next, we describe the verification algorithm, and we prove that the scheme satisfies soundness and completeness. Finally, we establish the desired upper bound on the size of the certificates.

\subsection{Certificate Assignment }  

Let $G=(V,E)$ be the considered graph, and let \(u\in V\).  The certificate $c(u)$ is divided in two parts, respectively called \emph{main message} and \emph{auxiliary message}.

\subparagraph{Main messages.} 

Lemma~\ref{lem:logdepth} states that there exists an NLC-decomposition of cographs, with width~4, and such that the corresponding decomposition tree \(T_{dec}\) has depth \(\mathcal{O}(\log n)\). These main messages are used to encode such an NLC-decomposition tree \(T_{dec}\).  At every node~$u$, the main message contains the following data:
\begin{itemize}
\item The identifier $\id(u)$, and an integer \(\textsf{deg}(u)\) representing the degree of \(u\) in $G$.

\item A sequence \(\textsf{path}(u) = (x_1(u), \dots, x_{d}(u))\) of values, representing a path in \(T_{dec}\) from the root of $T_{dec}$ to the leaf of $T_{dec}$ where \(u\) is created thanks to $\mathsf{newVertex}$. Here \(d = d(u)\) represents the length of \(\textsf{path}(u)\). For each \(i \in \{1, \dots, d\}\), the value \(x_i(u)\) is a list of all the operations (type of join, potential recoloring, etc.) performed at the $i$th node of the path, starting from the root. For simplicity, we also refer to  \(x_i(u)\)  as this $i$th node. 

\item A sequence \(\textsf{links}(u) = (\ell_1(u), \dots, \ell_{d}(u)) \in \{0,1\}^{d}\), representing the sequence of edges that are followed to reach the $i$th node \(x_i(u)\) of the path $\textsf{path}(u)$ from the root $x_1(u)$. More precisely, \(\ell_1(u) = 0\) and for each \( i \in \{2, \dots, d\}\),  \(\ell_i(u)= 0\) if \(x_{i}(u)\) is the left child of \(x_{i-1}(u)\), and \(\ell_i(u) = 1\) otherwise. 
\end{itemize}

Note that since the prover provides \(u\) with the whole list of  \(\mathcal{O}(\log n)\) operations from the node of $T_{dec}$ where $u$ is created  to the root of~$T_{dec}$, there is not enough space for assigning a unique identifier to each nodes of the tree~$T_{dec}$, as this would results in consuming $O(\log^2n)$ bits in $\textsf{path}(u)$. Instead, a node of the decomposition will be uniquely identified by the sequence \(\textsf{links}(u)\) and by the content of the values stored in \(\textsf{path}(u)\). We shall show that this is sufficient.

\subparagraph{Auxiliary messages.} 

Lemma~\ref{lem:depth2tree} states that there is a rooted spanning tree $T_{span}$ of $G$, with depth~2, and in which every node at depth~1 in the tree has at most one child. The auxiliary messages are used to gather all the main messages on a single node \(r \in V\). Each node receives the information required to certify a depth-2 spanning tree $T_{span}$ rooted at $r$. In addition, every node  at depth~1 in $T_{span}$ receives the main messages of its child in~$T_{span}$. Formally, the auxiliary message provided by the prover to a node \(u\) contains the following data:

\begin{itemize}
\item The identifier $\rho(u)=\id(r)$ of the root $r$ of $T_{span}$.
\item An integer \(\textsf{depth}(u) \in \{0,1,2\}\) representing the depth of \(u\) in $T_{span}$.
\item If \(\textsf{depth}(u) = 2\), a node identifier \(\textsf{parent}(u)\) representing the parent of \(u\) in $T_{span}$.
\item If \(\textsf{depth}(u) = 1\), a variable \(\textsf{child}(u)\), either representing the node identifier of the child of \(u\) in $T_{span}$, or \(\bot\) if \(u\) has no children in $T_{span}$.
 \item If \(\textsf{depth}(u) = 1\), and \(\textsf{child}(u) \neq \bot\), then \(u\) receives a variable \(\textsf{M}(u)\) representing  the main message of \(\textsf{child}(u)\).
\end{itemize}

\subsection{Verification Scheme} 

We now describe the verification algorithm performed by every vertex \(u\) of the actual graph. First,  \(u\) verifies that \(\textsf{deg}(u)\) corresponds to its degree, and that all the values stored in \(\textsf{path}(u)= (x_1(u), \dots, x_{d}(u))\) effectively correspond to a list of operations of an NLC-decomposition of width~4. In particular, $u$ checks that  \(x_{d}(u)\) contains the unique operation  \(\textsf{NewVertex}_i\) for some \(i\in \{1,\dots,4\}\). Concretely, after having shared its certificate with its neighbors,   \(u\) checks the following conditions, for each \(v \in N(u)\):

\begin{enumerate}
\item \label{item:pierre-1} \(\textsf{links}(u)\) and \(\textsf{links}(v) \) have a common prefix. More precisely, \(u\) checks that there exists an index \(i \in \{1, \dots, d(u)\}\)  such that \(\ell_j(u) = \ell_j(v)\) for every \(j \leq i\). Over all such indices  we denote by \(i^*\) the maximum one.

\item \label{item:pierre-2} \(\textsf{path}(u)\) and \(\textsf{path}(\revision{v})\) share the same the first \(i^*\) coordinates. More precisely, \(u\) verifies that \(x_j(u) = x_j(v)\) for every \(j \leq i^*\).

\item \label{item:pierre-3} Let \(\mathsf{currentcolor}(u,i^*)\) be the color of \(u\)  resulting from the \(\mathsf{NewVertex}\) operation specified in \(x_{d(u)}\), and all the \(\mathsf{recolor}\) operations in all the nodes in \(\textsf{path}(u)\) up to \(x_{i^*}(u)\), but not including the \(\mathsf{recolor}\) operations in \(x_{i^*}\)). The value \(\mathsf{currentcolor}(v,i^*)\) is defined the same for node~$v$. The following holds:
\begin{itemize}
\item  If \(\ell_{i^*}(u) = 0\) then \(u\) checks that  the join operation $\Join_S$ in  \(x_{i^*}\) satisfies  
\[(\mathsf{currentcolor}(u,i^*), \mathsf{currentcolor}(v,i^*)) \in S.\]
 \item If \(\ell_{i^*}(u) \neq 0\) then \(u\) checks that  the join operation $\Join_S$ in  \(x_{i^*}\) satisfies  
 \[(\mathsf{currentcolor}(v,i^*),\mathsf{currentcolor}(u,i^*)) \in S.\]
 \end{itemize}
\item \label{item:agree-root} Nodes $u$ and $v$ agree on the root of $T_{span}$, i.e., \(\rho(u) = \rho(v)\).  


 \item \label{item:there-exists-parent} If \(\textsf{depth}(u)=2\) then \(u\) checks that there exists \(v \in N(u)\) such that \({\textsf{parent}(u) = v}\). 
 
\item \label{item:neighbor-root} If \(\textsf{depth}(u)=1\), then \(u\) checks that 
\begin{itemize}
\item \(\rho(u) \in \{\id(v):v\in N(u)\}\);
\item If  \(\textsf{child}(u)  \neq \bot\) then (1)~\(\textsf{child}(u) \in N(u)\),  (2)~\(\textsf{depth}(\textsf{child}(u)) = 2\), and (3)~every \(v \in N(u)\smallsetminus \{\textsf{child}(u) \} \) with  \(\textsf{depth}(v) = 2\) satisfies that \(\textsf{parent}(v) \neq u\);
\item If  \(\textsf{child}(u)  \neq \bot\) then \(\textsf{M}(u)\) equals the main message of \(\textsf{child}(u)\).
\end{itemize}

\item  \label{item:number-9-number-9} If \(\id(u) = \rho(u)\) then, \(u\) checks that the following holds\footnote{If all previous conditions are satisfied, then the root \(r=u\) obtains from its neighbors all the main messages of the nodes in~\(G\)}:
\begin{itemize}
\item The information in \(\{\textsf{path}(v): v\in V\}\) and \(\{\textsf{links}(v):v\in V\}\) is consistent, that is, for every \(v_1, v_2 \in V\), and for  every \(i \in \mathbb{N}\),
\[ \Big(\forall j \leq i, \; \ell_j(v_1) = \ell_j(v_2) \Big) \Longrightarrow \Big (\forall j \leq i,  \; x_j(v_1) = x_j(v_2)\Big).\]
Observe that if this condition is satisfied, then necessarily \(\{\textsf{path}(v): v\in V\}\) and \(\{\textsf{links}(v):v\in V\}\) describe a unique NLC-decomposition tree. Let us denote this tree by~\(T(u)\), and let \(G^*\) be the graph corresponding to the realization of the NLC-decomposition tree given by \(T(u)\). Note that \(u\) can obtain all the vertices and edges of \(G^*\) from \(T(u)\). 
\item Node \(u\) checks that \(G^*\) is a cograph.
\item Finally \(u\) checks that, for every node \(v \in V\),  the number of neighbors of \(v\) in \(G^*\)  equals \(\textsf{deg}(v)\). 
\end{itemize}
\end{enumerate}

\subsection{Completeness and Soundness} 

The completeness of our scheme directly follows from the fact that, thanks to Lemma~\ref{lem:logdepth}, every cograph admits a NLC-decomposition of width \(4\), and,  by Lemma \ref{lem:depth2tree}, one can define the main and auxiliary messages in a way that every node accepts.

For the soundness, let us assume that all nodes of a graph $G$ accept in the verification protocol. From condition~\ref{item:agree-root}, we have that all nodes agree on the same root $r$ of $T_{span}$. From the first item of condition~\ref{item:neighbor-root}, all nodes of depth~1 in $T_{span}$ are adjacent to~$r$. From condition~\ref{item:there-exists-parent}, and from the second item of condition~\ref{item:neighbor-root}, we have that every node at depth \(2\) is adjacent to a node of depth~1. Finally, the third item of condition~\ref{item:neighbor-root} guarantees that the root~$r$ receives all the main messages of the nodes \(G\). Then, by the first item of condition~\ref{item:number-9-number-9}, we have that $r$ can recover all the vertices and all the edges of \(G^*\), and by the second item of condition~\ref{item:number-9-number-9}, we have that \(G^*\) is a cograph.  From conditions \ref{item:pierre-1}-\ref{item:pierre-3}, we get that every edge \(\{u,v\} \in E(G)\) is necessarily included in \(G^*\), meaning that \(G\) is a spanning subgraph of \(G^*\). Finally, the third item of condition~\ref{item:number-9-number-9} guarantees that the number of edges in \(G^*\) equals the number of edges in \(G\), and thus \(G =G^*\). It follows that \(G\) is indeed a cograph.

\subsection{Certificate Size} 

By Lemma~\ref{lem:logdepth}, every cograph admits an NLC-decomposition tree of depth \(\mathcal{O}(\log n)\). Therefore, in the certificate assigned to node~$u$, we have \(d(u) = \mathcal{O}(\log n)\). Each value $x_i(u)$ can be encoded using $O(1)$ bits as  every operation involves a constant number of colors. It follows that  \(\textsf{path}(u)\) can be encoded on $O(\log n)$ bits. The variable \(\textsf{links}(u)\) can be encoded with \( \mathcal{O}(\log n)\) bits as well, by construction. Every node identifier,  and every node degree can be encoded with \( \mathcal{O}(\log n)\) bits. Therefore, all the main messages can be encoded with \( \mathcal{O}(\log n)\) bits. This also implies that the auxiliary messages can be encoded with \( \mathcal{O}(\log n)\) bits. We conclude that, in total, our PLS uses certificates on  \( \mathcal{O}(\log n)\) bits, which completes the proof of Theorem~\ref{theo:cographs}.


\section{$\MSO_1$ Properties on Labeled Graphs of Bounded Clique-Width}\label{se:cwd}

This section is dedicated to the proof of Theorem~\ref{theo:main}. To avoid overloading the notation, a labeled graph is simply denoted by~$G=(V,E,\ell)$ where $\ell$ is the vertex-labeling function. We will often simply speak of "graphs" instead of "vertex-labeled graphs". 
Before describing the certificates and the verification protocol, let us first establish some preliminary technical results. 

\subsection{Regularity of $\MSO_1$ predicates}\label{sse:reg}

In their seminal work~\cite{CourcelleMR00}, Courcelle, Makowsky and Rotics proved that any $\MSO_1$ predicate $\Pi$ can be decided in linear time on graphs of bounded clique-width, and hence on graphs of bounded NLC-width, whenever a decomposition tree is part of the input. The running time of the algorithm is linear in the number~$n$ of vertices of the input graph, but the constant hidden in the big-O notation depends on~$k$, on the number of labels and on the $\MSO_1$ formula encoding the predicate $\Pi$. The algorithm in~\cite{CourcelleMR00} is described using tools from automata theory. For our purpose it is more convenient to see it as a dynamic programming algorithm over the decomposition tree of the input graph. Let us formalize this dynamic programming approach following the vocabulary and notations that Borie, Parker and Tovey~\cite{BoPaTo92}. Note that~\cite{BoPaTo92} is alternative proof of Courcelle's theorem on bounded treewidth graphs, specific to a graph grammar defining treewidth; here we simply adapt the definitions to NLC-width and the NLC grammar.

Let $\NLC_k$ denote the class of labeled graphs of graphs of $\NLC$-width at most $k$ and let $\Pi$ be a graph property, assigning to each graph $G$ a boolean value $\Pi(G)$. Intuitively, two graphs $G_1,G_2  \in \NLC_k$ can be considered as equivalent w.r.t. $\Pi$ if, whenever a graph $G'_1 \in \NLC_k$ is obtained by a sequence of NLC operations performed on $G_1$, then the graph $G'_2$ obtained from the same sequence of operations but performed on $G_2$ has the same behaviour w.r.t. property $\Pi$ as $G'_1$. Informally again, a property $\Pi$ is said to be \emph{NLC-regular} if the number of such equivalence classes, that will be called \emph{homomorphism classes} as in~\cite{BoPaTo92}, is upper bounded by a constant: the number of such classes does not depend on the size of the graphs, but only on parameter $k$ and the property $\Pi$ itself. As we shall see, if $\Pi$ is an $\MSO_1$-definable boolean predicate, then it is also NLC-regular, and this is the crux for deciding $\MSO_1$ properties for graphs of bounded NLC-width.

\begin{definition}[NLC-regular property]\label{de:reg}
A graph property $\Pi$ is called \emph{NLC-regular} if, for any value $k$, we can associate a finite set $\cC$ of \emph{homomorphism classes} and a \emph{homomorphism function} $h$, assigning to each graph $G \in \NLC_k$ a class $h(G) \in \cC$ such that:
\begin{enumerate}
\item If $h(G_1) = h(G_2)$ then $\Pi(G_1) = \Pi(G_2)$.
\item For each operation $\Join_S$ there exists a function $\odot_{\Join_S}: \cC \times \cC \rightarrow \cC$ such that, for any two graphs $G_1$ and $G_2$ in $\NLC_k$,
 $$h(\Join_S(G_1,G_2)) = \odot_{\Join_S}(h(G_1),h(G_2)),$$
and for each  operation $\recolor_R$ of there is a function $\odot_{\recolor_R}: \cC \rightarrow \cC$ such that, for any graph $G \in \NLC_k$,
$$h(\recolor_R(G)) = \odot_{\recolor_R}(h(G)).$$
\end{enumerate}
\end{definition}

Observe that NLC-regularity of property $\Pi$ not only implies that the set of homomorphism classes does not depend on the size of the graph, but also that, given an NLC-decomposition of some graph $G \in NLC_k$, the class $h(G)$ can be computed by dynamic programming from the leaves to the root. At each node $x$ of the decomposition tree $T$, the class of $G[x]$, the subgraph of $G$ corresponding to the subtree $T[x]$ rooted at $x$ only depends on the classes at the children nodes, and the operations at node $x$. The first condition of Definition~\ref{de:reg} partitions $\cC$ into a set of accepting classes, i.e., classes $c \in \cC$ such that $h(G) = c \rightarrow \Pi(G)$, and rejecting classes, corresponding to graphs that do not satisfy the property. 

We illustrate Definition~\ref{de:reg} on the predicate $\Pi$ corresponding to non-3-colorability, in order to prove that this predicate is NLC-regular, and how this regularity allows to decide the non-3-colorability on graphs in $\NLC_k$. 
It is  convenient to view a proper 3-coloring of a graph $G$ as a partition of its vertex $V$ set into three independent sets $(X_1, X_2, X_3)$. For such a partition $(X_1, X_2, X_3)$ of a graph $G \in \NLC_k$, $i \in [k]$, we encode this partition as a triple $(b_1,b_2,b_3)$ of boolean vectors of length $k$. Vector $b_j$ encodes the intersection of set $X_j$ with the $k$ possible colours of the NLC decomposition of $G$. That is, $b_j[i] = 1$ if set $X_j$ contains some vertex coloured $i \in [k]$, otherwise $b_j[i] = 0$. Eventually, the homomorphism class $h(G)$ of graph $G$ is the set of all triples $(b_1,b_2,b_3) \in \{0,1\}^k \times  \{0,1\}^k \times \{0,1\}^k$, such that there is some partition of the vertices $G$ into three independent sets $(X_1, X_2, X_3)$ and $(b_1,b_2,b_3)$ is the encoding of this partition as described above. In particular, observe that $G$ is non-3-colourable iff $h(G) = \emptyset$, so $\emptyset$ is the only accepting class.

It is a matter of exercise to see that the number of homomorphism classes is upper bounded by a function on $k$, and even to understand how to construct the class of graphs with a single vertex, and how functions $\odot_{\Join_S}$ and $\odot_{\recolor_R}$ of Definition~\ref{de:reg} can be obtained for all $\Join_S$ and $\recolor_R$ operations. For the sake of completeness, we give the construction in full details.

If $G$ consists of a single vertex $x$ coloured $i$, then the possible partitions of $G$ into three independent sets are $(\{x\},\emptyset,\emptyset)$,  $(\emptyset,\{x\},\emptyset)$ and $(\emptyset,\emptyset,\{x\})$. Thus $h(G)$ is formed by three triples: $(u_{i,k},0^k,0^k)$, $(0^k,u_{i,k},0^k)$, and $(0^k,0^k,u_{i,k})$, where $0^k$ denotes the boolean vector of $k$ zeros, and $u_{i,k}$ is formed of $k-1$ zeros, and a one at position $i$.

Consider now two graphs $G_1,G_2 \in \NLC_k$ and let $G = G_1 \Join_S G_2$ for some $S \in [k] \times [k]$. We describe function $\odot_{\Join_S}$, constructing $h(G)$ from $h(G_1)$ and $h(G_2)$. Note that for each 3-partition $(X_1,X_2,X_3)$ of $G$ into three independent sets, the intersection of $X_1,X_2,X_3$ with the vertex set of $G_1$ (resp. $G_2$) induces a 3-partition $(Y_1,Y_2,Y_3)$ (resp. $(Z_1,Z_2,Z_3)$) into independent sets. Conversely, given partitions  $(Y_1,Y_2,Y_3)$ of $G_1$ and $(Z_1,Z_2,Z_3)$ of $G_2$ into independent sets, $(X_1 = Y_1 \cup Z_1, X_2 = Y_2 \cup Z_2, X_3 = Y_3 \cup Z_3)$ forms a partition into independent sets of $G$ unless operation $\Join_S$ creates an edge between $Y_1$ and $Z_1$, or $Y_2$ and $Z_2$, or $Y_3$ and $Z_3$. Therefore, the homomorphism class of $h(G)$ can be constructed as follows. For each triple of boolean vectors $(y_1,y_2,y_3) \in h(G_1)$ and $(z_1,z_2,z_3) \in h(G_2)$ (corresponding to partitions $(Y_1,Y_2,Y_3)$ of $G_1$ and $(Z_1,Z_2,Z_3)$ of $G_2$ respectively), we add to $h(G)$ the triple $(y_1 \lor z_1, y_2 \lor z_2, y_3 \lor z_3)$ (where $x \lor y$ denotes the bit-wise OR operation), unless there is some pair $(p,q) \in  S$ such that $y_1[p] \land z_1[q]$ or  $y_2[p] \land z_2[q]$ or  $y_3[p] \land z_2[q]$ is true. The latter condition verifies that operation $\Join_S$ does not create an edge in $X_1$ or in $X_2$ or in $X_3$ in graph $G$. 

Eventually, let $G' = \recolor_R(G)$ for some $G \in \NLC_k$, and $R:[k] \to [k]$. We describe function $\odot_{recolor_R}$. Note that $(X_1,X_2,X_3)$ is a partition of $G'$ into independent sets iff it is also a partition of $G$ into independent sets. Therefore we only need to describe how the recoloring $R$ changes the encoding of $(X_1,X_2,X_3)$ from $G$ to $G'$. For each $(b_1,b_2,b_3) \in h(G)$, we add $(\recolor_R(b_1),\recolor_R(b_2), \recolor_R(b_3))$ to $h(G')$, where $b' = \recolor_R(b)$ is defined as follows on boolean vector $b \in \{0,1\}^k$: for each $q \in [k]$ we set $b'[q] = 1$ if and only if there is some $p \in [k]$ such that $R(p) = q$ and $b[p]=1$. In full words, $b'[q]$ is set to true iff the set $X$ encoded by $b$ contains some vertex colored $p$ in $G$, recolored $q$ in $G'$.

This does not only prove that property $\Pi(G)$: ``$G$ is non-3-colourable'' is NLC-regular, but provides all ingredients to decide the property for graphs $G \in \NLC_k$, when a decomposition tree is part of the input. Indeed we need to compute, at each node $x$ of the decomposition tree, the homomorphism class of $G[x]$, the induced subgraph corresponding to thee $T[x]$. At the leaves, graph $G[x]$ is formed by a unique vertex, and as described above the homomorphism class is defined by the colour of that vertex. Then, for each node of the tree, the homomorphism class can be updated from the classes of its children using functions $\odot_{\Join_S}$ and $\odot_{\recolor_R}$. Finally, at the root, we accept if and only if the homomorphism class of the whole graph is the empty set. The running time is linear in the number of nodes of the tree, so it is $O(n)$~--- of course it also depends (exponentially) in parameter $k$.\\


The result of Courcelle, Makowsky and Rotics~\cite{CourcelleMR00}, although expressed in terms of automata theory (see also Theorem 4.2 in~\cite{GanianH10} for an alternative proof), can be restated as follows:

\begin{proposition}\label{pr:reg}
Any graph property $\Pi$ expressible by an $\MSO_1$ predicate is NLC-regular. Moreover, given the corresponding $\MSO_1$ formula and parameter $k$, one can explicitely compute the set of homomorphism classes $\cC$, as well as   functions $\odot_{\Join_S}$ and $\odot_{\recolor_R}$ for any $S \in [k] \times [k]$ and any $R : [k] \to [k]$, and the homomorphism class of the graph formed by a unique vertex coloured $i$, for any $i \in [k]$.
\end{proposition}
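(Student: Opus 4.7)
The plan is to establish NLC-regularity of any $\MSO_1$-definable property through the classical Feferman--Vaught compositionality method for monadic second-order logic, adapted to the NLC grammar. Fix an $\MSO_1$ formula $\phi$ defining $\Pi$ and let $q$ be its quantifier rank. I view colored labeled graphs in $\NLC_k$ as relational structures over the signature consisting of the binary adjacency relation, together with $k$ unary predicates $C_1, \dots, C_k$ encoding the NLC colors and finitely many unary predicates encoding the vertex labels. A classical result of finite model theory says that, up to logical equivalence, there are only finitely many $\MSO_1$-formulas of quantifier rank at most $q$ over a fixed finite signature. Hence the \emph{$q$-theory} of a structure $G$, defined as the set of $\MSO_1$-sentences of rank at most $q$ satisfied by $G$, takes only finitely many possible values, bounded by a function of $k$, $\lambda$ and $q$.

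I would then take $\cC$ to be the (finite) set of such $q$-theories realizable by graphs in $\NLC_k$, and define $h(G)$ to be the $q$-theory of $G$. Condition~(1) of Definition~\ref{de:reg} is immediate: if $h(G_1)=h(G_2)$, then in particular $G_1$ and $G_2$ agree on the sentence $\phi$ of rank $q$, so $\Pi(G_1)=\Pi(G_2)$.

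The heart of the argument is the compositionality lemma: the $q$-theory of $G_1 \Join_S G_2$ depends only on the $q$-theories of $G_1$ and $G_2$ (and on $S$), and the $q$-theory of $\recolor_R(G)$ depends only on the $q$-theory of $G$ (and on $R$). The cleanest route is through the $\MSO_1$ Ehrenfeucht--Fra\"iss\'e game: if Duplicator has a winning strategy in the $q$-round MSO game on $(G_1,G_1')$ and on $(G_2,G_2')$, then Duplicator wins on $(G_1 \Join_S G_2,\, G_1' \Join_S G_2')$ by playing componentwise. This works precisely because the cross-edges introduced by $\Join_S$ are fully determined by $S$ and by the color predicates, which are part of the signature and hence preserved by the individual winning strategies. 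Recoloring is analogous, since $\recolor_R$ only reinterprets the color predicates and leaves adjacency untouched. These two compositional maps on $q$-theories yield the required operators $\odot_{\Join_S}$ and $\odot_{\recolor_R}$.

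For the effectivity statement, given $\phi$ together with $k$ and $\lambda$, one can enumerate a complete system of representatives of $\MSO_1$-formulas of rank at most $q$; from this enumeration, the tables of $\odot_{\Join_S}$ and $\odot_{\recolor_R}$, together with the $q$-theory of a single vertex of color $i \in [k]$, can be computed by brute-force search driven by the EF-game analysis, using only that MSO-satisfiability over a \emph{fixed} finite structure is decidable. I expect the main technical delicacy to be verifying that the classical disjoint-union compositionality lifts to $\Join_S$ and to $\recolor_R$, but this is straightforward once the colors are incorporated into the signature, since the game-based argument carries over verbatim. As indicated in the statement, this is essentially Courcelle--Makowsky--Rotics~\cite{CourcelleMR00} recast in the regularity vocabulary of Borie--Parker--Tovey~\cite{BoPaTo92}, with the NLC operations $\Join_S$ and $\recolor_R$ taking the place of the operations of the bounded-treewidth grammar.
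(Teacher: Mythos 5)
Your proposal is correct, and it takes a genuinely different route from the paper. The paper does \emph{not} prove Proposition~\ref{pr:reg}: it states it as a restatement of the Courcelle--Makowsky--Rotics theorem~\cite{CourcelleMR00} (which is obtained via automata theory), cites~\cite{GanianH10} for an alternative proof, and only works through a concrete example (non-3-colorability) to illustrate what NLC-regularity means. Your approach, by contrast, is a self-contained proof via the Feferman--Vaught method and MSO Ehrenfeucht--Fra\"iss\'e games. The key step — that the $q$-theory of $G_1 \Join_S G_2$ is determined by the $q$-theories of $G_1$ and $G_2$ — is sound: with the NLC colors encoded as unary predicates, the cross-edges created by $\Join_S$ form a quantifier-free definable relation over the (marked) disjoint union, so the componentwise Duplicator strategy preserves them exactly because each component strategy already preserves the color predicates; similarly, $\recolor_R$ is a quantifier-free reinterpretation of those predicates. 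The trade-off is clear: citing CMR is shorter and gives a reference with known bounds, while your EF-game proof is elementary, self-contained, and makes the dependence on quantifier rank explicit. The one place you should tighten is effectivity: ``brute-force search driven by the EF-game analysis'' is vague. The standard way is to start from the $q$-theories of single colored labeled vertices (computable by model checking on one-vertex structures), and then saturate under $\odot_{\Join_S}$ and $\odot_{\recolor_R}$, maintaining for each realized class a concrete canonical witness structure built inductively; each new table entry is computed by model checking the sentence list on a finite witness, and the process terminates because $\cC$ is finite. This yields exactly the realizable classes together with the operator tables.
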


Therefore all $\MSO_1$ properties can be decided in $O(n)$ time on graphs of bounded NLC-width, if a decomposition tree is part of the input; again, the big-Oh notation hides a dependency in $k$ and the $\MSO_1$ formula. We also use these ingredients for our proof labeling scheme.

\subsection{General description $\NLC_+$-Width}

Before getting into the full details of our PLS, let us give a general description of the certification algorithm for an $\MSO_1$ property $\Pi$ on graphs of $\NLC$-width at most~$k$. As in the case of cographs (cf. Section~\ref{se:cographs}), given a graph $G \in \NLC_k$, we use an NLC-decomposition tree~$T_{dec}$ of depth $\mathcal{O}(\log n)$ and of width $k' \leq k\cdot 2^{k+1}$, provided by Lemma~\ref{lem:logdepth}. For any vertex $u$ of $G$, recall that $\textsf{path}(u) = (x_1(u), \dots, x_d(u))$ denotes the path in $T_{dec}$ from the root $x_1(u)$ to the leaf $x_d(u)$ corresponding to the node where vertex $u$ is created (again, we abuse notation by identifying the nodes of the tree with the values describing the operations performed in those nodes). 

As in Section~\ref{se:cographs}, the so-called main message of $u$ contains its identifier~$\id(u)$, the sequence $\textsf{path}(u)$, as well as the sequence $\textsf{links}(u)  = (\ell_1(u), \dots, \ell_{d}(u))$, where, for \(i \geq 2\), $\ell_i(u)\in\{0,1\}$ indicates whether $x_{i}(u)$ is the left or right child of $x_{i-1}(u)$ in~$T_{dec}$, when \(x_{i-1}(u)\) is of type \(\Join\) . For certifying a predicate $\Pi$, let $h_i(u)$ denote the homomorphism class at node $x_i(u)$ w.r.t. ${\Pi}$ restricted to graphs of $\NLC$-width at most \(k'\). That is, for $i=1,\dots,d$, 
\[
h_i(u)=h(G[x_i(u)]).
\]
The sequence 
\begin{equation}\label{eq:sequence-h}
\textsf{h}(u) = (h_1(u), \dots, h_d(u))
\end{equation}
is added to the main message of $u$. Note that, since $d \in \mathcal{O}(\log n)$, each sequence is of logarithmic length. Moreover, for every $i\in\{1,\dots,d\}$,  $(x_i(u), \ell_i(u), h_i(u))$ can be encoded on $O(1)$ bits. Therefore the total size of the main message is $\mathcal{O}(\log n)$. 

Unlike the case of cographs, the diameter of a graph of bounded NLC-width is not necessarily bounded by a constant (i.e., Lemma~\ref{lem:depth2tree} does not hold in general for such graphs). It follows that the main messages cannot be gathered in a single vertex as in cographs. To overcome this difficulty, the prover places additional information in  the main message of $u$. First, 
for each color $j \in [k']$ and \(i \in [d(u)]\) let $\textsf{color}_i^j(u)$ denote the number of vertices colored \(j\) in $G[x_i(u)]$ after the recoloring operations performed at node~$x_i(u)$. The sequence 
\[
\textsf{color}^j(u) = (\textsf{color}^j_1(u), \dots, \textsf{color}^j_d(u))
\]
is also added to the main message of $u$, for all $j\in [k']$. 

Finally, we need to guarantee that, for each node \(x\), every vertex that belongs to \(G[x]\) receives the same information about the operations performed in \(x\), as well as the number of vertices colored with each one of the colors.  This verification is especially hairy for nodes \(x\) where \(G[x]\) is disconnected, where this consistency verification is done with the help of vertices outside \(G[x]\).  In order to cope with this issue, let us slightly modify the notion of NLC-decomposition tree by allowing nodes~$\parallel$ (i.e., nodes of $T_{dec}$ at which a disjoint union operation is performed) to be of arbitrarily large arity (in the original decomposition $T_{dec}$, the arity of an internal node is~2). Moreover all disconnected graphs $G[x]$ will correspond to parallel nodes $x$, and for the parent $y$ of such a node, $G[y]$ has to be connected. Such a decomposition tree will be called an $\NLC_+$-decomposition tree. 


\begin{definition}\label{def:nlc+}
Let $k\geq 1$. A rooted tree $T$ is an $\NLC_+$-decomposition tree of width \(k\) if the following conditions holds:  
 \begin{enumerate}
 \setlength\itemsep{0em}
 \item Every leaf of $T$ is labeled $\mathsf{newVertex}_i$, for some \(i \in [k]\);
 \item Every internal node of \(T\) is labelled \(\Join\) or \(\parallel\);
\item Every node labeled  \(\Join\) has exactly \(2\) children, and such a node is associated to a set $S\in [k]\times[k]$ and to a function $\mathsf{recolor}_R$ where $R:[k] \to[k]$;
\item Every node labeled \(\parallel\) has at least 2 children, and, for every node labeled \(\parallel\) distinct from the root, its parent is labeled \(\Join\);
\item \label{item:connectivityNLC+} Every graph defined by the subtree rooted in a \(\Join\) node is connected.
\end{enumerate}
\end{definition}

In Definition~\ref{def:nlc+}, the nodes labeled  \(\Join\)  represent the join and recoloring operations performed as in \(\NLC\)-decomposition trees, i.e., first join, and then recolor.  Similarly, the nodes \(\parallel\)  represent the disjoint union of the graphs defined by their children. However, a crucial difference compare to $\NLC$-decomposition is that instead of systematically involving two vertex-disjoint graphs, the $\NLC_+$-decomposition allows an arbitrary number of vertex-disjoint graphs. Another crucial difference with $\NLC$-decomposition is Condition~\ref{item:connectivityNLC+}, which imposes connectivity of the subgraph hanging at every  \(\Join\) node. 

To cope with the notion of $\NLC_+$-decomposition, we merely extend the functions $\odot$ given in Definition~\ref{de:reg} by introducing the operator \(\odot_{\parallel}\) with arbitrary arity. Specifically, let \(G\) be a graph obtained from the disjoint union of a set of \(\NLC_+\) graphs $\{G_1, \dots, G_p\}$, in any order, with \(p\geq 2\). We define 
\begin{align*}
h(G)  =  \odot_{\parallel }(h(G_1), \dots, h(G_p)) = \odot_{\parallel }\Bigg(h(G_1), \odot_{\parallel }\bigg(h(G_2), \odot_{\parallel }\Big(\dots, \odot_{\parallel }\big(h(G_{p-1}), h(G_p)\big)\Big)\bigg)\Bigg). 
\end{align*}
We say that a graph has \(\NLC_+\) width \(k\) if it can be constructed according to an \(\NLC_+\) decomposition tree of width \(k\). For any node \(x\) of an \(\NLC_+\) decomposition tree \(T\), we also define \(T[x]\) and \(G[x]\) in the same way as for NLC-decompositions trees. We now show that allowing large arity and imposing connectivity does not ruin the good properties of $\NLC$-decomposition.

\begin{lemma}\label{lem:ldc} 
For every $k\geq 1$, all \(n\)-node connected graphs of $\NLC$-width \(k\) have \(\NLC_+\)-decomposition trees of width at most \(k\cdot 2^{k+1}\), and depth \(\mathcal{O}(\log n)\).
\end{lemma}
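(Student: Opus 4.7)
We start from the NLC-decomposition tree $T_0$ of $G$ with depth $O(\log n)$ and width $k' = k\cdot 2^{k+1}$ guaranteed by Lemma~\ref{lem:logdepth}, and reshape it into an $\NLC_+$-decomposition tree of $G$ satisfying Definition~\ref{def:nlc+}, through three local transformations that each preserve the width and at most double the depth.

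\emph{Step~1 (recolorings).} The NLC grammar attaches recolorings to both $\Join$ and $\parallel$ nodes, whereas the $\NLC_+$ grammar only allows recolorings attached to $\Join$ nodes. Since recoloring commutes with disjoint union, any recoloring $R$ attached to a $\parallel$ node can be pushed down to each of its children: composed with an existing recoloring at a $\Join$ or $\parallel$ child, or absorbed into the color of a leaf child. Iterating, all recolorings end up attached to $\Join$ nodes or to leaves, without affecting depth or width. \emph{Step~2 (flattening).} Whenever a $\parallel$ node $x$ has a $\parallel$ child $y$, we detach $y$ and reattach its children directly to $x$, a transformation that is valid by associativity of the disjoint union. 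After this step no $\parallel$ node has a $\parallel$ child, hence every non-root $\parallel$ node has a $\Join$ parent. Since $G$ is connected, the root must be either a leaf or a $\Join$ node (a $\parallel$ root would force $G$ to be disconnected). This step only decreases the depth.

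\emph{Step~3 (connectivity at $\Join$ nodes).} Processing the tree bottom-up, at each $\Join$ node $x$ with children $x_1,x_2$ whose subgraph $G[x]$ is disconnected, let $C_1,\dots,C_p$ be the connected components of $G[x]$. Each connected component of $G[x_1]$ (resp. $G[x_2]$) lies entirely inside a single $C_i$, so $C_i$ is obtained by applying the $\Join_S$ and the recoloring attached to $x$ to certain components of $G[x_1]$ together with certain components of $G[x_2]$. After Steps 1--2 and the bottom-up processing, each such component is already produced by an identified subtree of the current tree: the whole subtree rooted at $x_j$ (for $j \in \{1,2\}$) when $G[x_j]$ is connected, or a child subtree of $x_j$ when $x_j$ is a $\parallel$ node. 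We therefore replace $x$ by a new $\parallel$ node $x^\star$ whose $p$ children are $\Join$-subtrees producing each $C_i$ (with the operation $\Join_S$ and the recoloring inherited from $x$), where each such $\Join$-subtree regroups the relevant component subtrees under new $\parallel$ nodes when needed. Since the $C_i$ partition the children of $x_1$ and $x_2$ without overlap, no subtree is duplicated. We then reapply Step~2 to absorb any freshly created $\parallel$-chains. Because $G$ is connected, the root itself, if it is a $\Join$ node, has $G[\text{root}] = G$ connected, so no restructuring is needed there.

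The width is preserved throughout, since no new colors are introduced. The depth is preserved because each restructuring at a $\Join$ node in Step~3 adds at most one extra level (the new $\parallel$ node $x^\star$), which is either absorbed by a subsequent flattening when $x^\star$'s parent is already $\parallel$, or contributes to at most a doubling of the overall depth when it is not. The main obstacle is Step~3: one must argue rigorously that the component decomposition and the subtree regrouping can be carried out without duplicating subtrees or blowing up the depth. The key insight is that the restructuring is strictly local around each $\Join$ node, that the components of $G[x]$ cleanly partition the children of $x_1$ and $x_2$, and that the extra levels introduced contribute only a constant multiplicative factor to the depth, so the bound $O(\log n)$ is maintained.
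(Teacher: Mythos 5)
Your plan heads in the right direction (replace $\Join$ nodes with disconnected subgraphs by a $\parallel$ node regrouping the connected components), but it differs from the paper's proof, which is a top-down induction on depth: the paper identifies the shallowest nodes $x$ at which $G[x]$ is connected but a child $z_i$ has $G[z_i]$ disconnected, \emph{trims} $T[z_i]$ to obtain an NLC decomposition tree for each connected component $D_i^j$ of $G[z_i]$, applies the induction hypothesis to each $D_i^j$, and then inserts a single $\parallel$ node between $x$ and the resulting subtrees. Crucially, this leaves the operations $\Join_S$ and $\recolor_R$ at $x$ untouched, and the depth bound $2d$ falls out of the induction hypothesis applied to subtrees of strictly smaller depth.

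As written, your bottom-up Step~3 has two genuine gaps. First, a connected component $C_i$ of $G[x]$ may consist of a single component of $G[x_1]$ that receives no join edges at $x$; then there is no ``$\Join$-subtree producing $C_i$ with the operation $\Join_S$'' because a $\Join$ node requires two non-empty children, and if you instead hang the component subtree directly under $x^\star$, the recoloring $R$ attached to $x$ is silently dropped. You would need to push $R$ into an already-processed subtree, which reopens Step~1 in a way your argument does not account for (and unlike Step~1, here the recoloring sits just above a $\Join_S$ whose $S$ is expressed in terms of the pre-$R$ colors, which is fine for the children of the new $\Join$ but not for components that see only $R$ and no join). Second, the depth bound is asserted, not proved: each restructuring inserts up to two fresh levels (the new $\Join$ for $C_i$ and the new $\parallel$ grouping same-side components), and a subsequent flattening only removes $\parallel$--$\parallel$ adjacencies, not $\Join$--$\parallel$ ones, so the claim that the total overhead is ``at most a doubling'' requires an actual accounting along root-to-leaf paths that you acknowledge but do not supply. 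The paper sidesteps both issues: $x$ keeps its $\Join_S$ and $\recolor_R$, only its children $z_i$ are replaced (by a $\parallel$ node over the recursively-built trees for the components of $G[z_i]$), and the factor-$2$ bound is exactly what the induction on depth delivers.
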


The proof of the  lemma is based on the following statement.

\begin{claim}\label{claim:pourlem:ldc} 
Let $G=(V,E)$ be a connected graph of $\NLC$-width $w\geq 1$. Let $T$ be an $\NLC$-decomposition tree of $G$ of width~$w$ and depth~$d$. Then $G$ admits an \(\NLC_+\)-decomposition tree $T_+$  of width~$w$ and depth at most~$2d$. Moreover, the color of each vertex of $G$ at the root of $T_+$ is the same as the color of this vertex at the root of~$T$.
\end{claim}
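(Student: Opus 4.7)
The plan is to construct $T_+$ bottom-up from $T$, computing at each node $x$ of $T$ a list $f(x) = (T_1^x, \ldots, T_{p_x}^x)$ of $\NLC_+$-decomposition trees, one per connected component of the colored graph $G[x]$, maintaining the invariants that (i) the coloring at the root of each $T_i^x$ matches that of the corresponding component in $G[x]$, and (ii) the root of each $T_i^x$ is either a leaf or a $\Join$ node. Since $G$ is connected, the value of $f$ at the root of $T$ contains exactly one tree, which I take as $T_+$.

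The construction proceeds by case analysis on the type of node $x$ in $T$. For a leaf, $f(x)$ is a singleton list containing the corresponding leaf of $T_+$. For a $\parallel$ node with children $x_1, x_2$, I set $f(x)$ to be the concatenation of $f(x_1)$ and $f(x_2)$, since disjoint union preserves connected components. The main case is a $\Join_S$ node with $S \neq \varnothing$: I build a bipartite merge graph on the vertex set $f(x_1) \cup f(x_2)$, placing an edge between $T_i^1 \in f(x_1)$ and $T_j^2 \in f(x_2)$ whenever there exists $(p,q) \in S$ such that the component behind $T_i^1$ contains a color-$p$ vertex and the component behind $T_j^2$ contains a color-$q$ vertex. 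The connected components $D_1, \ldots, D_q$ of this bipartite merge graph are in bijection with the connected components of $G[x]$. For each $D_k$, let $A_k = D_k \cap f(x_1)$ and $B_k = D_k \cap f(x_2)$: if either side is empty then, since the merge graph has no intra-side edges, the other is a singleton, and I reuse that unique tree as $T_k^x$; otherwise, I create a new $\Join$ node with set $S$ and children $T_A$ and $T_B$, where $T_A$ is the unique tree of $A_k$ when $|A_k| = 1$ and a new $\parallel$ combining the trees of $A_k$ otherwise (and symmetrically for $T_B$). A recoloring $R$ attached to $x$ in $T$ is absorbed with no depth cost: when $T_k^x$ is freshly created as a $\Join$ I attach $R$ to it; when $T_k^x$ is reused from a child, I modify its root in place, either by rewriting the color of a root leaf or by composing $R$ with the recoloring stored at a root $\Join$ node (invariant (ii) guarantees that these two subcases are exhaustive).

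Verifying the $\NLC_+$ axioms is then routine: every $\Join$ subtree represents a single component $D_k$ and hence yields a connected graph; every non-root $\parallel$ created by the construction lies directly below a freshly created $\Join$; the coloring invariant (i) propagates inductively (which is precisely what justifies the in-place absorption of recolorings); and the width of $T_+$ is bounded by that of $T$ because no new colors are introduced. The depth bound follows by accounting: $\parallel$ steps and absorbed recolorings add no edges to the trees in $f$, whereas a single $\Join_S$ step adds at most two edges per tree (one for the new $\Join$, one for the $\parallel$ combining $A_k$ or $B_k$). Since any root-to-leaf path in $T$ contains at most $d$ $\Join$ nodes, the trees in $f$ have depth at most $2d$, which yields the announced bound on $T_+$. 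The part that I expect to require the most care is the $\Join_S$ case: one must introduce the right auxiliary object (here, the bipartite merge graph) so that each new $\Join$ subtree really is connected, that the splitting of the join across the $D_k$ does not create any spurious edge between different components, and that merge-graph components lying entirely on one side of the bipartition are necessarily singletons.
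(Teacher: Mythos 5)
Your proof is correct, and it takes a genuinely different route from the paper's. The paper argues \emph{top-down} by induction on the depth~$d$: it locates the nodes~$x$ closest to the root where $G[x]$ is connected but some child~$z_i$ yields a disconnected~$G[z_i]$, extracts for each connected component of~$G[z_i]$ an NLC-decomposition tree by ``trimming'' $T[z_i]$ (removing leaves outside that component), applies the induction hypothesis to these trimmed trees, and regroups the resulting $\NLC_+$-trees under a fresh $\parallel$ node below~$x$. You instead build \emph{bottom-up} a list $f(x)$ of one $\NLC_+$-tree per connected component of~$G[x]$, introducing the bipartite merge graph to identify precisely which components fuse under a $\Join_S$ step and then assembling a new $\Join$ (with a $\parallel$ on either side if needed). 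Both arguments give the same width~$w$ and depth bound~$2d$: in the paper the factor~$2$ comes from one extra $\parallel$ layer per recursion level, in yours from the invariant that only genuine $\Join_S$ steps of~$T$ add at most two levels while $\parallel$ steps and recolorings are absorbed for free. Your approach is more explicit about color bookkeeping: the paper's trimming step silently suppresses internal nodes (and thus their recoloring functions) when a subtree loses all its leaves, a detail that your invariant~(ii) together with the in-place composition of recolorings handles cleanly. The one claim you invoke without a full argument is that the components of the merge graph are in bijection with the components of $G[x]$; this is true because $\Join_S$ only creates left--right edges, so the partition of $G[x]$ into components is exactly the transitive closure of the merge relation on the components of $G[x_1]$ and $G[x_2]$, but it would strengthen the write-up to state this explicitly.
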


\begin{proof}
The proof is by induction on $d$. 
The claim is straightforward for $d=1$ as  $T$ is also an \(\NLC_+\)-decomposition tree in this case. Let $d>1$, and let us assume by induction hypothesis that the claim holds for all connected graphs having an NLC-decomposition tree of depth smaller than $d$. Let $x$ be a node of tree $T$ such that $G[x]$ (corresponding to the decomposition subtree $T[x]$) is connected, but, for at least one of the children $z_1,z_2$ of $x$, $G[z_i]$ is disconnected. Note that if no such $x$ exists, then $T$ is an \(\NLC_+\)-decomposition tree, the connectivity condition being satisfied at each node. We choose $x$ closest to the root, in the sense that no other node from $x$ to the root has this property. (Of course, there might be several such nodes $x$, none of them being ancestor of the other in the tree.)
For each $i \in \{1,2\}$, let $D_i^1, D_i^2,\dots, D_i^{p_i}$ be the connected components of $G[z_i]$. Observe that each $D_i^j$, $1\leq j \leq p_i$ has an $\NLC$-decomposition tree $T_i^j$ of depth at most the depth of $T[z_i]$, obtained by trimming from $T[z_i]$ all leaves that do not correspond to vertices of $D_i^j$. By induction hypothesis, graph $D_i^j$ has an \(\NLC_+\)-decomposition tree $T_+(i,j)$ of width $w$ and of depth at most twice the depth of $T[z_i]$.

We obtain the tree $T_+$ as follows. For node $x$, if $G[z_i]$ has a unique component $D_i^1$, we replace the subtree $T[z_1]$ by $T_+(i,1)$. If $G[z_i]$ has several components $D_i^1, D_i^2,\dots, D_i^{p_i}$, then we replace $T[z_i]$ by a subtree of root $\parallel$, with $p_i$ children, the $j$th child being the root of $T_+(i,j)$. Observe that in this way, the label of $x$ remains unchanged, and $T_+[x]$ is an  \(\NLC_+\)-decomposition tree of $G[x]$. By performing in parallel these operations on all such nodes $x$, we obtain an  \(\NLC_+\)-decomposition tree of $G$. The width of the decomposition has not changed. Moreover the depth has been increased by a factor~2 at most. Indeed, for each node $x$, we replaced $T[z_i]$ by subtrees of depth at most twice the original depth, and the node $x$ itself might have caused the addition of a new layer of children labelled $\parallel$. Therefore, the depth of $T_+[x]$ is at most $2 \cdot \max\{\mbox{depth}(T[z_1]),\mbox{depth}(T[z_2])\}$+2, hence at most twice the depth of $T[x]$. Altogether, the depth of $T_+$ is at most twice the depth of $T$. This concludes the induction step, and the proof of  Claim~\ref{claim:pourlem:ldc}. 
\end{proof}

\begin{proof}[Proof of Lemma~\ref{lem:ldc}.]
Thanks to Lemma~\ref{lem:logdepth}, every $n$-node connected graph $G$ of NLC-width $k$ admits an NLC-decomposition tree of width $k \cdot 2^{k+1}$ and depth $O(\log n)$. By Claim~\ref{claim:pourlem:ldc}, we obtain an \(\NLC_+\)-decomposition tree of the same width $k \cdot 2^{k+1}$, and still of logarithmic depth. 
\end{proof}

Let \(T\) be an $\NLC_+$ -decomposition. Recall that, for every node \(x\) of \(T\), we denote by \(h(x)\) the homomorphism class of \(G[x]\) w.r.t. property \({\Pi}\). Moreover, we set 
\begin{equation}\label{eq:sequence-color}
\textsf{color}(x)=(\textsf{color}^1(x), \dots, \textsf{color}^{k'}(x))
\end{equation}
where, for every $j\in \{1, \dots, k'\}$, $\textsf{color}^j(x)$ is the number of vertices colored \(j\) at node~$x$ after the recoloring operations. Finally, we denote by \(\textsf{exit}(x)\) the identifier of some vertex that belongs to \(G[x]\), called the \textit{exit vertex} of \(G[x]\). When \(x\) is of type \(\parallel\) then \(\textsf{exit}(x)\) is an arbitrary vertex in \(G[x]\).  When \(x\) is of type \(\Join\), we have by Definition~\ref{def:nlc+} that \(G[x]\) is connected. Let \(z^0\) and \(z^1\) be the left and right children of \(x\), respectively. Then, we choose \(\textsf{exit}(x)\) as an arbitrary node belonging to \(G[z^0]\) that is adjacent to some node in \(G[z^1]\).


\subsection{Certificate Assignment }  

As for cographs, the certificates are divided in several parts, called \emph{main messages} and \emph{auxiliary messages}. We add a third part, called \emph{service messages}. Let us fix some \(\NLC_+\)-decomposition of width \(k\) and depth \(\mathcal{O}(\log n)\) of the input graph \(G\). 

\subparagraph{Main messages.}  
These messages are used to check the local correctness of the decomposition tree. The main message of node \(u \in V\) contains the following information. 

\begin{itemize}

\item The sequence \(\textsf{path}(u) = (x_1(u), \dots, x_{d}(u))\) as defined in Section~\ref{se:cographs} for certifying cographs.

\item The sequence \(\textsf{links}(u) = (\ell_1(u), \dots, \ell_{d}(u)) \in \{0,1,\bot\}^d\), representing the sequence of edges that are to be followed to reach \(x_d(u)\) from \(x_1(u)\), similarly to Section~\ref{se:cographs}, but taking into account the presence of nodes  \(\parallel\) with large arity. More precisely, \(\ell_1(u) = \bot\), and for every $i\geq 2$,
\begin{itemize}
\item if \(x_{i-1}(u)\) is of type \(\parallel\) then \(\ell_i(u) = \bot\);
\item if \(x_{i-1}(u)\) is of type \(\Join\) then \(\ell_i(u)= 0 \) whenever \(x_{i}(u)\) is the left children of \(x_{i}(u)\), and  \(\ell_i(u)= 1\) otherwise.  
\end{itemize}

\item The sequence $\textsf{h}(u) = (h_1(u), \dots, h_d(u))$ defined in Eq.~\eqref{eq:sequence-h}.

\item The sequence $\textsf{color}(u) = (\textsf{color}_1(u), \dots, \textsf{color}_d(u))$, such that, for each \(i \in \{1, \dots, d\}\), \(\textsf{color}_i(u)\) is the sequence $\textsf{color}(x_i(u))=(\textsf{color}^1_i(u), \dots,\textsf{color}^{k'}_i(u))$ defined in Eq.~\eqref{eq:sequence-color}.

\item The sequence \(\textsf{exit}(u) = (\textsf{exit}_1(u), \dots, \textsf{exit}_d(u))\), where, for each \(i \in \{1, \dots, d\}\), \(\textsf{exit}_i(u)\) is the identifier of node \(\textsf{exit}(x_i(u))\). 
\end{itemize}

\noindent In the following, for every \(i \in [d]\), we denote
 $
 \textsf{main}(u)= (\textsf{main}_1(u), \dots, \textsf{main}_d(u))
 $
 where, for each \(i \in [d]\),
\[ 
\textsf{main}_i(u)  = \Big( x_i(u),\; \ell_i(u),\; h_i(u),\; \textsf{color}_i(u),\; \textsf{exit}_i(u) \Big).
\]

\subparagraph{Auxiliary messages.} 
These messages are used to certify the connectivity of the subtrees rooted at nodes \(x\) of type \(\Join\). Let \(z^0\) and \(z^1\) be the children of \(x\) in \(T\). An \emph{auxiliary tree associated to \(x\)}, denoted by~\(A(x)\),  is a spanning tree of \(G[x]\). In the auxiliary messages, we certify the existence of \(A(x)\) using the standard certification for trees~\cite{KormanKP10}. That is, we give each node of the tree the identifier of the root, the identifier of its parent, and its distance to the root.  We also use the auxiliary messages to certify that each vertex created in \(G[z^0]\) (respectively \(G[z^1]\)) received the same information about \(z^0\). Formally, every vertex~\(u\) receives the sequence 
\[
\textsf{aux}(u) = (\textsf{aux}_1(u), \dots, \textsf{aux}_{d}(u))
\]
where, for each \(i \in \{1, \dots, d\}\), \(\textsf{aux}_i(u)=\bot\) whenever \(x_i(u)\) is of type $\parallel$, and, otherwise, 
\[
\textsf{aux}_i(u)= (\textsf{root}_i(u), \textsf{parent}_i(u), \textsf{distance}_i(u), \textsf{childrenMain}_i(u))
\] 
where 
\begin{itemize}
 \setlength\itemsep{0em}
\item \( \textsf{root}_i(u)\) is the identifier of the node \(\textsf{exit}(x_i(u))\);
\item  \( \textsf{parent}_i(u)\) is the identifier of the parent of \(u\)  in \(A(x_i(u))\), where  \(\textsf{parent}_i = \bot\) if $u = \textsf{root}_i(u)$;
\item \( \textsf{distance}_i(u)\) is the distance  between \(u\) and \(\textsf{exit}(x_i(u))\)  in \(A(x_i(u))\);
\item \(\textsf{childrenMain}^j_i(u)=(z^j, h(z^j), \textsf{color}(z^j), \textsf{exit}(z^j))\) for \(j \in \{0,1\}\). 
\end{itemize}

\subparagraph{Service messages.}  These messages are used to check consistency in the subgraphs induced by the subtrees rooted at nodes of type \(\parallel\). In other words, they are used to handle the case of nodes in the tree~$T$ constructing non-connected subgraphs. Before explaining these messages, let us define some additional data structures. 

Let \(x\) be a node of type \(\parallel\), and let $y$ denote the parent of $x$ in $T$. Again by Lemma~\ref{lem:ldc}, for each child $z_i$, $1 \leq i \leq p$, of $x$,  the graph $G[z_i]$ corresponds to a connected component of the graph $G[x]$ (which is disconnected by construction). A \emph{service tree associated to} \(x\), denoted by \(S(x)\), 
is a Steiner tree in \(G[y]\) with terminals \(\{\textsf{exit}(z_1), \dots, \textsf{exit}(z_p)\}\), and root \(\textsf{exit}(x)\). The service messages are used to certify the existence of \(S(x)\) in a  way similar to the auxiliary tree.  We also use the service messages  to certify the consistency between \(h(z_1), \dots, h(z_p)\) and \(h(x)\).  This latter certification is slightly more complicated than for auxiliary trees because a parallel node may have an arbitrarily large number of children, and one cannot store  the identifiers, the classes and the colors of all the terminals if one want to keep the certificate size small.

Let \(u\) be a vertex in \(S(x)\). The service message of \(u\) contains the root of \(S(x)\), the parent of \(u\), and the depth of \(u\) in \(S(x)\). Let us denote by \(S(x,u)\) the subtree of \(S(x)\) rooted at~\(u\).  Furthermore, let us call \(\textsf{inCharge}_x(u)\) the set of the indices of the terminals of \(S(x)\) contained in \(S(x,u)\). Formally,
\[
\textsf{inCharge}_x(u) = \{i \in \{1, \dots, p\} \mid \textsf{exit}(z_i) \in S(x,u) \} 
\]
Now let us denote by \( G[ \textsf{inCharge}_x(u) ] \) the subgraph of \(G\) induced by the disjoint union of all graphs \(G[z_i]\), \(i \in \textsf{inCharge}_x(u)\). The service message of \(u\) includes the homomorphism class \(h(G[ \textsf{inCharge}_x(u) ])\). The correctness of this homomorphism class will be verified using only  the homomorphism classes of the children of \(u\) in \(S(x)\).  Observe that \(S(x, \textsf{exit}(x)) = S(x)\), and \[h(G[ \textsf{inCharge}_x(\textsf{exit}(x))] = h(x).\]

Observe that node  \(u\) is necessarily contained in \(G[y]\), but not necessarily in \(G[x]\). Therefore, it is possible that \(x\) does not appear in the main message of \(u\). Therefore, for each \(j \in \{0,1\}\), vertex \(u\) also receives the sequence
\[
\textsf{service}^j(u) = (\textsf{service}^j_1(u), \dots, \textsf{service}^j_{d}(u)), 
\]
where, for each \(i \in [d]\), the value of \(\textsf{service}^j_i(u)\) represents the part of the certification of the service tree \(S(x)\), where \(x\) is the left child (respectively the right child) of \(x_{i-1}(u)\). 
Specifically, if \(x\) is not of type \(\parallel\), or if \(u\) does not participate in \(S(x)\), then \(\textsf{service}^j_i(u)=\bot\). Otherwise, 
\[
\textsf{service}^j_i(u) = (\textsf{root}^j_i(u), \textsf{parent}^j_i(u), \textsf{distance}^j_i(u), \textsf{class}^j_i(u), \textsf{colorCharge}^j(u))
\]
where
\begin{itemize}
 \setlength\itemsep{0em}
\item \( \textsf{root}^j_i(u)\) is the identifier of the root of \(S(x)\); 
\item  \( \textsf{parent}^j_i(u)\) is the identifier of the parent of \(u\)  in \(S(x)\), with \(\textsf{parent}^j_i (u)= \bot\) if \(u = \textsf{root}^j_i(u)\),;
\item \( \textsf{distance}^j_i(u)\) is the distance between \(u\) and the root \(\textsf{exit}(x)\)  in \(S(x)\);
\item \(\textsf{class}^j_i(u)\) is the homomorphism class of \(G[\textsf{inCharge}_{x}(u)]\);
\item  $\textsf{colorCharge}_i^j(u)=(\textsf{colorCharge}_i^{j,1}(u), \dots, \textsf{colorCharge}_i^{j,k'}(u))$ where $\textsf{colorCharge}_i^{j,s}(u)$ is the number of vertices colored $s$ in $G[\textsf{inCharge}_{x}(u)]$, for every $s \in [k']$.
\end{itemize}

\subsection{Verification Scheme} 
The main role of the verification procedure is to check that  the \(\NLC_+\)-decomposition tree is correct, that is (1)~it corresponds to the graph $G$, and (2)~at each node $x$ of the decomposition tree, the homomorphism class for property $\Pi$ provided by the prover corresponds to $h(G[x])$.

Each vertex \(u \in V\) checks that all its messages in its certificate are correctly formatted.  Then, the verification is split in three parts: (1)~verification of the main messages, (2)~verification of the auxiliary messages, and (3)~verification of the service messages. 
 
\subparagraph{Verification of main messages. }

Each vertex \(u \in V\) verifies the following conditions:

\begin{enumerate}

\item \label{verif:1} \(x_{d}(u)\) is of type $\mathsf{newVertex}_s$ for some \(s \in [k']\);
\item  \label{verif:2} $\textsf{exit}_d(u)=\id(u)$;
\item  \label{verif:3} \(\textsf{color}^j_d(u) = 0\) for every \(j\in [k']\setminus\{s\}\), and \(\textsf{color}^s_d(u) = 1\);
 \item  \label{verif:4} \(h_d(u)\) is the homomorphism class for \({\Pi}\) of a graph equal to a single node labelled $\ell(u)$, colored $s$;
 \item  \label{verif:5} \(x_1(u)\) is of type~\(\Join\), and for each \(i\in \{2, \dots, d(u)\}\), if \(x_i\) is of type \(\parallel\) then \(x_{i-1}(u)\) is of type \(\Join\);
 \item \label{verif:5.1} \(h_1(u)\) is an \emph{accepting} homomorphism class for \({\Pi}\), i.e., graphs of \(\NLC_{+}\)-width \(k'\) having this class satisfy~\(\Pi\).
\end{enumerate}
Let us define \(\mathsf{currentcolor}(u,i)\) in the same way that we did in the verification of cographs. Moreover, for a vertex \(v\in V\), let us denote \(i^*=\textsf{index}(u,v)\) as the maximum index \(i\in [d(u)]\) such that \(\ell_j(u) = \ell_j(v)\) for every \(j \leq i\). Then, \(u\) computes \(\textsf{index}(u,v)\) for each of its neighbors \(v\in N(v)\), and checks:

\begin{enumerate}
\setcounter{enumi}{6}
\item \label{verif:6} \(\textsf{main}_i(u) = \textsf{main}_i(v)\) for every \(i \leq i^*\).

\item \label{verif:7} \(x_{i^*}(u)\) is of type \(\Join\), and 
\begin{itemize}
\item   if \(\ell_{i^*+1}(u) = 0\) then \(u\) checks that \(x_{i^*}\) contains the join operation $\Join_S$ with \[(\mathsf{currentcolor}(u,i^*), \mathsf{currentcolor}(v,i^*)) \in S\]
\item   if \(\ell_{i^*+1}(u) = 1\) then \(u\) checks that  \(x_{i^*}\) contains the join operation $\Join_S$ with \[(\mathsf{currentcolor}(v,i^*),\mathsf{currentcolor}(u,i^*)) \in S.\]
\end{itemize}

\item \label{verif:11} For each \(j \in [k']\), vertex \(u\) checks that it has exactly \(\textsf{color}^j_{i^*+1}(v)\) neighbors \(w\) such that \(\textsf{index}(u,w) = i^*\), \(\ell_{i^*+1}(w) \neq \ell_{i^*+1}(u)\), and \(\mathsf{currentcolor}(w,i^*+1) = j\);


\item \label{verif:9} Vertex $u$ then considers the homomorphism classes $h_{i^*}(u)$, $h_{i^*+1}(u)$ and $h_{i^*+1}(v)$. (Recall that these encode the homomorphism classes of relation ${\Pi}$ for graphs $G[x_{i^*}(u)]$, $G[x_{i^*+1}(u)]$ and $G[x_{i^*+1}(v)]$, respectively.) 
\begin{itemize}
\item If \(\ell_{i*+1} = 0\) then $x_{i^*+1}(u)$ must be the left child of $x_{i^*}(u)$, and vertex  $u$ checks that \[h_{i^*}(u) = \odot_{\mathsf{recolor}_R} (\odot_{\Join_S}(h_{i^*+1}(u), h_{i^*+1}(v))).\] 
\item If  \(\ell_{i*+1} = 1\) then $x_{i^*+1}(u)$ must be the right child of $x_{i^*}(u)$, and vertex  $u$ checks  that \[h_{i^*}(u) = \odot_{\mathsf{recolor}_R} (\odot_{\Join_S}(h_{i^*+1}(v)),h_{i^*+1}(u)).\] 
\end{itemize}

\item \label{verif:10}  For each \(j\in [k']\), \[\textsf{color}^j_{i^*}(u) = \sum_{s \in R^{-1}(j)}  \left(\textsf{color}^{s}_{i^*+1}(u) + \textsf{color}^{s}_{i^*+1}(v)\right).\]

\item  \label{verif:12} Finally, for each \(i \in [d(u)]\) such that \(x_{i}(u)\) is of type \(\Join\), if \(\textsf{exit}_i(u)=\id(u)\), then \(u\) checks that \(\ell_{i+1}(u) = 0\), and that there exists \(v\in N(u)\) such that \(\textsf{index}(u,v) = i\) and \(\ell_{i+1}(v) = 1\).
\end{enumerate}

\subparagraph{Verification of auxiliary messages.}  

For every \(i \in [d(u)]\) vertex \(u\) checks that \(\textsf{aux}_i \neq \bot \) if and only if \(x_i(u)\) is of type \(\Join\). Let us suppose now that \(\textsf{aux}_i \neq \bot\). Then \(u\) checks the following conditions

\begin{enumerate}
\setcounter{enumi}{12}
\item \label{verif:13} \(\textsf{root}_i(u) = \textsf{exit}_i(u)\);
\item  \label{verif:14} if \( \textsf{parent}_i(u) = \perp\) then \(\textsf{root}_i(u)=\id(u)\);
 \item  \label{verif:15} if \(v = \textsf{parent}_i(u) \neq \bot\) then \(v \in N(u)\), \(\textsf{index}(u,v) \geq i\), \(\textsf{aux}_i(v) \neq \perp\), \(\textsf{root}_i(v) = \textsf{root}_i(u)\), \(\textsf{childrenMain}^0_i(u) = \textsf{childrenMain}^0_i(v) \),  \(\textsf{childrenMain}^1_i(u) = \textsf{childrenMain}^1_i(v) \), and \[\textsf{distance}_i(v) = \textsf{distance}_i(u) -1;\]
\item  \label{verif:16} if \(j = \ell_{i+1}(u) \) then \(\textsf{childrenMain}^j_i(u) = \textsf{main}_{i+1}(u)\).

\end{enumerate}

\subparagraph{Verification of service messages.}   

For every \(i \in [d(u)]\) and \(j \in \{0,1\}\) for which \(\textsf{service}^j_i(u)\neq \perp\), \(u\) checks the following conditions:

\begin{enumerate}
\setcounter{enumi}{16}
\item \label{verif:17}  If \(\ell_i(u) = j\) then \(x_i(u)\) is of type \(\parallel\).

\item  \label{verif:18} If \(v = \textsf{parent}^j_i(u) \neq \perp\) then \(v \in N(u)\), \(\textsf{index}(u,v) \geq i-1\), \(\textsf{service}^j_i(v) \neq \perp\), \(\textsf{root}^j_i(v) = \textsf{root}^j_i(u)\) and \(\textsf{distance}^j_i(v) = \textsf{distance}_i^j(u) -1\).
\item  \label{verif:19} If \( \textsf{parent}^j_i(u) = \perp\) then \(\textsf{root}^j_i(u)=\id(u)\).

\end{enumerate}
If \(u\) has no neighbors \(w \in V\) such that \(\textsf{parent}^j_i(w) = u\), then \(u\) deduces that it is a leaf of the tree \(S(x_i^j)\). In that case, \(u\) checks the following conditions:
\begin{enumerate}
\setcounter{enumi}{19}

\item \label{verif:20}  \(u\) is a terminal, so \(\textsf{exit}_{i+1}(u)=\id(u)\);
\item \label{verif:21} \(\textsf{class}_i^j(u) = h_{i+1}(u)\) and \(\textsf{colorCharge}_i^j(u) = \textsf{color}_{i+1}(u)\).

\end{enumerate}
If $u$ is not a leaf of \(S(x_i^j)\), then \(u\) computes the set \(\textsf{children}_i^j(u)\) of nodes  \(w\in V\) such that \({\textsf{parent}_i^j(w)=u}\). Then \(u\) checks the following conditions:

\begin{enumerate}
\setcounter{enumi}{21}
\item \label{verif:23} For each \(s\in [k']\), let \(\gamma = \textsf{color}_{i+1}^s(u)\) if \(u = \textsf{exit}_{i+1}(u)\) and \(\ell_i(u) = j\), and let  \(\gamma = 0\) otherwise.  Then \(u\) checks that: 

\[\textsf{colorCharge}^{j,s}(u)  = \sum_{w \in \textsf{children}_i^j(u)} \textsf{colorCharge}^{j,s}(w) + \gamma\]

\item \label{verif:24} If \(\textsf{children}_i^j(u)\) contains a single vertex \(w\), then \(u\) checks that \(\textsf{class}_i^j(u) = \textsf{class}_i^j(w)\).  
\item \label{verif:25} If \(\textsf{children}_i^j(u)\) contains two or more vertices, then \(u\) defines an arbitrary order of the vertices in \(\textsf{children}_i^j(u)\), namely \(w_1, \dots, w_p\). Then, \(u\) defines a sequence of homomorphism classes \(c_1, \dots, c_{p}\) where, \(c_1 = \textsf{class}_i^j(w_1)\), and for each \(i \in \{2, \dots, p\}\):
\[c_i = \odot_{\parallel} (c_{i-1},  \textsf{class}_i^j(w_{i})).\]
where \(\odot_{\parallel}\) corresponds to the function \(\odot_{\Join_{\emptyset}}\). Finally checks that \(c_p = \textsf{class}_i^j(u) \).
\end{enumerate}
 If \(u = \textsf{exit}_{i}(u)\), then \(u\) checks the following additional conditions for each \(j \in \{0,1\}\):
\begin{enumerate}
\setcounter{enumi}{24}
\item \label{verif:26} \(\textsf{parent}_i^j(u) = \bot\), and  \(\textsf{distance}_i^j(u) = 0\);
\item \label{verif:27} \(\textsf{class}_i^j(u) = h_{i}(u)\), and \(\textsf{colorCharge}_i^j(u) = \textsf{color}_{i}(u)\).
\end{enumerate}


%
\subsection{Completeness and Soundness} 
The completeness follows directly from the existence of an $\NLC_+$-decomposition $T$ of $G$ as described in Lemma~\ref{lem:ldc} and from the construction of the main, auxiliary and service messages, based on this decomposition. 

For soundness, assume that the verification protocol accepts at every vertex.  Let us define the set of all main messages as \(\textsf{mainset} = \{\textsf{main}(u) : u \in V\}\), and let \(d^* = \max_{u\in V} d(u)\). For every \(M \in \textsf{mainset}\) we denote \(d(M) = d(u)\) where \(u\) is such that \(\textsf{main}(u) = M\). Now, we define  \(\textsf{Prefix}(M,i)\) as the set of vertices \(v\in V\) that have the same first \(i\) values in their main messages. Formally,  
\[
\textsf{Prefix}(M,i) = \{v \in V: M_j = \textsf{main}_j(v),  \textrm{ for every } 1\leq  j \leq  i\}.
\] 
By condition~\ref{verif:2} in the verification protool, \(\textsf{exit}_{d}(u)=\id(u)\). Thus \(\textsf{main}(u) \neq \textsf{main}(v)\) for every pair of vertices \(u \neq v\). Therefore, \(\textsf{Prefix}(\textsf{main}(u),d(u)) = \{u\}\). Also, by condition~\ref{verif:6} we have that \(\textsf{Prefix}(M,1) = V\). Given \(M \in \textsf{mainset}\),  and \(i \in [d(M)]\), let 
\[
M_i = (x_i(M), \ell_i(M), h_i(M), \textsf{color}_i(M), \textsf{exit}_i(M))
\] 
be the list \(\textsf{main}_i(v)\) for a vertex \(v\) such that \(\textsf{main}(v) = M\).

\begin{lemma}\label{lem:completeness} 
For every \(M \in \textsf{mainset}\), and every \(i \in [d(M)]\), there exists a \(\NLC_+\)-decomposition tree \(T[M,i]\) of \(G[M,i] = G[\textsf{Prefix}(M,i)]\)  such that, for all \(u \in \textsf{Prefix}(M,i)\), the following holds. 
\begin{itemize}
\item For every $j$ with \(i \leq j \leq d(u)\), \(x_j(u)\) contains the operations in the \(j\)-th node in path from the root of \(T[M,i]\) to the node where \(u\) is created;
\item \(h_i(u) = h(G[M,i])\);
\item  For every \(s\in [k']\), \(\textsf{color}^s_i(u) \) is the number of vertices colored \(s\) in the root of \(T[M,i])\).
\end{itemize}
\end{lemma}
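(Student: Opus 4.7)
The plan is to proceed by downward induction on~$i$, from $i = d(M)$ down to $i = 1$. For the base case $i = d(M)$, condition~\ref{verif:1} forces $x_d(u) = \mathsf{newVertex}_s$ for some $s \in [k']$, and combining this with condition~\ref{verif:2} one obtains $\textsf{Prefix}(M, d(M)) = \{u\}$, where $u$ is the unique vertex with $\textsf{main}(u) = M$. The tree $T[M, d(M)]$ is then a single leaf of type $\mathsf{newVertex}_s$; conditions~\ref{verif:3} and~\ref{verif:4} guarantee that $\textsf{color}_d(M)$ and $h_d(M)$ match, respectively, the colour count and the homomorphism class of this one-vertex labelled graph.

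For the inductive step, assume the statement at level $i+1$ and distinguish two cases according to the type of $x_i(M)$, which by condition~\ref{verif:5} is either $\Join$ or $\parallel$. In the $\Join$ case, I partition $\textsf{Prefix}(M, i)$ into $P_0$ and $P_1$ according to the value of $\ell_{i+1}$. By condition~\ref{verif:6}, each $P_j$ equals $\textsf{Prefix}(M_j, i+1)$ for some main message $M_j$, so the induction hypothesis supplies decomposition trees $T[M_j, i+1]$ of $G[P_j]$ with the correct data. Condition~\ref{verif:7} certifies that every cross-edge corresponds to a pair in $S$, while condition~\ref{verif:11} certifies that every pair in $S$ gives rise to the prescribed number of cross-edges, so the set of edges of $G$ between $P_0$ and $P_1$ coincides exactly with those produced by $\Join_S$. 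Conditions~\ref{verif:9} and~\ref{verif:10} then propagate the homomorphism class and the colour counts through $\odot_{\mathsf{recolor}_R} \circ \odot_{\Join_S}$, and conditions~\ref{verif:13}--\ref{verif:16} certify a spanning tree of $G[M, i]$, ensuring the connectivity required of a $\Join$ node of an $\NLC_+$-decomposition. Thus $T[M, i]$ is obtained by joining $T[M_0, i+1]$ and $T[M_1, i+1]$ below a $\Join$ node labelled with the data of $x_i(M)$.

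In the $\parallel$ case, I partition $\textsf{Prefix}(M, i)$ into equivalence classes according to $\textsf{main}_{i+1}$; by condition~\ref{verif:6}, each class has the form $\textsf{Prefix}(M', i+1)$ and the induction hypothesis applies. A key observation is that no edge of $G$ joins two vertices in distinct classes: indeed, for any edge $\{u, v\}$ with $u, v \in \textsf{Prefix}(M, i)$ and $\ell_{i+1}(u) \neq \ell_{i+1}(v)$, condition~\ref{verif:7} would force $x_i(M)$ to be of type $\Join$, contradicting our assumption. The tree $T[M, i]$ is therefore obtained by gathering the trees $T[M', i+1]$ as children of a $\parallel$ node labelled with the data of $x_i(M)$. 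For the homomorphism class and the colour counts at $x_i(M)$, I invoke the service tree $S(x_i(M))$: conditions~\ref{verif:17}--\ref{verif:21} together with conditions~\ref{verif:23}--\ref{verif:25} propagate $\textsf{class}$ and $\textsf{colorCharge}$ upward along $S(x_i(M))$ by repeated application of $\odot_{\parallel}$, and conditions~\ref{verif:26}--\ref{verif:27} close the bottom-up computation at the root $\textsf{exit}(x_i(M))$, yielding $h_i(M) = h(G[M, i])$ and the required colour counts.

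The main obstacle lies in the $\parallel$ case. Unlike the $\Join$ case, where the aggregation of children data is witnessed locally by an auxiliary tree lying entirely inside $G[M, i]$, the subgraph $G[M, i]$ is disconnected at a $\parallel$ node, so the verification of the homomorphism class relies on a Steiner tree $S(x_i(M))$ which strictly leaves $G[M, i]$ and lives in the graph induced by the $\Join$ parent of $x_i(M)$. One must carefully check that the $\textsf{inCharge}$ subsets along $S(x_i(M))$ realise a well-defined bottom-up parsing of the collection of children classes, and that the values $\textsf{class}_i^j$ and $\textsf{colorCharge}_i^j$ carried in the service messages cannot be altered by the prover so as to produce a spurious homomorphism class at the root without being detected by some vertex.
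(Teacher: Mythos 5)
Your proof follows the same overall shape as the paper's: downward induction on $i$ with a case split on the type of $x_i(M)$. But there is a genuine gap in your $\Join$ case. You claim that condition~\ref{verif:6} alone shows that $\textsf{Prefix}(M,i)$ splits into exactly $\textsf{Prefix}(M_0,i+1)$ and $\textsf{Prefix}(M_1,i+1)$. Condition~\ref{verif:6} is only checked between \emph{adjacent} vertices, so it tells you nothing about two vertices of $P_0$ that are not neighbours in $G$; in particular, it does not rule out that $P_0$ is a union of several incompatible prefix classes at level $i+1$. The paper closes this hole by first using the auxiliary spanning tree of $G[\textsf{Prefix}(M,i)]$ (certified by conditions~\ref{verif:13}--\ref{verif:15}) to show that all vertices of $\textsf{Prefix}(M,i)$ agree on $\textsf{childrenMain}^0_i$ and $\textsf{childrenMain}^1_i$, since condition~\ref{verif:15} propagates this information edge by edge along the tree; then condition~\ref{verif:16} anchors $\textsf{childrenMain}^{\ell_{i+1}(w)}_i(w)$ to $\textsf{main}_{i+1}(w)$, giving the partition. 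Condition~\ref{verif:12} is also needed to exhibit the two representatives $u^0$ (the exit vertex) and $u^1$, hence the two non-empty prefix classes $M^0, M^1$. You cite conditions~\ref{verif:13}--\ref{verif:16} only as giving ``the connectivity required of a $\Join$ node,'' missing that they are also precisely what makes the partition claim true and what identifies the two children.

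There is also a minor slip in your $\parallel$ case. Under a $\parallel$ node, $\ell_{i+1}(u)=\bot$ for every vertex $u$, so ``$\ell_{i+1}(u)\neq\ell_{i+1}(v)$'' can never hold. The correct criterion is that $u$ and $v$ belong to different classes iff $\textsf{main}_{i+1}(u)\neq\textsf{main}_{i+1}(v)$; for adjacent such $u,v$ one then gets $\textsf{index}(u,v)=i$ exactly (it is at least $i$ since both lie in $\textsf{Prefix}(M,i)$, and it is at most $i$ or else condition~\ref{verif:6} would fail), whence condition~\ref{verif:7} requires $x_i$ to be of type $\Join$, the desired contradiction. Your conclusion is right but the triggering condition you state is vacuous. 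The remainder of your $\parallel$-case sketch (bottom-up aggregation along the service tree via conditions~\ref{verif:17}--\ref{verif:27}) matches the paper and your closing caveat correctly locates where care is needed in that case; just be aware that the $\Join$ case, which you treated as routine, is where the proof as written actually breaks.
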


\begin{proof}
The proof is by induction on \(i\), in decreasing order for each \(M\). Let us fix \(M \in \textsf{mainset}\). The base case is \(i = d(M)\), and \(\textsf{Prefix}(M,d(M))\) is a single vertex \(u\) satisfying \(\textsf{main}(u) = M\).  In this case the lemma is true by conditions~\ref{verif:1}-\ref{verif:4}. For the inductive case, let us suppose that there exists \(t >1 \)  such that the lemma is true for every \(M \in \textsf{mainset}\), and \(i \in \{t, \dots, d(M)\}\), and let us show that the lemma holds an arbitrary  pair \(M \in \textsf{mainset}\) and \(i = t-1\). 

\medskip

\noindent -- Let us suppose first that \(x_{i}(M)\) is of type \(\Join\). Consider the set
\[
\beta_i(M) =  \{(\textsf{root}_i(u), \textsf{parent}_i(u),  \textsf{distance}_i(u)) \mid u \in \textsf{Prefix}(M,i)\}.
\] 
From conditions~\ref{verif:13}-\ref{verif:15}, we have that \(\beta_i(M)\) certifies a spanning tree of \(\textsf{Prefix}(M,i)\) rooted at \(\textsf{exit}_i(M)\). In particular, we have that \(\textsf{Prefix}(M,i)\) induces a connected subgraph of \(G\). Let us call \(u^0\) the vertex with identifier \(\textsf{exit}_i(M)\). By condition~\ref{verif:12}, we have \(\ell_{i+1}(u^0) = 0\), and there exists a vertex \(u^1 \in N(u)\) such that \(\textsf{index}(u,v) = i\) and \(\ell_{i+1}(u^1) = 1\). Let us call \(M^0= \textsf{main}(u^0)\) and \(M^1 = \textsf{main}(v^1)\). Our candidate for \(T[M,i]\) is the tree defined by a root \(x_i(M)\) with two children. The left children induces the subtree \(T[M^0, i+1]\) while the right children induces the subtree \(T[M^1,{i+1}]\).

Observe first that \(\textsf{Prefix}(M, i)\) can be partitioned in \(\textsf{Prefix}(M^0, i+1)\) and  \(\textsf{Prefix}(M^1, i+1)\). Indeed, let \(w\) be a vertex in \(\textsf{Prefix}(M, i)\), and let \(j =\ell_{i+1}(w) \). By condition~\ref{verif:16},   we have that \(\textsf{childrenMain}^j_i(w)=\textsf{main}_{i+1}(w)\). By condition~\ref{verif:15}, and the fact that \(\textsf{Prefix}(M, i)\) is connected, we have that \(\textsf{main}_{i+1}(w) = \textsf{main}_{i+1}(w')\) for every \(w' \in \textsf{Prefix}(M, i)\) for which \(\ell_{i+1}(w') = j\). In particular \(\textsf{main}_{i+1}(w) = \textsf{main}_{i+1}(u^j)\), and therefore \(w\)  belongs to \(\textsf{Prefix}(M^j, i+1)\).

By the induction hypothesis, for each \(j\in \{0,1\}\), \(T[M^j, i+1]\) is an \(\NLC_+\) decomposition tree of \(G[M^j, i+1]\). Moreover, \(h_{i+1}(M^j) = h(G[M^j, i+1])\), and, for each \(s\in [k']\), we have that \(\textsf{color}^s_{i+1}(M^j)\) is the number of vertices colored \(s\) in the root of \(T[M^j, i+1]\).
Let \(S\) be the set of join operations defined in \(x_i(M)\), and let \(u \in \textsf{Prefix}(M^0,i+1)\) and \(v \in \textsf{Prefix}(M^1,i+1)\) be such that \(\{u,v\} \in E(G[\textsf{Prefix}(M,i)])\). By condition~\ref{verif:7}, we have that 
\[
(\textsf{currentcolor}(u,i), \textsf{currentcolor}(v,i)) \in S.
\] 
Moreover, by condition~\ref{verif:11}, vertex \(u\) has exactly \(\textsf{color}_{i+1}^s(M^1)\) neighbors  \(v \in \textsf{Prefix}(M^1,i+1)\) such that \(\textsf{currentcolor}(v,i) = s\). This implies that  the join operations defined in \(x_i(M)\) create exactly the set of edges \(\{u,v\} \in E(G[M,i])\) such that \(\textsf{index}(u,v) = i\), and no other edges. It follows that \(T[M,i]\) is an \(\NLC_+\)-decomposition tree of \(G[M,i]\). Finally, by condition~\ref{verif:9} applied at \(u^0\), we have that \(h_i(M) = h(G[M,i])\), and, by condition~\ref{verif:10} applied at \(u^0\), we have that, for every \(s\in [k']\), \(\textsf{color}_i^s(M)\) is the number of vertices colored \(s\) in the root of \(T[M,i]\).

\medskip 

\noindent -- Let us suppose now  that \(x_{i}(M)\) is of type \(\parallel\), and let us define the following subset of \textsf{mainset}:
\[\textsf{mainPrefix}(M,i) = \{ M' \in \textsf{mainset} \mid \textsf{Prefix}(M,i) = \textsf{Prefix}(M',i) \}\]
Observe that from the induction hypothesis, we have that, for each \(M' \in \textsf{mainPrefix}(M,i)\), there is an \(\NLC_{+}\)-decomposition tree  \(T[M',i+1]\) of \(G[M',i+1]\), satisfying the conditions of the lemma.
Our candidate for \(T[M,i]\) is a tree rooted in  \(x_{i}(M)\), where the children of \(x_{i}(M)\) induce the set of trees 
\[
\{ T[M', i+1]: M' \in \textsf{mainPrefix}(M,i) \}.
\]
Indeed,  every vertex \(u\) in \(\textsf{Prefix}(M,i)\) satisfies that \(x_i(u) = x_i(M)\). Then, by condition~\ref{verif:7},  \(u\) cannot have a neighbor \(v\) such that \(\textsf{index}(u,v) = i\). Therefore \(T[M,i]\) does define an \(\NLC_{+}\) decomposition of  \(G[M,i]\). Let \(j = \ell_{i}(M)\), and let us consider the set
\[
\alpha_i^j(M) =  \{(\textsf{root}^j_i(u), \textsf{parent}^j_i(u),  \textsf{distance}^j_i(u)) \mid u \in \textsf{Prefix}(M,i-1) \; \text{and} \; \textsf{service}_i^j(u) \neq \bot\}.
\] 
Let \(u\) be a vertex such that \(\textsf{main}(u) \in \textsf{mainPrefix}(M,i)\). From condition~\ref{verif:5},  \(x_{i+1}(u)\) is of type \(\Join\). Hence, \(\textsf{Prefix}(\textsf{main}(u), i+1)\) induces a connected set of vertices in \(G\). Therefore, by conditions~\ref{verif:17}-\ref{verif:20} and~\ref{verif:26},  \(\alpha_i^j(M)\) certifies a Steiner tree in \(G\) rooted at \(\textsf{exit}_{i}(M)\), with set of terminals
\[
\textsf{terminals}_i(M) =  \{u \in V \mid  \exists  M' \in \textsf{mainPrefix}(M,i), \id(u)= \textsf{exit}_{i+1}(M')\}. 
\]  
From conditions~\ref{verif:21} and \ref{verif:24}-\ref{verif:27},   \(h_i(M)\)  is the homomorphism class of \({\Pi}\) obtained from the disjoint union function  \({\odot}_{\parallel}\) over the set  
\[
\{h_{i+1}(M'): M' \in \textsf{mainPrefix}(M,i) \}.
\] 
By the induction hypothesis, it follows that \(h_i(M)\) is the homomorphism class of \(G[M',i+1]\).  From conditions~\ref{verif:21}, \ref{verif:23},  and~\ref{verif:27}, we have that, for each \(s \in [k']\),
\[
\textsf{color}^s_{i}(M) = \sum_{u~ \in ~\textsf{terminals}_i(M)} \textsf{color}^{s}_{i+1}(u).
\]
Again, by the induction hypothesis, \(\textsf{color}_i^s(M)\)  is the number of vertices colored \(s\) in the root of~\(T[M,i]\).
\end{proof}

Finally, observe that by condition \(\ref{verif:6}\), for every $M$ and $M' \in \textsf{mainset}$, we have \(M_1 = M'_1\). Then, \(\textsf{Prefix}(M,1) = V\). Applying Lemma~\ref{lem:completeness} to an arbitrary \(M \in \textsf{mainset}\) and for \(i = 1\), we deduce that  \(T[M,1]\) is an \(\NLC_+\)-decomposition tree of \(G\). It follows that \(G\) is a graph of \(\NLC_+\)-width~\(k'\). Moreover, also by Lemma~\ref{lem:completeness}, we have \(h_1(M) = h(G)\).   Finally, by condition~\ref{verif:5.1} we conclude that \(G\) satisfies property \(\Pi\). \hfill\qed

\subsection{Certificate Size} 
For the  certificate size, let \(u\) an arbitrary vertex, let \(i \in d(u)\), and let \(j \in \{0,1\}\). Observe that each item of \(\textsf{main}_i(u)\), \(\textsf{aux}_i(u)\), and \(\textsf{service}^j_i(u)\) can be encoded with \(\mathcal{O}(\log n)\) bits.  By Lemma~\ref{lem:ldc}, we have that \(d(u)= \mathcal{O}(\log n)\). Therefore, the certificate of \(u\) is of size \(\mathcal{O}(\log^2 n)\).


\section{Certifying Solutions of Optimization Problems}\label{se:optim}

In this section we extend our proof-labelling scheme to optimization problems expressible by  $\MSO_1$ predicates over graphs and vertex sets, on graphs of bounded clique-width. More precisely we consider $\MSO_1$ predicates $\Pi(G,X)$, over graphs $G=(V,E)$ and vertex subsets $X \subseteq V$. An optimization problem for $\Pi$ on graph $G$ consists in finding the maximum (or minimum) size vertex subset $X$ such that $\Pi(G,X)$ is true. For example, one can easily describe an  $\MSO_1$ predicate $\Pi(G,X)$ expressing that "$X$ is an independent set of $G$", or "$X$ is a dominating set of $G$". Hence problems like minimum dominating set or maximum independent set are particular cases of optimization problems expressible by  $\MSO_1$ predicates over graphs and vertex sets.

\begin{theorem}\label{theo:main-OPT}
Let $k$ be a non-negative integer. For every $\MSO_1$ predicate~$\Pi$ on vertex-labeled graphs and vertex-sets, and every $n$-node graph~$G=(V,E)$ with $\cw(G)\leq k$, and every $X\subseteq V$, there exists a PLS certifying that $X$ is the set of minimum (or maximum) size such that $(G,X)$ satisfies~$\Pi$ among all sets $Y$ such that $(G,Y)$ satisfies~$\Pi$, with $O(\log^2 n)$-bit certificates. 
\end{theorem}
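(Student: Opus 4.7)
The plan is to reduce the optimization problem to Theorem~\ref{theo:main} together with a bottom-up dynamic-programming extension on the $\NLC_+$-decomposition tree that tracks, for every homomorphism class, the optimum size of a subset achieving that class on each subgraph. First, encode the candidate set $X\subseteq V$ as a vertex labeling $\ell_X:V\to\{0,1\}$ with $\ell_X(v)=1$ iff $v\in X$. The predicate ``$(G,X)$ satisfies $\Pi$'' then becomes an $\MSO_1$ property $\widetilde\Pi$ of the labeled graph $(G,\ell_X)$. Applying Theorem~\ref{theo:main} to $\widetilde\Pi$ already produces, on $O(\log^2 n)$-bit certificates, an $\NLC_+$-decomposition tree $T$ of logarithmic depth equipped with all the main, auxiliary, and service messages of Section~\ref{se:cwd}, and with the homomorphism class $h(x)=h(G[x],\ell_X\cap V(G[x]))$ at every node~$x$ of~$T$ consistently verified. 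This part certifies admissibility of~$X$.

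To certify optimality, I extend the DP of Section~\ref{sse:reg} into an \emph{optimization DP}. For every node~$x$ of~$T$ and every class $c\in\mathcal{C}$, let
\[
\mathrm{opt}_c(x)=\min\{|Y|\,:\,Y\subseteq V(G[x])\ \text{and}\ (G[x],Y)\ \text{has class}\ c\},
\]
with the convention $\min\emptyset=+\infty$ (and symmetrically $\max$, $-\infty$ for maximization problems). The values $\mathrm{opt}_c(x)$ satisfy the same recurrences as the classes themselves: at a leaf creating a vertex $v$ colored~$i$, only the two classes corresponding to $v\notin Y$ and $v\in Y$ have finite value, namely $0$ and $1$; at an internal node $x=\mathrm{recolor}_R(\Join_S(y_0,y_1))$,
\[
\mathrm{opt}_c(x)=\min\Big\{\mathrm{opt}_{c_0}(y_0)+\mathrm{opt}_{c_1}(y_1)\,:\,\odot_{\mathrm{recolor}_R}(\odot_{\Join_S}(c_0,c_1))=c\Big\},
\]
and the analogous pairwise-associative rule holds at parallel nodes via $\odot_{\parallel}$. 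The prover augments each occurrence of a homomorphism class in main, auxiliary, and service messages with the full vector $(\mathrm{opt}_c(x))_{c\in\mathcal{C}}$; since $|\mathcal{C}|$ depends only on $k$ and $\Pi$ and each entry is an integer in $[0,n]\cup\{+\infty\}$ encoded on $O(\log n)$ bits, the certificate size stays $O(\log^2 n)$.

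The verifier runs exactly the scheme of Section~\ref{se:cwd}, but everywhere an $\odot$-composition of classes is checked (conditions~\ref{verif:9}, \ref{verif:24}, \ref{verif:25}), it additionally checks the corresponding minimization recurrence on the opt-vectors, and at each leaf $x_d(u)$ it checks that the opt-vector matches the one induced by the local label $\ell_X(u)$. The prover also announces in the root's main message a pair $(m,c^\star)$ with $c^\star$ an accepting class, and the verifier checks that $\mathrm{opt}_{c^\star}(\mathrm{root})=m$ and $m\leq\mathrm{opt}_c(\mathrm{root})$ for every accepting~$c$. To tie $m$ to $|X|$, we propagate along $T$ one additional $O(\log n)$-bit counter $\sigma(x)=\sum_{v\in V(G[x])}\ell_X(v)$, verified by the same additive consistency conditions as $\textsf{color}$ and $\textsf{colorCharge}$, and require $\sigma(\mathrm{root})=m$.

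Completeness is immediate: on a valid instance the prover supplies the true vectors, and all checks pass. Soundness follows from a direct adaptation of Lemma~\ref{lem:completeness}: by induction from the leaves up, the vector $\mathrm{opt}(x)$ at each node of the reconstructed decomposition is forced to equal the true optimum for $G[x]$, so $m=\mathrm{opt}_{c^\star}(\mathrm{root})$ is the true optimum over admissible sets, and $\sigma(\mathrm{root})=|X|$ forces $X$ itself to attain this optimum. The main obstacle is handling parallel nodes of unbounded arity, where the opt-convolution over many children must be assembled incrementally along the service tree; this is done exactly as for $\textsf{class}$ in conditions~\ref{verif:24}--\ref{verif:25}, associativity of the pairwise minimization $(u,v)\mapsto\bigl(\min_{\odot_\parallel(c_1,c_2)=c} u_{c_1}+v_{c_2}\bigr)_c$ ensuring that any left-to-right pairing yields the same final vector. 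This establishes Theorem~\ref{theo:main-OPT}.
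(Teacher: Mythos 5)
Your proposal is correct and takes essentially the same approach as the paper's proof: the paper likewise extends NLC-regularity to pairs $(G,X)$ (Definition~\ref{de:regOpt}), augments each message carrying a homomorphism class with the weight of the selected vertices (your $\sigma$) and the per-class optimum vector (your $(\mathrm{opt}_c(x))_{c\in\mathcal C}$, the paper's $\textsf{MaxW}$), updates these via the same recurrences (Proposition~\ref{pr:MaxW}), and checks at the root that the selected set's weight matches the best accepting class. The only cosmetic difference is that the paper uses signed vertex weights to unify minimization and maximization, while you treat the two cases symmetrically.
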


Theorem~\ref{theo:main-OPT} actually also extends to $\MSO_1$ predicates $\Pi(G,X_1,X_2,\dots,X_q)$ over graphs and a bounded number of vertex subsets, and the optimization can be performed with respect to some linear evaluation function over the sizes of $X_1, \dots, X_q$. Such optimization problems are called $LinEMSOL$ in~\cite{CourcelleMR00}. However, since most natural optimization problems are expressed on a single vertex subset, and in order to keep the notations simpler, we focus here solely on predicates of type $\Pi(G,X)$.

\subsection{Proof of Theorem~\ref{theo:main-OPT}}
\label{app:theo:main-OPT}

Operations $\Join_S$ and $\mathsf{recolor}_R$ directly extend to pairs of graphs and vertex subsets: $(G_1,X_1) \Join_S (G_2,X_2)$ corresponds to pair $(G_1 \Join_S G_2, X_1 \cup X_2)$, and recoloring $\mathsf{recolor}_R(G,X)$ is simply the pair $(\mathsf{recolor}_R(G),X)$. 



The notion of regularity (Definition~\ref{de:reg}) extends to predicates over graphs and vertex subsets, moreover such $\MSO_1$ predicates are regular on graphs of bounded NLC-width. (Again, this even holds for predicates with a bounded number of vertex subsets~\cite{CourcelleMR00}).

\begin{definition}\label{de:regOpt}
Let $\Pi(G,X)$ be a predicate over graphs and vertex sets. Predicate $\Pi$ is called  \emph{NLC-regular} if, for any value $k$, we can associate a finite set $\cC$ of \emph{homomorphism classes} and a \emph{homomorphism function} $h$, assigning to each graph $G \in \NLC_k$ and each vertex subset $X$ of $G$ a class $h(G,X) \in \cC$ such that:
\begin{enumerate}
\item If $h(G_1,X_1) = h(G_2,X_2)$ then $\Pi(G_1,X_1) = \Pi(G_2,X_2)$.
\item For each composition operation $ \Join_S $ of arity 2 there exists a function $\odot_{\Join_S}: \cC \times \cC \rightarrow \cC$ such that, for any two $\NLC_k$ graphs $G_1$ and $G_2$ with vertex subsets $X_1$, $X_2$,
 $$h((G_1,X_1) \Join_S (G_2,X_2)) = \odot_{\Join_S }(h(G_1,X_1),h(G_2,X_2)).$$
 \item For each composition operation $\mathsf{recolor}_R$ of arity 1 there is a function $\odot_{\mathsf{recolor}_R}: \cC \rightarrow \cC$ such that, for any $\NLC_k$ graph $G$ and vertex subset $X$,
$$h(\mathsf{recolor}_R(G,X)) = \odot_{\mathsf{recolor}_R}(h(G,X)).$$
\end{enumerate}
Moreover functions  $\odot_{\Join_S}$ and $\odot_{\mathsf{recolor}_R}$ can be computed in constant time for any recoloring function $R:[k] \to [k]$ and any set $S$ of pairs of $[k] \times [k]$. Also, for any graph with a unique vertex colored $i \in [k]$, its two classes (corresponding to the two possible vertex subsets, the singleton and the empty set) can be computed in constant time. Again, by constant time, we mean only depending on $k$ and on the $\MSO_1$ formula describing $\Pi$.
\end{definition}

For example, if $\Pi(G,X)$ expresses that $X$ is an independent set of $G$, then we can take $\cC$ as the set of subsets of $[k]$, plus a special symbol $\bot$, and define  $h(G,X)$ as follows: $h(G,X) = \bot$ if $X$ is not an independent set, otherwise $h(G,X)$ is the subset of $[k]$ such that $i \in h(G,X)$ if $X$ contains some vertex colored $i$. 
Again, it is a matter of exercise to construct the classes for graph with a single vertex, as well as functions $\odot_{\mathsf{recolor}_R}$ and $\odot_{\Join_S}$. Indeed, if $G$ has a unique vertex $x$ coloured $i \in [k]$ then $h(G,\emptyset) = \emptyset$ and $h(G,\{x\}) = \{i\}$. Function $\odot_{\mathsf{recolor}_R}$ simply takes into account the recoloring of vertices. Let $R : [k] \to [k]$ be a recoloring. Note that $\odot_{\mathsf{recolor}_R}(\bot) = \bot$ (this corresponds to pairs $(G,X)$ where $X$ is not an independent set of $G$). Let $c$ be a subset of $[k]$, note that  $\odot_{\mathsf{recolor}_R}(c) = c'$ with $c' = \{j \in [k] \mid \exists i \in c \text{~such that~} R(i) = j\}$. Indeed, if $c=h(G,X)$ for some pair $(G,X)$, then $c'$ corresponds to the colours of vertices of $X$ after recoloring. Eventually, we describe the binary function $\odot_{\Join_S} : \cC \times \cC \to  \cC$. Observe that $\odot_{\Join_S}(c_1,c_2) = \bot$ if $c_1= \bot$ or $c_2 = \bot$. Indeed this corresponds to the case when $c_1 = h(G_1, X_1)$ and $c_2 = h(G_2, X_2)$, and $X_1$ is not an independent set of $G_1$, or $X_2$ is not an independent set of $G_2$ ; therefore $X = X_1 \cup X_2$ cannot be an intependent set of $G_1 \Join_S G_2$. For similar reasons, if there is some pair $(i,j)\in S$ such that $i\in c_1$ and $j \in c_2$, then $\odot_{\Join_S}(c_1,c_2) = \bot$, because the $\Join_S$ operation adds an edge between $X_1$ and $X_2$, so $G$ is not an independent set of $G$. Otherwise, we have $\odot_{\Join_S}(c_1,c_2) = c_1 \cup c_2$, since the set of colours of $X$ is the union of the sets of colours of $X_1$ and $X_2$. This proves the NLC-regularity of predicate ''$X$ in an independent set of $G$''.

By~\cite{CourcelleMR00}, all $\MSO_1$ predicates are regular:

\begin{proposition}\label{pr:regOpt}
Any  property $\Pi$ over graphs and vertex subsets expressible by an $\MSO_1$ predicate is NLC-regular. Moreover, given the corresponding $\MSO_1$ formula and parameter $k$, one can explicitely compute the set of homomorphism classes $\cC$, as well as  functions $\odot_{\Join_S}$ and $\odot_{\recolor_R}$ for any $S \in [k] \times [k]$ and any $R : [k] \to [k]$, and the homomorphism class of the graph formed by a unique vertex coloured $i$, for any $i \in [k]$.
\end{proposition}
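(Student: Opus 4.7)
The plan is to reduce Proposition~\ref{pr:regOpt} to Proposition~\ref{pr:reg} by encoding the vertex subset $X$ as a vertex label. Given an $\MSO_1$ predicate $\Pi(G,X)$ over a graph and a vertex set, define an auxiliary $\MSO_1$ predicate $\tilde{\Pi}$ on vertex-labeled graphs over the alphabet $\{0,1\}$ as follows: for any labeled graph $\tilde{G}$, set $\tilde{\Pi}(\tilde{G}) = \Pi(G, X)$ where $G$ is $\tilde{G}$ with labels forgotten and $X = \{v \in V(\tilde{G}) : \ell(v)=1\}$. At the formula level, this amounts to replacing every atomic expression ``$v \in X$'' in $\Pi$ by ``$\ell(v) = 1$'', which is a standard $\MSO_1$ predicate on labeled graphs. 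Since $X$ appears as a free set variable in $\Pi$, this translation is syntactic and preserves the class $\MSO_1$.

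Next I would verify that the NLC grammar lifts naturally to the labeled setting. Each pair $(G,X)$ with $G \in \NLC_k$ corresponds bijectively to a labeled graph $\tilde{G} \in \NLC_k$ over the alphabet $\{0,1\}$: any NLC-decomposition of $G$ produces an NLC-decomposition of $\tilde{G}$ of the same width $k$, simply by enriching each leaf operation $\mathsf{newVertex}_i$ to a pair $\mathsf{newVertex}_i^\lambda$ specifying both the color $i \in [k]$ and the label $\lambda \in \{0,1\}$. Crucially, the operations $\Join_S$ and $\mathsf{recolor}_R$ act only on colors and leave labels untouched, so the correspondences $(G_1,X_1) \Join_S (G_2,X_2) \leftrightarrow \tilde{G}_1 \Join_S \tilde{G}_2$ and $\mathsf{recolor}_R(G,X) \leftrightarrow \mathsf{recolor}_R(\tilde{G})$ hold identically, as prescribed by Definition~\ref{de:regOpt} (recall that the $\mathsf{recolor}_R$ operation leaves $X$ unchanged and $\Join_S$ takes $X_1 \cup X_2$).

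Applying Proposition~\ref{pr:reg} to $\tilde{\Pi}$ on the class of labeled graphs of NLC-width at most $k$ then yields a finite set of homomorphism classes $\tilde{\cC}$, a homomorphism function $\tilde{h}$, and composition functions $\odot_{\Join_S}$, $\odot_{\mathsf{recolor}_R}$ satisfying Definition~\ref{de:reg} for $\tilde{\Pi}$. I would now define $h(G,X) := \tilde{h}(\tilde{G})$ and take the composition functions for $\Pi$ to be exactly those supplied by Proposition~\ref{pr:reg} for $\tilde{\Pi}$. The three conditions of Definition~\ref{de:regOpt} then follow directly: (i) if $h(G_1, X_1) = h(G_2, X_2)$ then $\tilde{h}(\tilde{G}_1) = \tilde{h}(\tilde{G}_2)$ so $\tilde{\Pi}(\tilde{G}_1) = \tilde{\Pi}(\tilde{G}_2)$ and hence $\Pi(G_1,X_1) = \Pi(G_2,X_2)$; (ii) and (iii) follow from the bijective correspondence of the composition operations described above. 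Finally, the explicit computability statements, namely the set of classes, the composition functions, and the classes of the two single-vertex graphs (over label $0$ and over label $1$) corresponding to $X = \emptyset$ and $X = \{x\}$, all transfer verbatim from Proposition~\ref{pr:reg}.

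The only point that requires genuine attention is verifying that the vertex-labeling encoding does not interfere with the NLC composition operations, but this is essentially trivial because labels are orthogonal to the coloring on which the grammar acts. There is no real obstacle: the proof is a direct reduction, and all combinatorial content is delegated to Proposition~\ref{pr:reg}. An analogous reduction, using the alphabet $\{0,1\}^q$, extends the statement to $\MSO_1$ predicates $\Pi(G, X_1, \dots, X_q)$ over a bounded number $q$ of vertex subsets, which is what underlies the $\mathsf{LinEMSOL}$-type extensions announced after Theorem~\ref{theo:main-OPT}.
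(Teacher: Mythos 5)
Your proposal is correct, but it takes a genuinely different route than the paper. The paper gives no internal proof of Proposition~\ref{pr:regOpt}: it simply cites Courcelle, Makowsky and Rotics~\cite{CourcelleMR00}, exactly as it does for Proposition~\ref{pr:reg}, remarking in passing that ``the notion of regularity extends to predicates over graphs and vertex subsets'' and giving the independent-set example to illustrate (but not prove) this. You instead derive Proposition~\ref{pr:regOpt} as a corollary of Proposition~\ref{pr:reg} via the standard trick of encoding the free set variable $X$ as an extra bit in the vertex labels, and then observing that the NLC operations $\Join_S$ and $\mathsf{recolor}_R$ act only on colours and never on labels, so the grammar on pairs $(G,X)$ coincides with the grammar on $\tilde G$. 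This is a clean and legitimate reduction: free monadic second-order set variables are interchangeable with unary relation symbols in the structure's signature, and the three clauses of Definition~\ref{de:regOpt} then fall out of Definition~\ref{de:reg} applied to $\tilde\Pi$. The minor point worth being explicit about is that if $G$ already carries labels in $\{0,\dots,\lambda-1\}$, the new alphabet is the product $\{0,\dots,\lambda-1\}\times\{0,1\}$, still of constant size, so Proposition~\ref{pr:reg} continues to apply. Compared with the paper's citation, your reduction is more modular and self-contained (it makes Proposition~\ref{pr:regOpt} a formal consequence of Proposition~\ref{pr:reg} rather than an independent import from~\cite{CourcelleMR00}); the paper's approach is shorter and relies directly on the general $\mathrm{LinEMSOL}$ machinery of~\cite{CourcelleMR00}, which handles several free set variables and the optimisation layer in one shot.
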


To unify maximization and minimization problems, assume that we deal with weighted graphs. In this setting, each vertex $u$ of $G$ is assigned an integer $weight(u)$ in the interval $[-MAXW,+MAXW]$ for some positive integer constant $MAXW$. The aim of the optimization is to find a maximum weight vertex subset $X$ satisfying $\Pi(G,X)$. Natural minimization problems are turned into maximization problems, using negative weights.

In the certification context, we will be given graph $G$ together with the set $X$ of \emph{selected vertices}. The selected vertices are encoded by a function $sel : V \to \{0,1\}$. The set $X_s$ of selected vertices corresponds to  $X_s = \{u \in V(G) \mid sel(u) = 1\}$ and we aim to certify that $\Pi(G,X_s)$ is true and $X_s$ is of maximum weight among all sets $X$ satisfying $\Pi(G,X)$. Each node $u$ of the graph knows its status $sel(u)$, so it knows whether it is selected or not. Let us note that we prefer to separate the vertex selection function $sel$ and the vertex labeling function $\ell$, for two reasons: they play different roles (the labelling is part of the input, while selected vertices must be proved to be an optimal output for the problem), moreover the two can be combined, in the sense that the optimization is performed on labeled graphs (e.g., in the red-blue dominating set problem, the input is a graph with vertices labelled 0 for red and 1 for blue, and the aim is to find a minimum size set X of red vertices dominating all blue vertices).

We need to enrich the certificates  for decision problems described in the Section~\ref{se:cwd}. For each class $c \in \cC$, let $MaxW(G,c)$ denote the maximum $weight(X)$ over all vertex subsets $X$ of graph $G$ such that $h(G,X) = c$. If no such $X$ exists, we set $MaxW(G,c) = -\infty$. 
Informally, in each certificate, homomorphism classes over subgraphs of $G$ are replaced by homomorphism classes over the same subgraphs, with the corresponding subset of selected vertices. Moreover, whenever we had to consider a partial solution over a subgraph $G'$ of $G$, we also store in the message, besides the homomorphism class $h(G,X')$, the weight of the selected vertices $X'$ of $G'$, as well as all values $MaxW(G',c)$, over all possible classes $c \in \cC$. The latter will allow to check that there is no better solution for the problem than our set of selected vertices.

The following proposition is a straightforward application of the definition of $MaxW(G,c)$. It was also observed, in slightly different terms, in~\cite{CourcelleMR00} and it provides the tool to update values $MaxW$, from bottom to top, if we are given a decomposition tree. 
\begin{proposition}\label{pr:MaxW}
Let $\Pi(G,X)$ be an $\MSO_1$ predicate over graphs and vertex subsets,  $k$ be a fixed parameter, and $c \in \cC$ a homomorphism class.
\begin{enumerate}
\item If $G$ is formed by a single vertex $x$ of colour $i$. Then $$MaxW(G,c) = \max \{ weight(X)  \mid X \subseteq \{x\} \text{~such that~} h(G,X)=c\}.$$
\item If $G = G_1 \Join_S G_2$, then
$$MaxW(G,c) = \max_{c_1,c_2 \in \cC\times \cC}\{MaxW(G_1,c_1) + MaxW(G_2,c_2)\},$$
over all pairs of classes ${c_1,c_2 \in \cC\times \cC} \text{~s. t.~} \odot_{\Join_S}(c_1,c_2)=c$.
\item If $G = R(G')$, then
$$MaxW(G,c) = \max\{MaxW(G',c') \mid {c'\in \cC} \text{~such that~} \odot_R(c') = c.\}$$
\end{enumerate}
In all above equations we consider that the maximum of an empty set is $- \infty$.
\end{proposition}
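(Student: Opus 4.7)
My plan is to prove all three cases by establishing matching inequalities $MaxW(G,c) \geq \text{RHS}$ and $MaxW(G,c) \leq \text{RHS}$, using only the definition of $MaxW$ together with the regularity properties of $\Pi$ from Definition~\ref{de:regOpt}.

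The single-vertex case is essentially a restatement of the definition: when $V(G)=\{x\}$, the only candidate vertex subsets are $\emptyset$ and $\{x\}$, so the maximum of $weight(X)$ over subsets $X$ with $h(G,X) = c$ is exactly the expression on the right-hand side (with the convention that the max of an empty set is $-\infty$, in case no subset of $\{x\}$ has class~$c$).

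For the join case $G = G_1 \Join_S G_2$, I would argue both directions. For $MaxW(G,c) \leq \text{RHS}$: take any $X \subseteq V(G)$ achieving $MaxW(G,c)$, i.e., with $h(G,X) = c$. Since $V(G_1)$ and $V(G_2)$ are disjoint, set $X_i = X \cap V(G_i)$ and $c_i = h(G_i, X_i)$ for $i \in \{1,2\}$. By condition~2 of Definition~\ref{de:regOpt}, $c = h(G,X) = \odot_{\Join_S}(c_1,c_2)$, and since weights are additive on disjoint subsets, $weight(X) = weight(X_1) + weight(X_2) \leq MaxW(G_1,c_1) + MaxW(G_2,c_2)$, proving the upper bound. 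Conversely, for $MaxW(G,c) \geq \text{RHS}$: for any pair $(c_1,c_2)$ with $\odot_{\Join_S}(c_1,c_2) = c$ and with both $MaxW(G_i,c_i) > -\infty$, pick witnesses $X_i \subseteq V(G_i)$ with $h(G_i,X_i) = c_i$ and $weight(X_i) = MaxW(G_i,c_i)$. Then $X = X_1 \cup X_2$ satisfies $h(G,X) = c$ by regularity and $weight(X) = MaxW(G_1,c_1) + MaxW(G_2,c_2)$, so $MaxW(G,c)$ is at least this sum. Taking the maximum over all such pairs gives the lower bound. The recoloring case is handled similarly: since $\mathsf{recolor}_R$ does not change the vertex set, a subset $X$ of $G'$ is also a subset of $G = R(G')$ with the same weight, and by condition~3 of Definition~\ref{de:regOpt}, $h(G,X) = \odot_{\mathsf{recolor}_R}(h(G',X))$, which immediately yields both inequalities.

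I do not expect any serious obstacle here: the proposition is a direct structural consequence of the compositional definition of homomorphism classes (the fact that $h$ commutes with the grammar operations $\Join_S$ and $\mathsf{recolor}_R$ via the functions $\odot$), combined with the additivity of weights on disjoint vertex sets. The only mild subtlety to address explicitly is the treatment of the $-\infty$ case when some class is not realizable on one of the subgraphs, which is handled cleanly by the convention stated in the proposition.
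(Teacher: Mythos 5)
Your proof is correct, and it is exactly the routine verification the paper has in mind: the paper gives no explicit proof, declaring the proposition ``a straightforward application of the definition of $MaxW(G,c)$'' and pointing to~\cite{CourcelleMR00}, and your double-inequality argument via the compositionality of $h$ (Definition~\ref{de:regOpt}) and additivity of weight on disjoint vertex sets is precisely that straightforward verification, including the correct handling of the $-\infty$ convention.
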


We can now formally describe the certificates. We only precise what changes with respect to messages defined in Section~\ref{se:cwd}.

\subparagraph{Main messages.} 

In the decision setting, the main message of each node $u$ contains a sequence $\textsf{h}(u) = (h_1(u), \dots, h_d(u))$. Recall that $x_i(u)$ denotes the $i$th node on the path of the $NLC_+$ decomposition tree $T$ of $G$, from the root to the leaf creating $u$, and $h_i(u)$ denotes the homomorphism class of predicate $\Pi$ at node $x_i(u)$, i.e., corresponding to the induced subgraph $G[x_i(u)]$ of $G$. In the optimization setting we simply use as $h_i(u)$ the class $h(G[x_i(u)],X_s[x_i(u)])$, where $X_s[x_i(u)]$ is the set of selected vertices of $G[x_i(u)]$.

We add to the main message of $u$ the sequence $\textsf{weight}(u) = (\textsf{weight}_1(u), \dots, \textsf{weight}_d(u))$, such that, for each \(i \in \{1, \dots, d\}\), \(\textsf{weight}_i(u)\) denotes the total weight of selected vertices at $x_i(u)$, that is, the weight of $X_s[x_i(u)]$.

We also add the sequence $\textsf{MaxW}(u) = (\textsf{MaxW}_1(u), \dots, \textsf{MaxW}_d(u))$, where  $\textsf{MaxW}_i(u)$ denotes the list  $\{MaxW(G[x_i(u)],c) \mid c \in \cC\}$ of maximal weights of possible partial solutions at node $x_i$, over all possible classes $c \in \cC$. 

\subparagraph{Auxiliary messages.} 

Recall that at each node $x$ of type $\Join$ of the NLC+ decomposition tree, we use a spanning tree $A(x)$ of the graph $G[x]$ hanging at $x$ and and auxiliary messages to check that $G[x]$ is indeed connected, and that the messages in this subgraph are coherent. Formally each node $u$ of $G$ is provided an auxiliary message $\textsf{aux}(u) = (\textsf{aux}_1(u), \dots, \textsf{aux}_{d}(u))$, where $\textsf{aux}_i(u)$ is set to $\bot$ if $x_i(u)$ is of type $\parallel$, otherwise it contains the information required for the spanning tree $A(x_i(u))$ and the coherence of messages of $G[x_i(u)]$. The  difference for optimization problems consists in enriching the part \(\textsf{childrenMain}^j_i(u)=(z^j, h(z^j), \textsf{color}(z^j), \textsf{exit}(z^j))\) for \(j \in \{0,1\}\), where $z^0$ and $z^1$ denote the children of $x_i(u)$ in the NLC+ decomposition tree. As for main messages, in the optimization setting $h(z^j)$ will denote the homomorphism class for property $\Pi$ on $G[z^j]$ and its selected vertices. Moreover we add to $\textsf{childrenMain}^j_i(u)$ the information $\textsf{weight}(z^j)$ and $\textsf{MaxW}(z^j)$, for $j \in \{0,1\}$, where $\textsf{weight}(z^j)$ is the weight of selected vertices of $G[z^j]$ and $\textsf{MaxW}(z^j) = \{MaxW(G[z^j],c) \mid c \in \cC\}$ is the list of all optimal solutions, over all homomorphism classes, in graph $G[z^j]$. 

\subparagraph{Service messages.} 

Recall that these messages are created at each  node $x$ of type $\parallel$ in the decomposition tree, and they concern the vertices of some tree $S(x)$, connecting the graphs corresponding to subtrees of children $z_i$ of $x$, $1 \leq i \leq p$ through their exit vertices. Also recall that for each node $u$ of $S(x)$, we have defined the set of indices $\textsf{inCharge}_x(u) = \{i \in \{1, \dots, p\} \mid \textsf{exit}(z_i) \in S(x,u) \}$, meaning that the service message of $u$ for $x$ concerns the partial solution  \(G[ \textsf{inCharge}_x(u) ] \), the subgraph of \(G\) induced by the disjoint union of all graphs \(G[z_i]\) with \(i \in \textsf{inCharge}_x(u)\). By symmetry, let $X_s[ \textsf{inCharge}_x(u) ]$ be the union of labelled vertices $X_s[z_i]$,  \(i \in \textsf{inCharge}_x(u)\).

Recall that, for decision problems (see Section~\ref{se:cwd} for a full description of service messages), we used value $\textsf{class}^j_i(u)$ corresponding to homomorphism of $G[\textsf{inCharge}_{x}(u)]$ (here $j=0$ if $x$ is in the left child of $x_{i-1}(u)$, and  $j=1$ if $x$ is the right child of $x_{i-1}(u)$). In the optimization version, we use instead
 the homomorphism class \(h(G[ \textsf{inCharge}_x(u) ], X_s[ \textsf{inCharge}_x(u) ])\), and we add the weight $\textsf{weightService}^j_i(u)$ of $X_s[ \textsf{inCharge}_x(u) ]$, and  the list $\textsf{MaxWService}^j_i$ of all values $\{MaxW(G[ \textsf{inCharge}_x(u) ],c) \mid c \in \cC\}$.

We emphasize again that, when we generalized the certification messages from decision problems to optimization problems, we applied the same process on every partial solution, on (induced) subgraphs $G'$ of $G$: the homomorphism class $h(G')$ was replaced with the corresponding homomorphism class $h(G',X'_s)$ over selected vertices of $G'$, and the messages were enriched with the weight $weight(X'_s)$ of $X'_s$ and with the set $\{MaxW(G',c) \mid c \in \cC\}$ of all maximal partial solutions of $G'$, over all possible homomorphism classes.

Therefore, in the verification scheme, we only need to update all verifications that were in charge of homomorphism classes $h(G')$ for some subgraph $G'$, and to enrich them to verify:
\begin{itemize}
\item $h(G',X'_s)$, using Definition~\ref{de:regOpt} instead of Definition~\ref{de:reg},  
\item  the weight $weight(X'_s)$ of $X'_s$, by simply summing over smaller partial solutions,
\item the wholes set $\{MaxW(G',c)\}$ over all classes $c$ of $\cC$. For the latter, we simply use Proposition~\ref{pr:MaxW}, in the same manner as we used Definition~\ref{de:reg}, to check that each value $MaxW(G',c)$ is correct.
\end{itemize}
This occurs in the following places -- we refer again to modifications w.r.t. Section~\ref{se:cwd}.

\subparagraph{Verification of main messages.} 

Item~\ref{verif:4}, on graph $G'$ with a single vertex $u$: we must check that $weight(X'_s) = 0$ if $u$ is not selected, otherwise $weight(X'_s) = weight(u)$. Values $MaxW(G',c)$ are computing using the first item of Proposition~\ref{pr:MaxW}.

Item~\ref{verif:5.1}, for the whole graph $G$. Besides checking that $h_1(u) = h(G,X_s)$ is an accepting class for property $\Pi$, we also check that the weight of $X_s$ (i.e., $\textsf{weight}_1(u)$) equals the maximum of $MaxW(G,c) \in \textsf{MaxW}_1(u)$ over all \emph{accepting} classes $c$. This ensures that the solution  $X_s$ is indeed optimal.

Item~\ref{verif:9}, corresponding to an edge between vertices $u$ and $v$ such that  $x_{i^*}(u)$, their lowest common ancestor in the NLC+ decomposition tree, is a binary $\Join$ node. In this case $G[x_{i^*}(u)]$ is obtained from $G[x_{i^*+1}(u)]$ and $G[x_{i^*+1}(v)]$ by a sequence of operations $\Join_S$ then $\recolor_R$. As for decision problems, we verify using Definition~\ref{de:regOpt} that class $h_{i^*}(u)$ is coherent with classes $h_{i^*+1}(u)$ and $h_{i^*+1}(v)$ after performing  $\Join_S$ then $\recolor_R$. Then we perform the similar verification for each value $MaxW(G[x_{i^*}(u)],c) \in \textsf{MaxW}_{i^*}(u)$ using Proposition~\ref{pr:MaxW} (items 2 and 3) to check that $\textsf{MaxW}_{i^*}(u)$ is indeed obtained from $\textsf{MaxW}_{i^*+1}(u)$ and $\textsf{MaxW}_{i^*+1}(v)$ by a sequence of operations $\Join_S$ then $\recolor_R$. Also, we check that $\textsf{weight}_{i^*}(u) = \textsf{weight}_{i^*+1}(u) + \textsf{weight}_{i^*+1}(v)$.

\subparagraph{Verification of auxiliary messages.} We do not need any changes, since this is only a coherence check based on the equality of parts $\textsf{childrenMain}$ of the messages.

\subparagraph{Verification of service messages.}  We need to verify messages $\textsf{class}^j_i(u)$ corresponding to homomorphism of $G[\textsf{inCharge}_{x}(u)]$, the weight $\textsf{weightService}^j_i(u)$ of $X_s[ \textsf{inCharge}_x(u) ]$, and  the list $\textsf{MaxWService}^j_i(u)$ of all values $\{MaxW(G[ \textsf{inCharge}_x(u) ],c) \mid c \in \cC\}$. 

Item~\ref{verif:21} is performed when $u$ is a leaf of the  tree \(S(x_i^j)\), in particular it is the exit vertex of $G[x_{i+1}(u)]$ and $ \textsf{inCharge}_x(u)  = G[x_{i+1}(u)]$. As in Section~\ref{se:cwd} we check that \(\textsf{class}_i^j(u) = h_{i+1}(u)\). We also check that $\textsf{weightService}^j_i(u) = \textsf{weight}_{i+1}(u)$ and $\textsf{MaxWService}^j_i = \textsf{MaxW}_{i+1}(u)$.

Recall that, for vertex $u$, \(\textsf{children}_i^j(u)\) denotes the set of nodes  \(w\in V\) such that \({\textsf{parent}_i^j(w)=u}\), and this set is known to $u$. In particular $ \textsf{inCharge}_x(u)  = \parallel_{w \in \textsf{children}_i^j(u)} \textsf{inCharge}_x(w)$. In particular $ \textsf{inCharge}_x(u)  = \cup_{w \in \textsf{children}_i^j(u)} \textsf{inCharge}_x(w)$.

Item~\ref{verif:24} This verification occurs when  \(\textsf{children}_i^j(u)\) has a unique node $w$. Besides the verification that \(\textsf{class}_i^j(u) = \textsf{class}_i^j(w)\) we also check that $\textsf{weightService}^j_i(u) = \textsf{weightService}^j_i(w)$ and $\textsf{MaxWService}^j_i(u) = \textsf{MaxWService}^j_i(w)$.

Item~\ref{verif:25} This happens when  \(\textsf{children}_i^j(u)\) has two or more nodes $w$. Recall that, in this case, $G[\textsf{inCharge}_{x}(u)] = \parallel_{w \in \textsf{children}_i^j(u)} G[\textsf{inCharge}_x(w)]$. As for decision problems, we order arbitrarily the vertices $w_1 \dots, w_p$ of \(\textsf{children}_i^j(u)\) and for each graph $H_q = G[\textsf{inCharge}_x(w_1)] \parallel \dots \parallel G[\textsf{inCharge}_x(w_q)]$, $1 \leq q \leq p$, we compute  $h(H_q,X_q)$, $weight(X_q)$ and sets $\{MaxW(H_p,c) \mid c \in \cC\}$ incrementally, by increasing values of $q$. Here $X_q$ denotes the set of selected vertices of $H_q$. Observe that $H_1 = G[ \textsf{inCharge}_{x}(w_1)]$ so node $u$ has all the required information in the message of $w_1$. For each $q, 2 \leq q \leq p$, these parameters are computed thanks to relation $H_q = H_{q-1} \parallel G[\textsf{inCharge}_x(w_q)]$, based on Definition~\ref{de:regOpt} and Proposition~\ref{pr:MaxW} adapted to operation $\parallel$, from the similar parameters on $H_{q-1}$ (previously computed at node $u$) and $G[\textsf{inCharge}_x(w_q)]$ (retrieved from the message of $w_q$. Eventually, $u$ checks that these parameters on $H_p$ correspond to $\textsf{class}^j_i(u)$, $\textsf{weightService}^j_i(u)$ and $\textsf{MaxWService}^j_i(u)$.

Item~\ref{verif:27} applies when  \(u = \textsf{exit}_{i}(u)\). In this case, $u$ checks for each \(j \in \{0,1\}\) that  \(\textsf{class}_i^j(u) = h_{i}(u)\), $\textsf{weightService}^j_i(u) = \textsf{weight}_{i}(u)$ and $\textsf{MaxWService}^j_i = \textsf{MaxW}_{i}(u)$.

The soundness and correctness proofs are, modulo the use of Definition~\ref{de:regOpt} and Proposition~\ref{pr:MaxW}, very similar the case of decision problems. Note that the new messages are still of size $O(\log^2 n)$.  \hfill\qed


\section{Conclusion}
\label{sec:conclusion}

In this paper, we have shown that, for every $\MSO_1$ property $\Pi$ on labeled graphs, there exists a PLS for $\Pi$ with $O(\log^2 n)$-bit certificates for all $n$-node graphs of bounded clique-width. This extends previous results for smaller classes of graphs, namely graphs of bounded tree-depth~\cite{FeuilloleyBP22}, and graphs of bounded tree-width~\cite{FraigniaudMRT22}. Our result also  enables to establish a separation, in term of certificate size, between certifying $C_4$-free graphs and certifying $P_4$-free graphs.  

A natural question is whether the certificate size resulting from  our generic PLS construction is optimal. Note that one log-factor is related to the storage of IDs, and of similar types of information related to other nodes in the graph. It seems hard to avoid such a log-factor. The other log-factor is however directly related to the depth of the NLC-decomposition, and our PLS actually uses certificates of size $O(d\cdot\log n)$ bits for graphs supporting an NLC-decomposition of depth~$d$. Nevertheless, the best generic upper bound for the depth~$d$ of an NLC-decomposition preserving bounded width is $O(\log n)$. This log-factor seems therefore hard to avoid too. Establishing the existence of a PLS for $\MSO_1$ properties in graphs of bounded clique-width using $o(\log^2n)$-bit certificates, or proving an $\Omega(\log^2n)$ lower bound on the certificate size for such PLSs  appears to be challenging. 

Another interesting research direction is whether our result can be extended to $\MSO_2$ properties. It is known that the meta-theorem from~\cite{CourcelleMR00} does not extend to $\MSO_2$. Nevertheless, this does not necessarily prevent the existence of compact PLSs for $\MSO_2$ properties on graphs of bounded clique-width. For instance,  Hamiltonicity is an $\MSO_2$ property that can be easily certified in \textit{all} graphs, using certificates on just $O(\log n)$ bits. Is there an $\MSO_2$ property requiring certificates of $\Omega(n^\epsilon)$ bits, for some $\epsilon>0$, on graphs of bounded clique-width? Finally, it might be interesting to study the existence of compact distributed interactive proofs~\cite{KolOS18} for certifying $\MSO_1$ or even $\MSO_2$ properties on graphs of bounded clique-width. Note that the generic compiler from~\cite{NaorPY20} efficiently applies to sparse graphs only whereas the family of graphs with bounded clique-width includes dense graphs.


\bibliography{biblio-PLS-CW-MSO}


\end{document}